\documentclass[a4paper,12pt]{article}

\title{New distance bounds for one-generator quasi-cyclic codes with applications to Hermitian LCD codes and entanglement-assisted quantum codes}
\author{
Kanat Abdukhalikov \\	
Department of Mathematical Sciences, \\
UAE University, PO Box 15551, Al Ain, UAE\\
Email: abdukhalik@uaeu.ac.ae \bigskip  \\  
Duy Ho \\
Department of Mathematical Sciences, \\
UAE University, PO Box 15551, Al Ain, UAE\\ 
Email: duyho@uaeu.ac.ae \bigskip  \\  
Rasha M. Shat \\
Department of Mathematical Sciences, \\
UAE University, PO Box 15551, Al Ain, UAE\\ 
Email: 201450047@uaeu.ac.ae \bigskip  \\  
}
\date{ }

\usepackage{amsthm,amsmath,amssymb} 

\DeclareMathOperator{\wt}{wt}
\DeclareMathOperator{\bsupp}{bsupp} 
\usepackage{url} 
\usepackage{cases} 
\usepackage{makecell}
\usepackage{enumerate}
\usepackage{tikz-cd}
\usepackage{float}
\usepackage{booktabs}
\usepackage{float}
\usepackage{longtable}
\usepackage{pdflscape}

\usepackage[table]{xcolor}
\usepackage{tikz-cd}

\usepackage{totcount}

\newtotcounter{grasslcode}
\newcommand{\nextgrasslno}{%
  \refstepcounter{grasslcode}%
  \arabic{grasslcode}%
}

\newtotcounter{appendixcode}
\newcommand{\nextcodeno}{%
  \refstepcounter{appendixcode}%
  \arabic{appendixcode}%
}

\begin{document} 

\maketitle

\theoremstyle{plain} 
\newtheorem{lemma}{Lemma}[section] 
\newtheorem{theorem}[lemma]{Theorem}
\newtheorem{corollary}[lemma]{Corollary}
\newtheorem{proposition}[lemma]{Proposition}

\theoremstyle{definition}
\newtheorem{definition}[lemma]{Definition}
\newtheorem{remark}[lemma]{Remark}
\newtheorem{example}[lemma]{Example}

\newcommand{\eps}{\varepsilon}
\newcommand{\inprod}[1]{\left\langle #1 \right\rangle}
\newcommand{\la}{\lambda} 
\newcommand{\al}{\alpha}
\newcommand{\om}{\omega} 
\newcommand{\gam}{\gamma}
\newcommand{\be}{\beta}
\newcommand{\sig}{\sigma} 
\newcommand{\concat}{\mathbin{\square}}

\begin{abstract}
We introduce lower bounds on the minimum distance of one-generator quasi-cyclic codes of arbitrary index, obtained by puncturing and shortening the code at the level of its blocks. 
For balanced codes, these bounds form a nondecreasing chain
whose first term is the Jensen bound. 
We apply these bounds to conduct a computer search and obtain many Hermitian LCD codes with good parameters.  
Among them there are six codes of length at most $30$ which improve the corresponding lower bounds of Araya and Harada. 
These codes further give maximal-entanglement entanglement-assisted quantum codes, which we compare with the parameters currently recorded in the literature.
\end{abstract}

\textbf{Keywords}:  Linear codes, quasi-cyclic codes,    Hermitian LCD codes,  entanglement-assisted quantum codes.

\textbf{Mathematics subject classification}: 94B05, 94B15, 94B60.

\section{Introduction}

Quasi-cyclic codes form an interesting class of linear codes that generalize cyclic codes. 
A linear code of length $n=m\ell$ is quasi-cyclic of index $\ell$ if it is invariant under the shift $T^{\ell}$ by $\ell$ coordinates, the case $\ell=1$ recovering cyclic codes. 
The class of quasi-cyclic codes was already studied in the late 1960s
\cite{townsend1967}, and was shown early on to be asymptotically good
\cite{chen1969}. 

Quasi-cyclic codes are appealing in practice. A quasi-cyclic code over $F=\mathbb{F}_q$ of index $\ell$ and length $m\ell$ can be described by a small number of polynomials in $F[x]/\langle x^m-1\rangle$, so that large families can be generated and searched exhaustively, while enough algebraic structure is retained to give nontrivial lower bounds on the minimum distance. 
For this reason, quasi-cyclic codes remain a standard source of record-holding linear and quantum codes; see
the online table \cite{grassl}.

There are different approaches one can choose to study quasi-cyclic codes. 
One example is the  graph-theoretic approach in the quasi-cyclic LDPC literature, where a code is analyzed through the associated Tanner graph \cite{fossorier2004}. 
In our paper however, we follow the algebraic approach initiated by Ling and Sol\'e \cite{ling2001}.
We view a quasi-cyclic code of index $\ell$  as a module over
$R=F[x]/\langle x^m-1\rangle$. 
When $\gcd(q,m)=1$, a quasi-cyclic code can be decomposed by the Chinese Remainder Theorem (CRT) into a direct sum of \emph{constituent} codes of length $\ell$ over extensions of $F$. 
Complementary to this is the concatenated description of Jensen \cite{jensen1985}, in which the constituents
appear as outer codes and minimal cyclic codes as inner codes. For a survey, we refer to \cite{guneri2021}.

Within the algebraic approach, the distance bounds available for quasi-cyclic codes mostly come from the CRT decomposition or the concatenated description of the code. 
The Jensen bound \cite{jensen1985} combines the
distances of the constituents with those of nested sums of minimal cyclic codes. 
The spectral method of Semenov and Trifonov \cite{semenov2012} uses the
eigenvalues of the generator polynomial matrix, and was substantially sharpened by Luo, Ezerman, Ling and \"Ozkaya \cite{luo2024}.
Further bounds along these lines were obtained by G\"uneri and \"Ozbudak \cite{guneri2012,guneri2013}, and by \"Ozbudak and \"Ozkaya \cite{ozbudak2024}. A systematic comparison
of the known bounds is given in \cite{ezerman2021}.

An alternative way to derive distance bounds for quasi-cyclic codes is to  inspect the generators of the code, rather than  its constituents. 
Perhaps the earliest bound of this kind is due to Lally \cite{lally2003}, which is a product of the distances of two auxiliary codes obtained from the generating polynomials.
Closer to our purpose is a recent argument of Guan, Li, Liu and Ma in
\cite[Theorem~1]{guan2023b}. 
In their investigation of the symplectic distance of
one-generator quasi-cyclic codes of even index, the authors of \cite{guan2023b} consider a condition on the generator polynomials and obtain a bound  in terms of the minimum distance of a single cyclic code. We call a code satisfying their condition \emph{balanced}.

Generalizing their argument, in this paper we develop distance bounds valid for every one-generator quasi-cyclic code. 
Our bounds come from puncturing and shortening
quasi-cyclic codes at the generator block level. 
Specializing back to balanced codes, we obtain a nondecreasing chain of block-support bounds. 
We further show that the first term of this chain is the Jensen bound, so our bounds are generally better than the Jensen bound for balanced one-generator quasi-cyclic codes.

We then demonstrate the applicability of these bounds in the construction of Hermitian linear complementary dual (LCD) codes. 
LCD codes were introduced by Massey \cite{massey1992} and have since attracted attention through their use in
countermeasures against side-channel and fault injection attacks
\cite{carlet2016}. 
In \cite{carlet2018}, it was shown that every linear code over $\mathbb F_{q}$ with $q>4$ is
equivalent to a Hermitian LCD code.
This motivates the ongoing work to search for quaternary Hermitian LCD codes with good parameters in 
\cite{araya2020,araya2022,araya2024,ishizuka2020}. 
In Section~4, we give a polynomial description of the Hermitian
hull of a one-generator quasi-cyclic code and characterize the Hermitian LCD property.
These results strengthen the corresponding results of  \cite{guneri2023, guan2023}. 
Based on the Hermitian LCD criterion and the obtained bounds, we conduct a computer search and obtain many Hermitian LCD codes with good parameters. 
We highlight six quaternary Hermitian LCD codes of length at most $30$ which improve the corresponding lower bounds in the Araya-Harada table \cite[Table~6]{araya2024}.

Finally, Hermitian LCD codes are of interest in quantum coding theory. An $[n,k,d]$ Hermitian LCD code over $\mathbb F_{q^2}$ yields a maximal-entanglement entanglement-assisted quantum error-correcting code (EAQECC) with parameters $[[n,k,d;n-k]]_q$, see \cite[Corollary 3.4]{guenda2018} and \cite{wildebrun2008}.
We record the EAQECCs arising in this way from our Hermitian LCD codes and compare them with the entries of Grassl's table \cite{grassl} and with the recent constructions of Li, Ezerman, Li, Ling and Sun \cite{LELLS26}.

The paper is organized as follows. 
Section~2 provides the necessary background
on quasi-cyclic codes, their Chinese Remainder Theorem decomposition, the concatenated description and the Jensen bound. 
Section~3 introduces the block-punctured and block-shortened codes, establishes the block-support bound, and analyzes the balanced case. Section~4 develops the Hermitian hull and the Hermitian LCD criterion for one-generator quasi-cyclic codes and presents the codes obtained from our search. 
Section~5 is devoted to the resulting
maximal-entanglement entanglement-assisted quantum codes. 
Generator polynomials of all codes obtained in our search are listed in the appendix.

All computations in this paper have been carried out with the computer algebra system \textsc{Magma} \cite{magma}.

\section{Preliminaries} 

\subsection{Background on linear and quasi-cyclic codes}

Let $F=\mathbb{F}_q$ be the finite field with $q$ elements, where $q$ is
a prime power. 
A \textit{linear code} $C$ of length $n$ over $F$ is a
subspace of the vector space $F^n$, and its elements are called
\textit{codewords}. 
The \textit{weight} $\wt(\mathbf{c})$ of
$\mathbf{c}\in F^n$ is the number of nonzero coordinates of $\mathbf{c}$.
If $C\neq\{\mathbf{0}\}$, the \textit{minimum distance} of $C$ is
\[
d(C):=\min\{\wt(\mathbf{c}) : \mathbf{0}\neq\mathbf{c}\in C\}.
\]
For the zero code we will use the convention $d(\{\mathbf{0}\})=+\infty$.

 Let $T$ be the standard cyclic shift operator on $F^n$. 
 A linear code is said to be \textit{quasi-cyclic of index $\ell$ (QC)} if it is invariant under $T^\ell$. 
 We can assume that $\ell$ divides $n$. If $\ell= 1$, then the QC code is a cyclic code.

Let $R = F[x]/\langle x^m-1\rangle$. We recall that cyclic codes of length $m$ over $F$ can be considered as ideals of $R$.

Let $n = m\ell$ and let $C$ be a linear quasi-cyclic code of length $m \ell$ and index $\ell$ over $F$. Let
 \[
\mathbf{c} =(c_{0,1},  c_{0,2}, \dots,  c_{0,\ell},\, c_{1,1},  c_{1,2}, \dots,  c_{1,\ell}, \dots, c_{m-1,1},  c_{m-1,2}, \dots,  c_{m-1,\ell})
 \]
denote a codeword in $C$. Define a map $\varphi: F^{m\ell} \rightarrow R^\ell$ by
\[
\varphi(\mathbf{c}) = (c_1(x), c_2(x), \dots, c_{\ell}(x)) \in R^\ell,
\]
where
\[
c_j(x) = c_{0,j}+c_{1,j}x+c_{2,j}x^2+ \dots + c_{m-1,j}x^{m-1} \in R.
\]
The following lemma is well-known.
 \begin{lemma}[\cite{lally2001,ling2001}] The map  $\varphi$ induces a one-to-one correspondence between quasi-cyclic
 codes over $F$ of index $\ell$ and length $m \ell$ and linear codes over $R$ of length $\ell$.
\end{lemma}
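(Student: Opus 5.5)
The plan is to show that $\varphi$ is an $F$-linear bijection $F^{m\ell}\to R^\ell$, and then to check that under this bijection the shift $T^\ell$ corresponds to multiplication by $x$ on every coordinate of $R^\ell$. Once both facts are in hand, the correspondence on subsets follows formally.

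The first step is bijectivity and linearity. The codeword $\mathbf{c}$ is indexed by pairs $(i,j)$ with $0\le i\le m-1$ and $1\le j\le \ell$. The $j$-th component $c_j(x)$ records the coefficients $c_{0,j},\dots,c_{m-1,j}$, and $R$ has $F$-basis $1,x,\dots,x^{m-1}$. So $\varphi$ is simply a rearrangement of coordinates composed with a coordinate isomorphism $F^m\cong R$. It is therefore an $F$-linear isomorphism.

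The second step is to compute $\varphi(T^\ell\mathbf{c})$. The shift by $\ell$ positions sends block $i$ (the entries $c_{i,1},\dots,c_{i,\ell}$) to block $i+1$, and sends block $m-1$ to block $0$. Concretely, the entry of $T^\ell\mathbf{c}$ at position $(i,j)$ is $c_{i-1,j}$, with indices read modulo $m$. Its $j$-th polynomial is therefore $\sum_i c_{i-1,j}x^i$. Since $x^m=1$ in $R$, this equals $x\,c_j(x)$. Hence $\varphi(T^\ell\mathbf{c})=x\,\varphi(\mathbf{c})$. This index bookkeeping, with the wraparound at $x^m=1$, is the only real point of care. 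The direction of the shift must match the convention for $T$, but either convention works, since multiplication by $x^{-1}=x^{m-1}$ also preserves $R$-submodules.

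For the correspondence itself, let $C\subseteq F^{m\ell}$ be a quasi-cyclic code of index $\ell$. Then $\varphi(C)$ is an $F$-subspace closed under multiplication by $x$, and hence closed under multiplication by every polynomial in $x$. So $\varphi(C)$ is an $R$-submodule of $R^\ell$, that is, a linear code over $R$ of length $\ell$. Conversely, let $M\subseteq R^\ell$ be an $R$-submodule. Then $\varphi^{-1}(M)$ is an $F$-subspace, because $F\subseteq R$. It is also $T^\ell$-invariant by the identity from the second step. Since $\varphi$ is a bijection, these two assignments are mutually inverse.
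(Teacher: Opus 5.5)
Your proof is correct and follows the standard route. The paper gives no proof of its own and only cites Lally and Ling--Sol\'e, whose argument is the same as yours: $\varphi$ is an $F$-linear isomorphism with $\varphi(T^\ell\mathbf c)=x\,\varphi(\mathbf c)$, and the correspondence between $T^\ell$-invariant subspaces and $R$-submodules follows formally. Your remark about the opposite shift convention also holds in both directions, since an $F$-subspace closed under multiplication by $x^{-1}=x^{m-1}$ is automatically closed under $x=(x^{-1})^{m-1}$.
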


We identify $C$ with $\varphi(C)\subseteq R^\ell$. For
$a(x)\in R$, we write $\wt(a)$ for the number of nonzero coefficients of
its unique representative of degree less than $m$. Under the
identification above, we have
\begin{equation}\label{blockweight}
\wt(\mathbf{c})=\sum_{j=1}^{\ell}\wt\big(c_j(x)\big),
\end{equation}
so weights and minimum distances of quasi-cyclic codes can be computed in $R^\ell$.

For $\mathbf g_1,\dots,\mathbf g_r\in R^\ell$, we denote by
$\langle \mathbf g_1,\dots,\mathbf g_r\rangle$ the $R$-submodule of
$R^\ell$ generated by $\mathbf g_1,\dots,\mathbf g_r$, which is also called \emph{$r$-generator} quasi-cyclic code. 
One-generator quasi-cyclic codes were studied in \cite{abd2023,seguin2004}. 
The structure of quasi-cyclic codes of index $2$ and $3$ was investigated in \cite{abd2026,abd2025}.
Linear complementary dual and complementary pairs of quasi-cyclic codes of index $2$ were investigated in
\cite{abd2026-cc,abd2026-amc}.

\subsection{The Chinese Remainder Theorem decomposition of quasi-cyclic codes} 

Throughout the paper we will assume that $\gcd(q,m)=1$. Let
\[
  x^m-1=f_1(x)f_2(x)\cdots f_\nu(x)
\]
be the factorization of $x^m-1$ into irreducible polynomials in $F[x]$.
Since $\gcd(q,m)=1$, the factors $f_1,\ldots,f_\nu$ are distinct. 
For $1\le i\le \nu$, let
\[
  E_i:=F[x]/\langle f_i(x)\rangle.
\]
Let $\xi$ be a primitive $m$-th root of unity in an extension field of $F$. 
For each $i$, let $u_i$ be the smallest nonnegative integer such that
$f_i(\xi^{u_i})=0$. Then $E_i\cong F(\xi^{u_i})$. By the Chinese
Remainder Theorem, we have the ring isomorphism
\begin{equation}\label{crtR}
  R=F[x]/\langle x^m-1\rangle
  \cong E_1\oplus\cdots\oplus E_\nu,
  \qquad
  a(x)\longmapsto \big(a(\xi^{u_1}),\ldots,a(\xi^{u_\nu})\big).
\end{equation}
This isomorphism extends naturally to $R^\ell$, which implies that
\begin{equation}\label{crtRl}
  R^\ell\cong E_1^\ell\oplus\cdots\oplus E_\nu^\ell .
\end{equation}
Hence, a quasi-cyclic code $C\subseteq R^\ell$ decomposes as
\begin{equation}\label{crtC}
  C\cong C_1\oplus\cdots\oplus C_\nu,
\end{equation}
where $C_i$ is a linear code of length $\ell$ over $E_i$, for each
$i$. These length $\ell$ linear codes over the various extension fields
of $F$ are called the \emph{constituents} of $C$. 
Let $C \subseteq R^\ell$ be $r$-generated as an $R$-module by
\[
\left\{
\bigl(a_{1,1}(x),\ldots,a_{1,\ell}(x)\bigr),\ldots,
\bigl(a_{r,1}(x),\ldots,a_{r,\ell}(x)\bigr)
\right\}
\subseteq R^\ell .
\]
Then, for \(1 \le i \le \nu\), we have
\begin{equation}\label{spanconstituent}
  C_i=\operatorname{span}_{E_i}
  \left\{
  \bigl(a_{b,1}(\xi^{u_i}),\ldots,a_{b,\ell}(\xi^{u_i})\bigr):
  1\le b\le r
  \right\}
  \subseteq E_i^\ell .
\end{equation}

\subsection{Concatenated decomposition and the Jensen bound}

We keep the notation of the previous subsection. 
Each extension field
$E_i$ is isomorphic to a minimal cyclic code of length $m$ over $F$,
namely the cyclic code whose check polynomial is $f_i(x)$. 
Let
$\theta_i$ be the generating primitive idempotent of this minimal
cyclic code. Then the isomorphism is given by
\[
  \varphi_i:\langle\theta_i\rangle\longrightarrow E_i,
  \qquad
  a(x)\longmapsto a(\xi^{u_i}),
\]
with inverse map
\[
  \psi_i:E_i\longrightarrow \langle\theta_i\rangle,
  \qquad
  \delta\longmapsto \sum_{\tau=0}^{m-1} a_\tau x^\tau,
\]
where
\[
  a_\tau=
  \frac{1}{m}\operatorname{Tr}_{E_i/F}
  \bigl(\delta\xi^{-\tau u_i}\bigr).
\] 
If $C_i \subseteq E_i^\ell$ is a linear code of length $\ell$ over
$E_i$, we denote by $  \langle\theta_i\rangle\concat C_i $
the concatenation of $C_i$ with $\langle\theta_i\rangle$. 
This concatenation is carried out by the map $\psi_i$ extended to
$E_i^\ell$.
In other words, $\psi_i$ is applied to each symbol of the
codeword in $C_i$.

Jensen gave the following concatenated description of quasi-cyclic codes.

\begin{theorem} \label{concatdescription}
\begin{enumerate}
\item[(a)]
Let $C$ be an $R$-submodule of $R^\ell$. Then, for some subset
$I\subseteq\{1,\ldots,\nu\}$, there exist linear codes
$C_i\subseteq E_i^\ell$ such that
\[
  C = \bigoplus_{i\in I} \langle\theta_i\rangle\concat C_i .
\]

\item[(b)]
Conversely, let $C_i\subseteq E_i^\ell$ be a linear code for each
$i\in I\subseteq\{1,\ldots,\nu\}$. Then
\[
  C = \bigoplus_{i\in I} \langle\theta_i\rangle\concat C_i
\]
is an $F$-linear quasi-cyclic code of length $m\ell$ and index
$\ell$.
\end{enumerate}
\end{theorem}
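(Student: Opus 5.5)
The natural route is through the idempotent decomposition of $R$, which is the ring-theoretic form of the Chinese Remainder isomorphism \eqref{crtR}. Since $\gcd(q,m)=1$, the primitive idempotents $\theta_1,\dots,\theta_\nu$ of $R$ satisfy
\[
\theta_i\theta_j=\delta_{ij}\theta_i,\qquad \theta_1+\cdots+\theta_\nu=1.
\]
The minimal ideal $\langle\theta_i\rangle=\theta_iR$ is the component of $R$ corresponding to $E_i$. The first step is to check that $\varphi_i\colon\theta_iR\to E_i$, $a\mapsto a(\xi^{u_i})$, is a field isomorphism with inverse $\psi_i$. This reduces to two facts. First, $\theta_i(\xi^{u_j})=\delta_{ij}$. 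Second, the trace formula for $a_\tau$ is the inverse discrete Fourier transform restricted to the cyclotomic coset of $u_i$: apply the inversion formula $a_\tau=\frac1m\sum_{s}a(\xi^s)\xi^{-\tau s}$ and group the $s$ in the coset of $u_i$, which yields $\operatorname{Tr}_{E_i/F}$. Both maps are compatible with multiplication by $x$, since $x\mapsto \xi^{u_i}$.

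For (a), let $C\subseteq R^\ell$ be an $R$-submodule. Because $\theta_i\in R$, each $\theta_iC\subseteq C$, and $\mathbf c=\sum_i\theta_i\mathbf c$. Hence
\[
C=\bigoplus_i\theta_iC,
\]
and the sum is direct by orthogonality of the idempotents. The components of $\theta_iC$ lie in $(\theta_iR)^\ell$. Applying $\varphi_i$ coordinatewise therefore gives an injective map onto
\[
C_i:=\{(c_1(\xi^{u_i}),\dots,c_\ell(\xi^{u_i})):\mathbf c\in C\}.
\]
This agrees with the constituent in \eqref{spanconstituent}, because $\theta_i(\xi^{u_i})=1$. $C_i$ is an $E_i$-subspace: it is closed under multiplication by $\xi^{u_i}$ (the image of $x$) and by $F$, and $E_i=F(\xi^{u_i})$. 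By construction $\theta_iC=\psi_i(C_i)=\langle\theta_i\rangle\concat C_i$. Finally, take $I=\{i: C_i\neq 0\}$, dropping the zero summands.

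For (b), given $E_i$-linear codes $C_i$, the set $\psi_i(C_i)\subseteq(\theta_iR)^\ell$ is closed under addition. It is closed under multiplication by any $r\in R$, because $r\psi_i(\delta)=\psi_i(r(\xi^{u_i})\delta)$ and $C_i$ is $E_i$-linear. So each summand is an $R$-submodule of $R^\ell$, and a direct sum of submodules is again a submodule. By the lemma relating $R$-submodules of $R^\ell$ to quasi-cyclic codes via $\varphi$, the sum is an $F$-linear quasi-cyclic code of index $\ell$. Directness follows from the orthogonality of the $\theta_i$.

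The only genuinely computational point is verifying that $\psi_i$, given by the trace formula, really inverts $\varphi_i$ and lands in $\theta_iR$. I would handle this with the discrete Fourier transform inversion argument above, using that $m$ is invertible in $F$; everything else is formal module theory.
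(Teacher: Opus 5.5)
The paper gives no proof of this theorem. It quotes it from Jensen \cite{jensen1985}, and it attributes separately to \cite[Theorem~4.1]{guneri2013} the fact that the outer codes coincide with the CRT constituents of \eqref{crtC}.

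Your argument is a correct, self-contained proof along the standard idempotent route. The ingredients are exactly what is needed:
\begin{itemize}
\item the splitting $C=\bigoplus_i\theta_iC$;
\item the identification $\theta_iC=\psi_i(C_i)$ with $C_i=\{(c_1(\xi^{u_i}),\dots,c_\ell(\xi^{u_i})):\mathbf c\in C\}$;
\item for (b), the identity $r\,\psi_i(\delta)=\psi_i\bigl(r(\xi^{u_i})\delta\bigr)$.
\end{itemize}

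Compared with the citation, your route also yields the G\"uneri--\"Ozbudak identification essentially for free. If $C=\bigoplus_{i\in I}\langle\theta_i\rangle\concat C_i'$ is any such decomposition, multiplying by $\theta_k$ gives $\theta_kC=\langle\theta_k\rangle\concat C_k'$. So the outer codes are forced to be $\varphi_k(\theta_kC)$, which are precisely the constituents.

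Two small points to tighten:
\begin{itemize}
\item The inverse-DFT computation shows only $\psi_i\circ\varphi_i=\mathrm{id}$ on $\theta_iR$. To conclude that $\psi_i$ is a two-sided inverse, and in particular that it lands in $\theta_iR$, you still need $\varphi_i$ to be onto. This follows, e.g., from $\dim_F\theta_iR=\deg f_i=[E_i:F]$; the paper simply takes this isomorphism as part of its setup.
\item The agreement of your $C_i$ with \eqref{spanconstituent} rests on the surjectivity of $r\mapsto r(\xi^{u_i})$ from $R$ onto $E_i$, not merely on $\theta_i(\xi^{u_i})=1$.
\end{itemize}
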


It was proved in \cite[Theorem~4.1]{guneri2013} that, for a given
quasi-cyclic code $C$, the constituents $C_i$ in \eqref{crtC} and the
outer codes $C_i$ in the concatenated structure of
Theorem~\ref{concatdescription} are the same.

The concatenated structure in Theorem~\ref{concatdescription} yields a minimum distance bound for quasi-cyclic codes \cite[Theorem 4]{jensen1985}. 

\begin{theorem}[Jensen bound]\label{jensen}
Let $C$ be an $F$-linear quasi-cyclic code of length $m\ell$ and
index $\ell$. Let
\[
  I(C):=\{i:C_i\ne\{\mathbf 0\}\} =\{i_1,\ldots,i_t\},
\]
where the indices are ordered so that
\[
  0<d(C_{i_1})\le d(C_{i_2})\le\cdots\le d(C_{i_t}).
\]
Then the concatenated decomposition of $C$ can be written as
\[
  C = \bigoplus_{z=1}^{t} \langle\theta_{i_z}\rangle\concat C_{i_z},
\]
and
\[
  d(C) \ge d_J(C) :=
  \min_{1\le z\le t} 
  \left\{  d(C_{i_z})\, 
    d\bigl(
      \langle\theta_{i_1}\rangle
      \oplus\cdots\oplus
      \langle\theta_{i_z}\rangle
    \bigr)
  \right\}.
\]
\end{theorem}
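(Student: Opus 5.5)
Take a nonzero codeword $\mathbf c\in C$ and write it, via Theorem~\ref{concatdescription}, as
\[
\mathbf c=\sum_{z=1}^{t}\psi_{i_z}(\mathbf c_{z}),\qquad \mathbf c_z\in C_{i_z},
\]
where $\psi_{i_z}$ is applied symbolwise. The plan is to lower-bound $\wt(\mathbf c)$ block by block using \eqref{blockweight}, and to choose the right index $z$ for the comparison.

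First, let $s$ be the largest index with $\mathbf c_s\neq\mathbf 0$; it exists because $\mathbf c\neq \mathbf 0$ and $\psi_{i_z}$ is injective. Then $\mathbf c_z=\mathbf 0$ for $z>s$. Since $\mathbf c_s$ is a nonzero codeword of $C_{i_s}$, it has at least $d(C_{i_s})$ nonzero coordinates. Let $j$ be one of these coordinates, so that $(\mathbf c_s)_j\neq 0$.

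Next, look at the $j$-th block $c_j(x)=\sum_{z\le s}\psi_{i_z}\bigl((\mathbf c_z)_j\bigr)$. This polynomial lies in the cyclic code $\langle\theta_{i_1}\rangle\oplus\cdots\oplus\langle\theta_{i_s}\rangle$. Its component in the direct summand $\langle\theta_{i_s}\rangle$ is $\psi_{i_s}\bigl((\mathbf c_s)_j\bigr)\neq 0$. The sum of minimal ideals is direct (this is the CRT decomposition \eqref{crtR}), so $c_j(x)\neq 0$. Hence
\[
\wt(c_j)\ge d\bigl(\langle\theta_{i_1}\rangle\oplus\cdots\oplus\langle\theta_{i_s}\rangle\bigr).
\]

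Finally, sum this bound over at least $d(C_{i_s})$ such blocks $j$. This gives
\[
\wt(\mathbf c)\ge d(C_{i_s})\,d\bigl(\langle\theta_{i_1}\rangle\oplus\cdots\oplus\langle\theta_{i_s}\rangle\bigr)\ge d_J(C).
\]

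The only delicate point is the choice of $s$ as the largest nonzero index rather than the smallest. This is what places the nested sum up to $i_s$ together with the single outer distance $d(C_{i_s})$. The ordering $d(C_{i_1})\le\cdots\le d(C_{i_t})$ is not actually needed for the inequality; it only fixes which nested sums appear. We also need that the concatenated and CRT constituents coincide, which is \cite[Theorem~4.1]{guneri2013}, so that the $C_{i_z}$ in the decomposition are the codes whose distances enter $d_J(C)$.
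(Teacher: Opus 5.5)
Your proof is correct, and it is essentially Jensen's original argument for \cite[Theorem~4]{jensen1985}. The paper does not reprove the bound and only cites it: you take the largest index $s$ with nonzero outer component, and then each of the at least $d(C_{i_s})$ nonzero outer coordinates gives a nonzero block in the nested sum up to $i_s$. Your side remarks are also accurate: the ordering of the $d(C_{i_z})$ plays no role in the validity of the inequality, and the identification of the outer codes with the CRT constituents comes from \cite[Theorem~4.1]{guneri2013}.
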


\section{Block-support distance bounds for one-generator quasi-cyclic codes}

We   consider a one-generator
quasi-cyclic code of index $\ell$,
\[
C=\langle\mathbf g\rangle\subseteq R^\ell,
\qquad
\mathbf g:=(g_1(x),\dots,g_\ell(x)).
\]
Throughout, we identify an element of $R$ with its unique representative in $F[x]$ of degree less than $m$.

For $1\le j\le \ell$, let 
\[
\gamma_j(x):=\gcd\!\big(g_j(x),x^m-1\big),\qquad
\eta_j(x):=\frac{x^m-1}{\gamma_j(x)} .
\]
Let 
\[
g:=\gcd(g_1,\dots,g_\ell,x^m-1).
\]
Then $\dim C=m-\deg g$, see \cite{seguin2004}. 

\begin{lemma}\label{normalization}
There exists a unit $u\in R$ with $ug_1\equiv \gamma_1 \pmod{x^m-1}$. In particular, given $C=\langle\mathbf g\rangle$, we may assume without
loss of generality that $g_1 \mid x^m-1$, and \[g=\gcd(g_1,\dots,g_\ell).\]
\end{lemma}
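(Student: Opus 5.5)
The plan is to build the unit $u$ directly from the coprime factorization of $x^m-1$ induced by $\gamma_1$. Since $\gcd(q,m)=1$, the polynomial $x^m-1$ is squarefree. Hence $\gamma_1$ and $\eta_1=(x^m-1)/\gamma_1$ are coprime, and by the Chinese Remainder Theorem
\[
R\cong F[x]/\langle\gamma_1\rangle\oplus F[x]/\langle\eta_1\rangle .
\]
Write $g_1=\gamma_1 h$. Then $\gcd(h,\eta_1)=1$: any irreducible common factor $f$ of $h$ and $\eta_1$ would make $f\gamma_1$ divide both $g_1$ and $x^m-1$, because $f\nmid\gamma_1$ by squarefreeness. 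That contradicts the definition of $\gamma_1$. So $h$ is invertible modulo $\eta_1$; choose $v$ with $vh\equiv 1\pmod{\eta_1}$.

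By CRT, pick $u\in F[x]$ with $u\equiv v\pmod{\eta_1}$ and $u\equiv 1\pmod{\gamma_1}$. Then $u$ is a unit in $R$: modulo $\eta_1$ it is invertible by construction, and modulo $\gamma_1$ it equals $1$, so $\gcd(u,x^m-1)=1$. Now check the two components of $ug_1$:
\begin{itemize}
\item modulo $\gamma_1$, both $ug_1$ and $\gamma_1$ are $0$;
\item modulo $\eta_1$, we have $ug_1=\gamma_1(uh)\equiv\gamma_1 vh\equiv\gamma_1$.
\end{itemize}
So $ug_1\equiv\gamma_1\pmod{x^m-1}$. Equivalently, in the CRT decomposition of $R$ into the fields $E_i$, the element $\gamma_1$ has the same zero set as $g_1$, and the unit is $g_1(\xi^{u_i})^{-1}\gamma_1(\xi^{u_i})$ on the nonzero components and $1$ elsewhere. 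I would mention this as a one-line alternative.

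For the ``in particular'' statement, note that $\mathbf g$ and $u\mathbf g$ generate the same $R$-module, since $u$ is a unit. So $C=\langle u\mathbf g\rangle$, and we may replace $g_j$ by the representative of $ug_j$ of degree less than $m$; the first entry is then $\gamma_1$, which divides $x^m-1$.

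It remains to see that $g$ is unchanged. Multiplying by a unit coprime to $x^m-1$ does not change $\gcd(g_j,x^m-1)$ for any $j$, and reducing modulo $x^m-1$ does not either. Hence $\gcd(g_1,\dots,g_\ell,x^m-1)$ is the same for the new generators. Once $g_1\mid x^m-1$, the term $x^m-1$ in this gcd is redundant, so $g=\gcd(g_1,\dots,g_\ell)$.

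The only step needing care is the coprimality $\gcd(h,\eta_1)=1$. This is exactly where squarefreeness of $x^m-1$, i.e.\ $\gcd(q,m)=1$, is used: without it, $g_1$ could share repeated factors with $x^m-1$ and the argument would fail.
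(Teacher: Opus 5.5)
Your proof is correct, and it departs from the paper's exactly where the paper's argument has a flaw. The paper writes $g_1=\gamma_1 g_1'$, asserts $\gcd(g_1',x^m-1)=1$, and takes $u$ to be the inverse of $g_1'$ in $R$. That coprimality fails whenever $g_1$ is divisible by the square of a factor of $x^m-1$: for $q=4$, $m=5$, $g_1=(x+1)^2$ one gets $\gamma_1=g_1'=x+1$, which is not a unit of $R$. What is true in general is only $\gcd(g_1',\eta_1)=1$, which is the fact the paper itself uses in Lemma~\ref{vanish}. Your construction uses precisely that. You invert the cofactor modulo $\eta_1$, then use $\gcd(\gamma_1,\eta_1)=1$ to choose the lift with $u\equiv 1\pmod{\gamma_1}$, and this is what makes $u$ a unit. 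Note that $ug_1\equiv\gamma_1$ already follows from $\eta_1\mid uh-1$, since then $\gamma_1\eta_1\mid\gamma_1(uh-1)$; the condition modulo $\gamma_1$ is needed only for invertibility. Your componentwise description in $E_1\oplus\cdots\oplus E_\nu$ gives the same unit more conceptually. So the paper's one-line argument is valid only when $\gcd(g_1',\gamma_1)=1$, while yours proves the lemma as stated.

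Two small points in your commentary. First, squarefreeness is not what gives $\gcd(h,\eta_1)=1$. That holds for any gcd, since $f\mid h$ and $f\mid\eta_1$ already force $f\gamma_1$ to divide both $g_1$ and $x^m-1$, whether or not $f\mid\gamma_1$. Squarefreeness is used instead for $\gcd(\gamma_1,\eta_1)=1$, i.e.\ for the CRT lift, which is exactly the step the paper's argument skips. Second, if $g_1\equiv 0$ then $\gamma_1=x^m-1$, whose representative of degree less than $m$ is $0$. In that case ``the first entry is then $\gamma_1$'' fails, and so does $g=\gcd(g_1,\dots,g_\ell)$, so this degenerate case must be set aside; the paper has the same blind spot.
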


\begin{proof}
Write $g_1=\gamma_1 g_1'$. 
Then $\gcd(g_1',x^m-1)=1$ and there exists $u$ such that $ug_1'\equiv 1 \pmod{x^m-1}$. Therefore, $ug_1=\gamma_1$ and 
\[
\langle(g_1,\ldots,g_\ell)\rangle
=
\langle(ug_1,\ldots,ug_\ell)\rangle
=
\langle(\gamma_1,ug_2,\ldots,ug_\ell)\rangle.
\]
After relabeling the generators, we may
assume that $g_1\mid x^m-1$ without loss of generality.
\end{proof}
 
For a codeword $c=(bg_1,\dots,bg_\ell)$, let
\[
\bsupp(c):=\{j:bg_j\neq 0\}\subseteq\{1,\dots,\ell\}
\]
be its \emph{block support}. Then each nonzero codeword of $C$ has either full or proper block support and belongs to the following two subsets of $C$: 
\[
C^{\mathrm{full}}:=\{c\in C : \bsupp(c)=\{1,\dots,\ell\}\},
\]
\[
C^{\mathrm{prop}}:=\{c\in C\setminus\{\mathbf 0\} :
\bsupp(c)\subsetneq\{1,\dots,\ell\}\}.
\]
We then have
\begin{equation} \label{trivialbound}
d(C)=\min\Big\{
\min_{c\in C^{\mathrm{full}}}\wt(c),\ \
\min_{c\in C^{\mathrm{prop}}}\wt(c)\Big\}.
\end{equation}
\subsection{Block-support bound from block-shortened and block-punctured codes}
We now introduce two operations on $C$ at the level of quasi-cyclic blocks.

For a subset $U\subseteq\{1,\dots,\ell\}$, the \emph{(block-)punctured code}
\[
C_U:=\big\langle(g_j)_{j\in U}\big\rangle\ \subseteq R^{|U|}
\]
is obtained by deleting the blocks outside $U$. The code $C_U$ is a one-generator code of
index $|U|$ and length $|U|m$.

For a block $j$, the \emph{(block-)shortened code}
\[
\widetilde C_j:= \big\langle(\eta_j g_1,\dots,\eta_j g_{j-1},
\eta_j g_{j+1},\dots,\eta_j g_\ell)\big\rangle\ \subseteq R^{\ell-1}
\]
is obtained by restricting to the codewords of $C$ whose $j$-th block is zero
and deleting that block. The code $\widetilde C_j$ is a one-generator code of index $\ell-1$ and length $(\ell-1)m$. We
adopt the convention $d(\{\mathbf 0\})=+\infty$ when $\widetilde C_j$ is zero.

\begin{lemma}\label{vanish}
Let $(bg_1,\dots,bg_\ell)$ be a codeword of $C$, for some $b\in R$. 
For each  $j$, the block $ bg_j\equiv 0 \pmod{x^m-1}$ if and only if
$\eta_j\mid b$.
\end{lemma}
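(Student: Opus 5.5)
The plan is to work in $F[x]$ and reduce the statement to an elementary divisibility fact. Throughout, we use the factorization $x^m-1=\gamma_j\eta_j$, where $\gamma_j=\gcd(g_j,x^m-1)$.

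First I will write $g_j=\gamma_j g_j'$. Since $\gamma_j$ is the full greatest common divisor, dividing it out gives $\gcd(g_j',\eta_j)=1$. The condition $bg_j\equiv 0 \pmod{x^m-1}$ means $\gamma_j\eta_j \mid b\gamma_j g_j'$ in $F[x]$. Cancelling the nonzero polynomial $\gamma_j$, this is equivalent to $\eta_j\mid b g_j'$.

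Next I use coprimality. Since $\gcd(g_j',\eta_j)=1$, the divisibility $\eta_j\mid bg_j'$ is equivalent to $\eta_j\mid b$ (Euclid's lemma in the PID $F[x]$). For the converse direction one can argue directly: if $b=\eta_j b'$, then $bg_j=b'\eta_j\gamma_j g_j'=b'(x^m-1)g_j'\equiv 0$.

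The only subtlety is that $b\in R$ is a residue class, so "$\eta_j\mid b$" must be well defined. Because $\eta_j\mid x^m-1$, divisibility by $\eta_j$ does not depend on the chosen representative of $b$. I would remark on this at the start, and work with any representative of $b$ in $F[x]$.

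The degenerate case $g_j=0$ needs a separate check. Then $\gamma_j=x^m-1$ and $\eta_j=1$, so both sides of the equivalence are always true. The cancellation argument above still works, since $g_j'$ can be taken to be $0$ with $\eta_j=1$; alternatively, this case can simply be noted directly. I do not expect any step to be a real obstacle; the main point is this careful handling of representatives and of the degenerate case.
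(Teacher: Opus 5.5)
Your proof is correct and follows the paper's argument exactly: write $g_j=\gamma_j g_j'$ with $\gcd(g_j',\eta_j)=1$, cancel $\gamma_j$ to reduce $x^m-1\mid bg_j$ to $\eta_j\mid bg_j'$, and conclude by coprimality. Your extra remarks are sound and make explicit details the paper leaves implicit: divisibility by $\eta_j$ is well defined for $b\in R$ because $\eta_j\mid x^m-1$, and the degenerate case $g_j=0$ gives $\eta_j=1$.
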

\begin{proof}
Write $g_j=\gamma_j g_j'$ and $x^m-1=\gamma_j\eta_j$, so that
$\gcd(g_j',\eta_j)=1$. Then \[bg_j\equiv 0\pmod{x^m-1}\] means
$\eta_j\mid bg_j'$, which is equivalent to $\eta_j\mid b$.
\end{proof}

 We now bound the first term  in Equation \eqref{trivialbound} with punctured codes and the second term with shortened codes in the  following two lemmas.

 \begin{lemma}\label{puncturedbound} Let $c$ be a codeword in $C^{\mathrm{full}}$. Fix $1 \le s \le \ell-1$. Then
 \[
\wt(c) \ge \Big\lceil \binom{\ell-1}{s-1}^{-1}\!\!\sum_{|U|=s} d(C_U)\Big\rceil.
 \]
\end{lemma}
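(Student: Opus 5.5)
The plan is a double-counting (averaging) argument. Write $c=(bg_1,\dots,bg_\ell)$ with $b\in R$; since $c\in C^{\mathrm{full}}$, every block $bg_j$ is nonzero. For each subset $U\subseteq\{1,\dots,\ell\}$ with $|U|=s$, consider the restriction
\[
c_U:=(bg_j)_{j\in U}=b\cdot(g_j)_{j\in U}\in C_U .
\]
It lies in the one-generator code $C_U$, and it is nonzero because every block of $c$ is nonzero and $U\neq\emptyset$ (as $s\ge 1$). Hence $\wt(c_U)\ge d(C_U)$ for every such $U$.

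Next, I sum over all $U$ of size $s$. By \eqref{blockweight}, $\wt(c_U)=\sum_{j\in U}\wt(bg_j)$. A fixed block $j$ belongs to exactly $\binom{\ell-1}{s-1}$ subsets $U$ with $|U|=s$. Therefore
\[
\binom{\ell-1}{s-1}\wt(c)=\binom{\ell-1}{s-1}\sum_{j=1}^{\ell}\wt(bg_j)=\sum_{|U|=s}\wt(c_U)\ \ge\ \sum_{|U|=s} d(C_U).
\]
Dividing by $\binom{\ell-1}{s-1}$ and using that $\wt(c)$ is an integer gives the ceiling in the statement.

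The only delicate point is the nonvanishing of $c_U$. This is exactly where the hypothesis $c\in C^{\mathrm{full}}$ is used: for a codeword with proper block support, some $c_U$ could be zero, and the term $d(C_U)$ would then give no information. Under full block support there is no obstacle, and finiteness of each $d(C_U)$ is automatic since $c_U$ is a nonzero codeword of $C_U$. The restriction $s\le \ell-1$ is not needed for the argument; it only excludes the trivial case $U=\{1,\dots,\ell\}$.
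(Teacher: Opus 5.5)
Your proposal is correct and follows essentially the same double-counting argument as the paper. You restrict $c$ to each $U$ of size $s$, use full block support to get $\sum_{j\in U}\wt(bg_j)\ge d(C_U)$, and sum using the fact that each block lies in exactly $\binom{\ell-1}{s-1}$ such subsets, with your explicit check that each restriction $c_U$ is nonzero being precisely the point the paper attributes to full support.
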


\begin{proof} Write $c=(bg_1,\dots,bg_\ell)$. 
Let $w_j:= \wt(bg_j)$, so that $\sum_j w_j = \wt(c)$. 
Since  the codeword $c$ has full support, we have $\sum_{j\in U}w_j\ge d(C_U)$ for every subset $U$ of size $s$.
Summing over the $\binom{\ell}{s}$ subsets of size $s$, each block $j$ is
counted in $\binom{\ell-1}{s-1}$ of them, so
\[
\sum_{|U|=s} d(C_U)\ \le\ \sum_{|U|=s}\sum_{j\in U}w_j
=\binom{\ell-1}{s-1}\sum_{j=1}^{\ell}w_j
=\binom{\ell-1}{s-1}\wt(c).
\]
Dividing by $\binom{\ell-1}{s-1}$ and rounding up gives the bound.
\end{proof}

 \begin{lemma}\label{shortenedbound} Let $c$ be a codeword in $C^{\mathrm{prop}}$. Then
\[
\wt(c) \ge \min_{1\le j\le \ell} d\big(\widetilde C_j\big).
\]
\end{lemma}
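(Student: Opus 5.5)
Since $c\in C^{\mathrm{prop}}$, it is nonzero and its block support is a proper subset of $\{1,\dots,\ell\}$. So I would fix an index $j\notin\bsupp(c)$, show that $c$ with its $j$-th block deleted is a nonzero codeword of $\widetilde C_j$ of the same weight, and then take the minimum over $j$.

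First write $c=(bg_1,\dots,bg_\ell)$ with $b\in R$. Because $j\notin\bsupp(c)$, we have $bg_j\equiv 0\pmod{x^m-1}$. By Lemma~\ref{vanish} this gives $\eta_j\mid b$, so we can write $b=a\eta_j$ for some $a\in R$. Then for every $k\neq j$ we get $bg_k=a\,(\eta_j g_k)$. Hence the vector obtained from $c$ by deleting the $j$-th block is
\[
a\cdot(\eta_j g_1,\dots,\eta_j g_{j-1},\eta_j g_{j+1},\dots,\eta_j g_\ell),
\]
which is an element of $\widetilde C_j$.

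Next, I would check that this shortened vector is nonzero and has the same weight as $c$. The only block removed is the $j$-th one, and it is zero, so by \eqref{blockweight} the weight is unchanged and equals $\wt(c)$. The vector is nonzero because $c\neq\mathbf 0$ and all of its nonzero blocks survive the deletion. In particular $\widetilde C_j\neq\{\mathbf 0\}$, so the convention $d(\{\mathbf 0\})=+\infty$ plays no role for this $j$.

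It follows that $\wt(c)\ge d(\widetilde C_j)\ge\min_{1\le k\le\ell} d(\widetilde C_k)$, which is the claimed bound. There is no real obstacle here. The only step needing care is to use Lemma~\ref{vanish} to obtain the factorization $b=a\eta_j$, which is what makes the shortened vector an $R$-multiple of the generator of $\widetilde C_j$, rather than just asserting that it lies in $\widetilde C_j$.
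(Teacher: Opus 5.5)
Your proof is correct and follows the paper's argument: pick a zero block $j$, use Lemma~\ref{vanish} to write $b=a\eta_j$, and observe that deleting that block gives a codeword of $\widetilde C_j$ of the same weight. You also check explicitly that this shortened vector is nonzero, which the paper leaves implicit.
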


\begin{proof} Write $c=(bg_1,\dots,bg_\ell)$.  Since $c$ does not have full support, it has some block $j$ which is zero. 
By Lemma~\ref{vanish}, $\eta_j\mid b$ and so $b=a\eta_j$ for some $a \in R$. Then $c$ has the form
\[
c=(a\eta_j g_1,\dots,a\eta_j g_{j-1},0,a\eta_j g_{j+1},\dots,a\eta_j g_\ell)
\]
and
$\wt(c )\ge d(\widetilde C_j)$. 
\end{proof}

For convenience, we will define the bounds from Lemmas~\ref{puncturedbound}
and~\ref{shortenedbound} as follows. For each
$1\le s\le\ell-1$, the \emph{$s$-block-puncturing bound} is
\[
\Phi_s(C):=\Big\lceil\tbinom{\ell-1}{s-1}^{-1}\!\!\sum_{|U|=s}d(C_U)\Big\rceil,
\]
and the \emph{block-puncturing bound} is
$\Phi(C):=\max_{1\le s\le\ell-1}\Phi_s(C)$. The
\emph{block-shortening bound} is
\[
\Psi(C):=\min_{1\le j\le\ell} d(\widetilde C_j).
\]
We then have the following block-support bound for one-generator quasi-cyclic codes.  

\begin{theorem}\label{bsuppbound} Let $C$ be a  one-generator quasi-cyclic code of index $\ell$. Then
\begin{align*}
d(C)\ \ge\ d_{\mathrm{bsupp}}(C)&:= \min \{ \Phi(C),\Psi(C) \}\\
&=\min \Big\{ \max_{1 \le s \le \ell-1} \Big\{\Big\lceil \binom{\ell-1}{s-1}^{-1}\!\!\sum_{|U|=s}
d(C_U)\Big\rceil \Big\}
, 
\min_{1\le j\le \ell} d\big(\widetilde C_j\big) \Big\}.
\end{align*}
\end{theorem}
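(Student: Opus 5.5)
The plan is to combine Equation~\eqref{trivialbound} with Lemmas~\ref{puncturedbound} and~\ref{shortenedbound}. No new estimate is needed. We only have to check that the two lemmas cover every nonzero codeword and that taking the maximum over $s$ is legitimate.

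First, if $C=\{\mathbf 0\}$, the statement holds trivially under the convention $d(\{\mathbf 0\})=+\infty$. So assume $C\neq\{\mathbf 0\}$, and let $c\in C$ be a nonzero codeword of minimum weight. Write $c=(bg_1,\dots,bg_\ell)$ for some $b\in R$. By definition of the block support, either $\bsupp(c)=\{1,\dots,\ell\}$, so $c\in C^{\mathrm{full}}$, or $\bsupp(c)$ is a proper subset, so $c\in C^{\mathrm{prop}}$. This is exactly the dichotomy behind \eqref{trivialbound}, so it suffices to bound $\wt(c)$ from below by $\min\{\Phi(C),\Psi(C)\}$ in each case.

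In the full-support case, Lemma~\ref{puncturedbound} gives $\wt(c)\ge\Phi_s(C)$ for every $1\le s\le \ell-1$. Since this holds simultaneously for all such $s$ and $c$ does not depend on $s$, we get $\wt(c)\ge\max_s\Phi_s(C)=\Phi(C)\ge\min\{\Phi(C),\Psi(C)\}$.

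In the proper-support case, Lemma~\ref{shortenedbound} gives $\wt(c)\ge\Psi(C)\ge\min\{\Phi(C),\Psi(C)\}$. Here the block $j$ at which $c$ vanishes supplies, via Lemma~\ref{vanish}, a nonzero codeword of $\widetilde C_j$. Hence that particular $\widetilde C_j$ is nonzero, and the convention $d(\{\mathbf 0\})=+\infty$ for the other shortened codes causes no harm in the minimum. Combining the two cases gives $d(C)=\wt(c)\ge d_{\mathrm{bsupp}}(C)$.

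There is no real obstacle. The only point needing care is the degenerate case $\ell=1$, where the range $1\le s\le\ell-1$ is empty. There I would read $\Phi(C)$ as an empty maximum and interpret the theorem as vacuous or handled by convention, since every nonzero codeword then has full block support. I would note that the theorem is intended for $\ell\ge2$.
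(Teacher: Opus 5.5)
Your proposal is correct and matches the paper's proof, which likewise combines Equation~\eqref{trivialbound} with Lemma~\ref{puncturedbound} (for full-support codewords, taking the maximum over $s$) and Lemma~\ref{shortenedbound} (for proper-support codewords). Your remarks on the zero code and the degenerate case $\ell=1$ are harmless details that the paper leaves implicit.
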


\begin{proof} This follows from Lemmas \ref{puncturedbound}, \ref{shortenedbound}, and  Equation \eqref{trivialbound}.
\end{proof}

\begin{example}\label{ex1}
Let $\mathbb{F}_4=\{0,1,\omega,\omega^2\}$ and $m=5$, so $R=\mathbb{F}_4[x]/(x^5-1)$.
Consider the one-generator quasi-cyclic code
$C=\langle(g_1,g_2,g_3,g_4)\rangle\subseteq R^4$ of index $\ell=4$ with
\[
g_1=1,\qquad
g_2=\omega x^4+x^3+x^2,\qquad
g_3=x^4+x^3+x^2+x,\qquad
g_4=\omega^2x^4+\omega x^3+\omega^2x^2+x.
\]
Since $g_1=1$ we have $g=\gcd(g_1,\dots,g_4,x^5-1)=1$. This code $C$ has parameters $[20,5,12]_4$.
To demonstrate the block-support bound in Theorem \ref{bsuppbound}, we show the calculations for the block-puncturing and block-shortening bounds using MAGMA. 

\bigskip
\noindent\emph{Block-puncturing bound.} We first compute the distance of each block-punctured code in the following Table \ref{tab:ex1}. 
\begin{table}[htbp]
\centering
\small
\begin{tabular}{@{}c@{\qquad}c@{\qquad}c@{}}

\begin{tabular}[t]{c|c}
\hline
\(U\) & \(d(C_U)\) \\
\hline
\(\{1\}\) & \(1\) \\
\(\{2\}\) & \(1\) \\
\(\{3\}\) & \(2\) \\
\(\{4\}\) & \(1\) \\
\hline
\end{tabular}
&
\begin{tabular}[t]{c|c}
\hline
\(U\) & \(d(C_U)\) \\
\hline
\(\{1,2\}\) & \(4\) \\
\(\{1,3\}\) & \(4\) \\
\(\{1,4\}\) & \(5\) \\
\(\{2,3\}\) & \(4\) \\
\(\{2,4\}\) & \(4\) \\
\(\{3,4\}\) & \(4\) \\
\hline
\end{tabular}
&
\begin{tabular}[t]{c|c}
\hline
\(U\) & \(d(C_U)\) \\
\hline
\(\{1,2,3\}\) & \(7\) \\
\(\{1,2,4\}\) & \(8\) \\
\(\{1,3,4\}\) & \(8\) \\
\(\{2,3,4\}\) & \(7\) \\
\hline
\end{tabular}

\end{tabular}
\caption{Distances of the block-punctured codes \(C_U\) for the
\([20,5,12]_4\) one-generator code of index 4.}
\label{tab:ex1}
\end{table}

This gives us the three $s$-block-puncturing bounds
\[
\Phi_1(C)=\lceil 5/1\rceil=5,\qquad
\Phi_2(C)=\lceil 25/3\rceil=9,\qquad
\Phi_3(C)=\lceil 30/3\rceil=10,
\]
and so the block-puncturing bound is $\Phi(C)= \max_{1\le s\le\ell-1}\Phi_s(C) = 10$. 

\bigskip
\noindent\emph{Block-shortening bound.} 
For $j\in\{1,2,4\}$ we have $\gamma_j=\gcd(g_j,x^5-1)=1$, so
$\eta_j\equiv 0\pmod{x^5-1}$, $\widetilde{C}_j=\{\mathbf 0\}$, and
$d(\widetilde{C}_j)=+\infty$. 
For $j=3$, we have $\gamma_3=\gcd(g_3,x^5+1)=x+1,$
and so
\[
\eta_3=\frac{x^5+1}{x+1} =x^4+x^3+x^2+x+1=:h.
\]
Then 
\[
\widetilde C_3
=
\left\langle
(hg_1,hg_2,hg_4)
\right\rangle
\subseteq R^3,
\]
and  $d(\widetilde{C}_3)=15$. Hence,
$\Psi(C)=\min_{1\le j\le\ell}d(\widetilde{C}_j)=15$.

\bigskip
\noindent\emph{Block-support bound.} Combining the two terms, the block-support bound is
\[
d_{\mathrm{bsupp}}(C)=\min\{\Phi(C),\Psi(C)\}=\min\{10,15\}=10,
\] 
and the code $C$ has actual distance $d(C) =12 >d_{\mathrm{bsupp}}(C)$. 
\end{example}

\subsection{Bounds for  balanced one-generator quasi-cyclic codes}  

We keep the notations same as in the beginning of the section. We now specialize the block-support bound to a class of one-generator
quasi-cyclic codes.

\begin{definition}
We say that $C$ is \emph{balanced} if $\gamma_1=\dots=\gamma_\ell=g$.
\end{definition} 

\begin{remark}\label{guan}
As mentioned in the introduction, our motivation to consider balanced codes comes from the proof of 
\cite[Theorem~1]{guan2023b}. 
There, a one-generator quasi-cyclic code is presented as  $C=\langle(g\mathfrak{f}_1,\dots,g\mathfrak{f}_\ell)\rangle$
with $g\mid x^m-1$, and the argument requires the condition
\begin{equation}\tag{$*$}\label{star}
\gcd\!\Big(\mathfrak{f}_j,\frac{x^m-1}{g}\Big)=1
\quad\text{for every } j.
\end{equation} 
In our situation,  we assume $\gcd(q,m)=1$ so that $x^m-1$ is square free. Then
\[
\gamma_j=\gcd (g\mathfrak{f}_j,\,x^m-1) =g\gcd\Big(\mathfrak{f}_j,\frac{x^m-1}{g}\Big),
\]
which means Condition \eqref{star} holds if and only if $C$ is balanced.
\end{remark}

\begin{theorem}\label{balanced}
The following are equivalent.
\begin{enumerate}
\item[\textup{(i)}] $C$ is balanced.
\item[\textup{(ii)}] $C^{\mathrm{prop}}=\emptyset$.
\item[\textup{(iii)}] $\widetilde{C}_j=\{\mathbf{0}\}$ for every $j\in\{1,\ldots,\ell\}$.
\item[\textup{(iv)}] $\Psi(C)=+\infty$.
\end{enumerate}
\end{theorem}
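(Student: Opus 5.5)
We will establish the cycle (i)$\Rightarrow$(iii)$\Rightarrow$(ii)$\Rightarrow$(i), and treat (iii)$\Leftrightarrow$(iv) separately since it is essentially definitional. The basic tool is Lemma~\ref{vanish}: the $j$-th block $bg_j$ vanishes if and only if $\eta_j\mid b$. We also use that $x^m-1$ is squarefree, so all $\gamma_j$ and $\eta_j$ are products of distinct irreducible factors $f_i$. Throughout we adopt the normalization of Lemma~\ref{normalization}, so $g=\gcd(\gamma_1,\dots,\gamma_\ell)$ divides every $\gamma_j$. Write $\eta:=(x^m-1)/g$; then $\eta_j\mid\eta$ for every $j$, and $\eta=\operatorname{lcm}(\eta_1,\dots,\eta_\ell)$.

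For (iii)$\Leftrightarrow$(iv): by definition $\Psi(C)=\min_j d(\widetilde C_j)$, with the convention $d(\{\mathbf 0\})=+\infty$. A nonzero code has finite minimum distance, so $\Psi(C)=+\infty$ exactly when every $\widetilde C_j$ is zero.

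For (i)$\Rightarrow$(iii): if $C$ is balanced, then $\eta_j=\eta$ for all $j$. The generator of $\widetilde C_j$ has entries $\eta g_k$ for $k\ne j$. Since $g\mid g_k$, we get $\eta g_k\equiv 0\pmod{x^m-1}$, so $\widetilde C_j=\{\mathbf 0\}$.

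For (iii)$\Rightarrow$(ii): take any $c=(bg_1,\dots,bg_\ell)\in C^{\mathrm{prop}}$, so some block $j$ is zero. By the proof of Lemma~\ref{shortenedbound}, $c$ has zero $j$-th block and its other blocks form a codeword $a\cdot(\eta_j g_k)_{k\ne j}$ of $\widetilde C_j=\{\mathbf 0\}$. Hence $c=\mathbf 0$, contradicting $c\in C^{\mathrm{prop}}$. So $C^{\mathrm{prop}}=\emptyset$.

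For (ii)$\Rightarrow$(i), the main step, we argue by contrapositive. Suppose $C$ is not balanced, so some $\gamma_j\ne g$, equivalently $\eta_j\ne\eta$. We produce a nonzero codeword with a zero block by taking $b=\eta_j$. Its $j$-th block vanishes by Lemma~\ref{vanish}. It remains to show some other block is nonzero, i.e.\ that $\eta_k\nmid\eta_j$ for some $k$. If instead $\eta_k\mid\eta_j$ for all $k$, then $\eta=\operatorname{lcm}_k\eta_k$ would divide $\eta_j$, forcing $\eta_j=\eta$, a contradiction. Hence $c=(\eta_j g_1,\dots,\eta_j g_\ell)$ is nonzero with $j\in\{1,\dots,\ell\}\setminus\bsupp(c)$, so $c\in C^{\mathrm{prop}}\ne\emptyset$.

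The only delicate point is the identity $\operatorname{lcm}(\eta_k)=\eta$. It follows from squarefreeness: $\eta_k$ is the product of the $f_i$ not dividing $\gamma_k$, so the lcm is the product of the $f_i$ not dividing all $\gamma_k$, namely the $f_i$ not dividing $g$. This uses $g=\gcd(\gamma_1,\dots,\gamma_\ell)$, which holds because $g=\gcd(g_1,\dots,g_\ell,x^m-1)$.
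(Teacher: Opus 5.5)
Your proof is correct and follows essentially the same route as the paper. The shared ingredients are Lemma~\ref{vanish}, the test codeword $\eta_j\mathbf g$ combined with $\operatorname{lcm}(\eta_1,\dots,\eta_\ell)=(x^m-1)/g$ for the key implication (you argue it by contrapositive, the paper directly), and the correspondence between codewords with a zero $j$-th block and $\widetilde C_j$. The differences are cosmetic: you close the cycle through (i)$\Rightarrow$(iii) by observing that the generator of $\widetilde C_j$ vanishes, and you make explicit the identity $g=\gcd(\gamma_1,\dots,\gamma_\ell)$, which the paper uses tacitly when concluding balancedness from $\gamma_1=\dots=\gamma_\ell$.
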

 
\begin{proof} 
\noindent\textup{(i)}$\Rightarrow$\textup{(ii)}.
Since $C$ is balanced, $\gamma_1=\dots=\gamma_\ell=g$, and
\[
\eta_1=\dots=\eta_\ell=  \dfrac{x^m-1}{g}=:\eta. 
\] 
Let $c=(bg_1,\dots,bg_\ell)$ be a codeword. By Lemma~\ref{vanish}, we have $bg_j\equiv 0$ if and only if $\eta\mid b$. Then either $c$ is a zero codeword or $c$ has full support. 
This shows that $C^{\mathrm{prop}}=\emptyset$.
 
\smallskip
\noindent\textup{(ii)}$\Rightarrow$\textup{(i)}.
Fix $j$ and consider $c=\eta_j(g_1,\ldots,g_\ell)\in C$. 
Writing $g_j=\gamma_j g_j'$, we have 
\[
\eta_j g_j  = \eta_j \gamma_j g_j' \equiv 0 \pmod{x^m-1},
\]
and since we are assuming $C^{\mathrm{prop}}=\emptyset$, it follows that $c$ is the
zero codeword, that is, $\eta_j g_1\equiv\dots\equiv\eta_j g_\ell\equiv 0 \pmod{x^m-1}$.
By Lemma \ref{vanish}, $\eta_j$ is divisible by each of $\eta_1,\dots,\eta_\ell$, and hence by
$\eta:=\text{lcm}(\eta_1,\dots,\eta_\ell)$.
As $\eta_j \mid \eta$, we get $\eta_j=\eta$.
Since $j$ was arbitrary, $\eta_1=\dots=\eta_\ell,$
 which implies $\gam_1=\dots=\gam_\ell,$ and so $C$ is balanced.
 
\smallskip
\noindent\textup{(ii)}$\Leftrightarrow$\textup{(iii)}.
There is a nonzero codeword $c$ of $C$ with
$j\notin\mathrm{bsupp}(c)$ if and only if $\widetilde{C}_j\neq\{\mathbf{0}\}$, and
$C^{\mathrm{prop}}\neq\emptyset$ if and only if this occurs for some $j$.
 
\smallskip
\noindent\textup{(iii)}$\Leftrightarrow$\textup{(iv)}.
This follows from the definition $\Psi(C)=\min_{1\le j\le\ell} d(\widetilde{C}_j)$ and the convention
$d(\{\mathbf{0}\})=+\infty$.
\end{proof}

\begin{theorem}\label{monotone}
Let $C$ be a balanced one-generator quasi-cyclic code of index $\ell$. Then
\[
\ell d(\langle g \rangle) = \Phi_1(C)\ \le\ \Phi_2(C)\ \le\ \cdots\ \le\ \Phi_{\ell-1}(C),
\]
  and consequently $\Phi(C)=\Phi_{\ell-1}(C)$.
\end{theorem}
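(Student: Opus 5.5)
We prove two things: first compute $\Phi_1(C)$, then establish monotonicity $\Phi_s(C)\le \Phi_{s+1}(C)$ for $1\le s\le \ell-2$.

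\emph{Computing $\Phi_1$.} We show that $C_{\{j\}}=\langle g_j\rangle=\langle \gamma_j\rangle=\langle g\rangle$ as ideals of $R$ for each $j$. Since $\gamma_j=\gcd(g_j,x^m-1)$, we can write $g_j=\gamma_j g_j'$ with $\gcd(g_j',x^m-1)=1$, so $g_j'$ is a unit in $R$ and $\langle g_j\rangle=\langle\gamma_j\rangle$. Balancedness gives $\gamma_j=g$. Hence $d(C_{\{j\}})=d(\langle g\rangle)$ for every $j$. With $\binom{\ell-1}{0}=1$, we get
\[
\Phi_1(C)=\sum_{j=1}^{\ell} d(\langle g\rangle)=\ell\, d(\langle g\rangle),
\]
and no ceiling is needed.

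\emph{Monotonicity.} The key observation is that every punctured code $C_U$ is again balanced, since its block gcds are still all equal to $g$, and its generator gcd $\gcd(g_j:j\in U,\,x^m-1)$ is also $g$. By Theorem~\ref{balanced}, every nonzero codeword of $C_U$ therefore has full block support in $U$. Now fix $V$ with $|V|=s+1$, and let $c\in C_V$ be a minimum weight codeword. For each $j\in V$, deleting block $j$ gives a codeword of $C_{V\setminus\{j\}}$. This codeword is nonzero because $c$ has full support and $s\ge1$. Hence
\[
\wt(c)-w_j\ge d(C_{V\setminus\{j\}}),
\]
where $w_j$ is the weight of the $j$-th block of $c$. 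Summing over the $s+1$ choices of $j$ gives
\[
s\,d(C_V)\ \ge\ \sum_{j\in V} d(C_{V\setminus\{j\}}).
\]
Summing this over all $V$ of size $s+1$, each $U$ of size $s$ appears exactly $\ell-s$ times, since $U$ extends to $V$ by any of the $\ell-s$ remaining blocks. This yields
\[
s\sum_{|V|=s+1} d(C_V)\ \ge\ (\ell-s)\sum_{|U|=s} d(C_U).
\]

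\emph{Combining.} We use the identity
\[
\binom{\ell-1}{s}=\binom{\ell-1}{s-1}\frac{\ell-s}{s}.
\]
Dividing the previous inequality by $s\binom{\ell-1}{s-1}\frac{\ell-s}{s}$ gives
\[
\binom{\ell-1}{s}^{-1}\sum_{|V|=s+1} d(C_V)\ \ge\ \binom{\ell-1}{s-1}^{-1}\sum_{|U|=s} d(C_U).
\]
Taking ceilings preserves this inequality, so $\Phi_s(C)\le\Phi_{s+1}(C)$. Consequently $\Phi(C)=\max_s\Phi_s(C)=\Phi_{\ell-1}(C)$.

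The main obstacle is the nonvanishing step: deleting a block from a minimum weight word of $C_V$ must leave a nonzero codeword of $C_{V\setminus\{j\}}$. This is exactly where balancedness enters, through Theorem~\ref{balanced} applied to $C_V$. So the proof must verify carefully that $C_V$ inherits balancedness, and in particular that its gcd $g_V$ equals $g$. Everything after that is routine double counting.
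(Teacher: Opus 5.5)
Your proof is correct and follows the paper's argument essentially verbatim: the same double count over $(s+1)$-subsets giving $(\ell-s)\sum_{|U|=s}d(C_U)\le s\sum_{|V|=s+1}d(C_V)$, the same binomial identity, and $C_{\{j\}}=\langle g\rangle$ for $\Phi_1$. Your explicit check that deleting a block from a nonzero word of $C_V$ leaves a nonzero word, via balancedness of $C_V$, makes precise a point the paper leaves implicit.

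One small slip: $g_j'=g_j/\gamma_j$ need not be coprime to $x^m-1$ (e.g.\ $g_2=(x+1)^2$ in Example~\ref{ex2}), so it need not be a unit. The equality $\langle g_j\rangle=\langle\gamma_j\rangle$ still holds, because $\gamma_j\mid g_j$ and, by B\'ezout, $\gamma_j=a g_j+b(x^m-1)\in\langle g_j\rangle$.
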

\begin{proof}
Let 
\[
S_s=\sum_{|U|=s}d(C_U), \qquad A_s=\binom{\ell-1}{s-1}^{-1}S_s, 
\]
so that
$\Phi_s(C)=\lceil A_s\rceil$. We will prove that $A_s\le A_{s+1}$ for $1\le s\le\ell-2$.

Fix a subset $W\subseteq\{1,\dots,\ell\}$ with $|W|=s+1$, and let $c=(bg_j)_{j\in W}$ be a nonzero minimum-weight codeword of $C_W$, so that $d(C_W)=\sum_{j\in W}\wt(bg_j)$. 
For each  subset $U\subset W$ of size $s$,  the tuple $(bg_j)_{j\in U}$ is a codeword of $C_U$, which gives $d(C_U)\le\sum_{j\in U}\wt(bg_j)$. 
Summing over the $s+1$ such subsets $U$, with each block $j$ of $C_W$ being counted $s$ times, we get
\[
\sum_{\substack{U\subset W\\ |U|=s}}d(C_U)\ \le\ s\sum_{j\in W}\wt(bg_j)
\ =\ sd(C_W).
\]
Summing this over all $W$ of size $s+1$, where each  subset $U$ is counted
$\ell-s$ times,  we get $(\ell-s)S_s\le sS_{s+1}$. Since
$\binom{\ell-1}{s}=\frac{\ell-s}{s}\binom{\ell-1}{s-1}$, this gives
$A_s\le A_{s+1}$ and consequently $\Phi_s(C) \le \Phi_{s+1}(C)$.

 Finally, when $s=1$, each $C_U$ is $\langle g \rangle$ and so 
$\Phi_1(C)= \ell d(\langle g \rangle)$. 
\end{proof}

\begin{example}\label{ex2}
Let $\mathbb{F}_4=\{0,1,\omega,\omega^2\}$ and $m=5$, so
$R=\mathbb{F}_4[x]/(x^5-1)$.
Consider the one-generator quasi-cyclic code
$C=\langle(g_1,g_2,g_3,g_4)\rangle\subseteq R^4$ of index $\ell=4$ with
\[
g_1=x+1,\qquad
g_2=(x+1)^2,\qquad
g_3=(x+1)(x+\omega)^2,\qquad
g_4=(x+1)^2(x^2+\omega^2x+\omega^2).
\]
For every $j\in\{1,2,3,4\}$, we have
$\gamma_j=\gcd(g_j,x^5-1)=x+1$. Hence,
\[
g=\gcd(g_1,g_2,g_3,g_4,x^5-1)=x+1,
\]
so $C$ is balanced. This code $C$ has parameters $[20,4,13]_4$.
To demonstrate the bound for balanced codes, we show the calculations
for the block-puncturing and block-shortening bounds using  \textsc{Magma}.

\bigskip
\noindent\emph{Block-puncturing bound.}
We first compute the distance of each block-punctured code in the
following Table~\ref{tab:ex2}.
\begin{table}[htbp]
\centering
\small
\begin{tabular}{@{}c@{\qquad}c@{\qquad}c@{}}

\begin{tabular}[t]{c|c}
\hline
\(U\) & \(d(C_U)\) \\
\hline
\(\{1\}\) & \(2\) \\
\(\{2\}\) & \(2\) \\
\(\{3\}\) & \(2\) \\
\(\{4\}\) & \(2\) \\
\hline
\end{tabular}
&
\begin{tabular}[t]{c|c}
\hline
\(U\) & \(d(C_U)\) \\
\hline
\(\{1,2\}\) & \(4\) \\
\(\{1,3\}\) & \(5\) \\
\(\{1,4\}\) & \(6\) \\
\(\{2,3\}\) & \(6\) \\
\(\{2,4\}\) & \(5\) \\
\(\{3,4\}\) & \(4\) \\
\hline
\end{tabular}
&
\begin{tabular}[t]{c|c}
\hline
\(U\) & \(d(C_U)\) \\
\hline
\(\{1,2,3\}\) & \(8\) \\
\(\{1,2,4\}\) & \(9\) \\
\(\{1,3,4\}\) & \(9\) \\
\(\{2,3,4\}\) & \(8\) \\
\hline
\end{tabular}

\end{tabular}
\caption{Distances of the block-punctured codes $C_U$ for the
$[20,4,13]_4$ balanced one-generator code of index 4.}
\label{tab:ex2}
\end{table}

This gives us the three $s$-block-puncturing bounds
\[
\Phi_1(C)=\lceil 8/1\rceil=8,\qquad
\Phi_2(C)=\lceil 30/3\rceil=10,\qquad
\Phi_3(C)=\lceil 34/3\rceil=12.
\]
Thus
\[
\Phi_1(C)=8<\Phi_2(C)=10<\Phi_3(C)=12,
\]
so all the inequalities in Theorem~\ref{monotone} are strict, and the
block-puncturing bound is $\Phi(C)=\Phi_3(C)=12$.

\bigskip
\noindent\emph{Block-shortening bound.}
Since $C$ is balanced, Theorem~\ref{balanced} gives
$\widetilde C_j=\{\mathbf 0\}$ and $d(\widetilde C_j)=+\infty$ for every
$j\in\{1,2,3,4\}$. Hence,
\[
\Psi(C)=\min_{1\le j\le\ell}d(\widetilde C_j)=+\infty.
\]

\bigskip
\noindent\emph{Block-support bound.}
Combining the two terms, the block-support bound is
\[
d_{\mathrm{bsupp}}(C)=\min\{\Phi(C),\Psi(C)\}
=\min\{12,+\infty\}=12,
\]
and the code $C$ has actual distance
$d(C)=13>d_{\mathrm{bsupp}}(C)$.
\end{example}

\subsection{The Jensen bound for balanced codes}

We follow the setting of the Jensen bound in Section 2.3.
Since $C=\langle(g_1,\dots,g_\ell)\rangle$ is one-generator,
Equation \eqref{spanconstituent} shows that the constituent $C_i$ is the span of
$\bigl(g_1(\xi^{u_i}),\dots,g_\ell(\xi^{u_i})\bigr)$ over $E_i$, and is
therefore at most one-dimensional.  It is nonzero if and only if
$f_i\nmid g$, so the index set of Theorem~\ref{jensen} is
\[
  I(C)=\{i : f_i\nmid g\}=\{i_1,\dots,i_t\},
  \qquad t=\nu-\#\{i : f_i\mid g\}.
\]
Order the indices so that $d(C_{i_1})\le\dots\le d(C_{i_t})$ and write
$D_z:=C_{i_z}$ for $1\le z\le t$.
Then Theorem~\ref{jensen} gives
\[
  d(C)\ \ge\ d_J(C)\ =\ \min_{1\le z\le t}
  \Bigl\{d(D_z)\,
  d\bigl(\langle\theta_{i_1}\rangle\oplus\dots\oplus\langle\theta_{i_z}\rangle\bigr)
  \Bigr\}.
\]

\begin{theorem}\label{jensen-balanced}
Let $C$ be a balanced one-generator quasi-cyclic code of index $\ell$. Then
\[
d_J(C) =\ \ell d(\langle g\rangle).
\]
\end{theorem}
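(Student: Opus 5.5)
The plan is to compute every ingredient of $d_J(C)$ directly from the balanced hypothesis. There are two ingredients: the constituent distances $d(D_z)$ and the distances of the nested sums of minimal cyclic codes.

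First I determine the constituents. By the discussion preceding the theorem, for $i\in I(C)$ (that is, $f_i\nmid g$) the constituent $C_i$ is the $E_i$-span of the single vector
\[
\bigl(g_1(\xi^{u_i}),\dots,g_\ell(\xi^{u_i})\bigr).
\]
Balancedness gives $\gcd(g_j,x^m-1)=\gamma_j=g$ for every $j$. Since $x^m-1$ is squarefree and $f_i\mid x^m-1$ but $f_i\nmid g$, we get $f_i\nmid g_j$, so $g_j(\xi^{u_i})\neq 0$ for every $j$. Hence each $D_z$ is a one-dimensional code spanned by a vector of full weight $\ell$. Every nonzero codeword is a nonzero $E_i$-scalar multiple of it, so $d(D_z)=\ell$ for all $1\le z\le t$. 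In particular, all constituent distances coincide, and the ordering in Theorem~\ref{jensen} is arbitrary.

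Next I simplify the minimum. With $d(D_z)=\ell$ constant,
\[
d_J(C)=\ell\cdot\min_{1\le z\le t} d\bigl(\langle\theta_{i_1}\rangle\oplus\dots\oplus\langle\theta_{i_z}\rangle\bigr).
\]
The codes $\langle\theta_{i_1}\rangle\oplus\dots\oplus\langle\theta_{i_z}\rangle$ are nested and increasing in $z$, so their minimum distances are nonincreasing in $z$. The minimum is therefore attained at $z=t$, giving $d\bigl(\bigoplus_{i\in I(C)}\langle\theta_i\rangle\bigr)$.

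It remains to identify $\bigoplus_{i: f_i\nmid g}\langle\theta_i\rangle$ with the cyclic code $\langle g\rangle$. Here $g\mid x^m-1$, since $g=\gcd(g_1,\dots,g_\ell,x^m-1)$. I would argue via the CRT isomorphism \eqref{crtR}. The minimal ideal $\langle\theta_i\rangle$ corresponds to the $i$-th component $E_i$. The ideal $\langle g\rangle$ corresponds to the set of tuples vanishing exactly at the components $i$ with $f_i\mid g$, i.e.\ to $\bigoplus_{f_i\nmid g}E_i$. Equivalently, both codes are the cyclic codes whose zeros are the roots of $g$, and they have equal dimension $m-\deg g$.

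Combining these steps gives $d_J(C)=\ell\, d(\langle g\rangle)$. I also need the edge case $t=0$, where $g=x^m-1$. Then $C=\{\mathbf 0\}$ and both sides equal $+\infty$ by convention, so I would note it and set it aside.

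The only step needing care is this identification of the direct sum with $\langle g\rangle$. It is standard, but it is where squarefreeness of $x^m-1$ (from $\gcd(q,m)=1$) is used.
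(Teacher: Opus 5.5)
Your proposal is correct and follows essentially the same route as the paper. Balancedness forces $g_j(\xi^{u_i})\neq 0$ for every $j$ whenever $f_i\nmid g$, so each nonzero constituent is an $[\ell,1,\ell]$ code. The nesting of $\langle\theta_{i_1}\rangle\oplus\cdots\oplus\langle\theta_{i_z}\rangle$ then puts the minimum at $z=t$, where the sum equals $\langle g\rangle$.

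The only difference is cosmetic: the paper identifies these sums through generator polynomials, $\langle (x^m-1)/(f_{i_1}\cdots f_{i_z})\rangle$, while you identify the full sum with $\langle g\rangle$ through the CRT picture. Your remarks on the arbitrary ordering under ties and on the degenerate case $t=0$ are sound additions.
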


\begin{proof} 

1. For $1\le z\le t$, we let $\beta_z:=\xi^{u_{i_z}}$.
Then $D_z$ is spanned by
$\bigl(g_1(\beta_z),\dots,g_\ell(\beta_z)\bigr)$ and hence
\[
d(D_z)=\#\{j: g_j(\beta_z)\neq 0\}.
\]
For any root $\beta$ of $x^m-1$ and any $j$, we have $g_j(\beta)=0$ if and only if
$\gamma_j(\beta)=0$. Since $C$ is balanced, $\gamma_1=\dots=\gamma_\ell=g$, and this
holds if and only if $g(\beta)=0$.
As $D_z\neq\{\mathbf 0\}$,
we have $g(\beta_z)\neq 0$, hence $g_j(\beta_z)\neq 0$ for every $j$ and
$d(D_z)=\ell$. In other words, each $D_z$ is an $[\ell,1,\ell]$ code.

2. We have
\[
\langle\theta_{i_1}\rangle\ \subseteq\ \langle\theta_{i_1}\rangle\oplus\langle\theta_{i_2}\rangle
\ \subseteq\ \dots\ \subseteq\
\langle\theta_{i_1}\rangle\oplus\dots\oplus\langle\theta_{i_t}\rangle.
\]
Also, for $1 \le z \le t$, we have
\[
\langle\theta_{i_1}\rangle \oplus \dots \oplus \langle\theta_{i_z}\rangle
=\Big\langle\tfrac{x^m-1}{f_{i_1}}\Big\rangle \oplus\dots \oplus\Big\langle\tfrac{x^m-1}{f_{i_z}}\Big\rangle
=\Big\langle\gcd\Big(\tfrac{x^m-1}{f_{i_1}},\dots,\tfrac{x^m-1}{f_{i_z}}\Big)\Big\rangle
=\Big\langle\tfrac{x^m-1}{f_{i_1}\cdots f_{i_z}}\Big\rangle.
\]
Then
\[
d(\langle\theta_{i_1}\rangle)\ \ge\ d(\langle\theta_{i_1}\rangle\oplus\langle\theta_{i_2}\rangle)
\ \ge\ \dots\ \ge\
d(\langle\theta_{i_1}\rangle\oplus\dots\oplus\langle\theta_{i_t}\rangle)=d(\langle g\rangle).
\]
Combining parts 1 and 2 gives the equality $d_J(C) = \ell d(\langle g \rangle)$. 
\end{proof}

\begin{corollary}\label{jensen-chain}
Let $C$ be a balanced one-generator quasi-cyclic code of index $\ell$. 
Then
\[
d_J(C)=\Phi_1(C)\ \le\ \Phi_2(C)\ \le\ \cdots\ \le\ \Phi_{\ell-1}(C)
=d_{\mathrm{bsupp}}(C)\ \le\ d(C).
\]
In particular, the block-support bound is generally better than the Jensen bound.
\end{corollary}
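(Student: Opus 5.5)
The corollary is obtained by stringing together the results already proved for balanced codes. I will not do any new estimation; the work is in matching the definitions correctly.

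First, I invoke Theorem~\ref{jensen-balanced} to get $d_J(C)=\ell\, d(\langle g\rangle)$. Theorem~\ref{monotone} gives $\ell\, d(\langle g\rangle)=\Phi_1(C)$ together with the chain $\Phi_1(C)\le\Phi_2(C)\le\cdots\le\Phi_{\ell-1}(C)$. Combining the two yields the left part of the display, namely $d_J(C)=\Phi_1(C)\le\cdots\le\Phi_{\ell-1}(C)$.

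Next, I identify $\Phi_{\ell-1}(C)$ with $d_{\mathrm{bsupp}}(C)$. By Theorem~\ref{monotone}, $\Phi(C)=\max_s\Phi_s(C)=\Phi_{\ell-1}(C)$. Since $C$ is balanced, Theorem~\ref{balanced} (the implication (i)$\Rightarrow$(iv)) gives $\Psi(C)=+\infty$. Hence
\[
d_{\mathrm{bsupp}}(C)=\min\{\Phi(C),\Psi(C)\}=\min\{\Phi_{\ell-1}(C),+\infty\}=\Phi_{\ell-1}(C).
\]
The final inequality $d_{\mathrm{bsupp}}(C)\le d(C)$ is exactly Theorem~\ref{bsuppbound}. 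The statement that the block-support bound is ``generally better'' then follows from $d_J(C)\le d_{\mathrm{bsupp}}(C)$, and Example~\ref{ex2} shows the inequality can be strict, with $8<12$.

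The only point requiring care is the degenerate index $\ell=1$. There the range $1\le s\le \ell-1$ is empty, so $\Phi$ is not defined. I will handle this by noting that the chain implicitly assumes $\ell\ge 2$; for $\ell=1$ the statement reduces to $C=\langle g\rangle$ and is vacuous or trivial. Apart from this, there is no real obstacle: every step is a direct citation of an earlier result.
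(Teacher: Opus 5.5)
Your proposal is correct and follows the paper's proof exactly. Theorem~\ref{jensen-balanced} and Theorem~\ref{monotone} give $d_J(C)=\Phi_1(C)\le\cdots\le\Phi_{\ell-1}(C)=\Phi(C)$, Theorem~\ref{balanced} gives $\Psi(C)=+\infty$ so that $d_{\mathrm{bsupp}}(C)=\Phi(C)$, and Theorem~\ref{bsuppbound} supplies the final inequality. Your remark that the chain tacitly requires $\ell\ge 2$ (since $\Phi$ is a maximum over $1\le s\le\ell-1$) is a sensible clarification that the paper leaves implicit.
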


\begin{proof}
Theorem~\ref{jensen-balanced} and Theorem~\ref{monotone} give
$d_J(C)=\ell\,d(\langle g\rangle)=\Phi_1(C)$ together with the increasing chain up to $\Phi_{\ell-1}(C)=\Phi(C)$.  
By Theorem~\ref{balanced}, $\Psi(C)=+\infty$, so
$d_{\mathrm{bsupp}}(C)=\min\{\Phi(C),\Psi(C)\}=\Phi(C)$.  
The last inequality is
Theorem~\ref{bsuppbound}.
\end{proof}

\begin{example}\label{ex3}
Consider again the balanced one-generator quasi-cyclic code $C$ with parameters $[20,4,13]_4$ in Example~\ref{ex2}. Recall that $\ell=4$, $m=5$, and
$g=x+1$.

\bigskip
\noindent\emph{Jensen bound.}
Over $\mathbb{F}_4$, we have
\[
x^5-1=f_1(x)f_2(x)f_3(x),
\]
where
\[
f_1(x):=x+1,
\qquad
f_2(x):=x^2+\omega x+1,
\qquad
f_3(x):=x^2+\omega^2x+1,
\]
so that $\nu=3$. Since $g=x+1=f_1$, we have $f_1\mid g$ and
$f_2,f_3\nmid g$, so
\[
I(C)=\{i: f_i\nmid g\}=\{2,3\},
\qquad t=2 .
\]
Both nonzero constituents have distance $4$, so  we take $i_1=2$, $i_2=3$, giving $D_1=C_2$ and $D_2=C_3$.
Since $C$ is balanced, Theorem~\ref{jensen-balanced} shows that $D_1$ and
$D_2$ are $[4,1,4]$ codes over their respective constituent fields.
The minimal cyclic codes $\langle\theta_2\rangle$ and
$\langle\theta_3\rangle$, whose check polynomials are $f_2$ and $f_3$,
respectively, are $[5,2,4]_4$ codes. Moreover,
\[
\langle\theta_2\rangle\oplus\langle\theta_3\rangle
=\Big\langle\tfrac{x^5-1}{f_2f_3}\Big\rangle
=\langle x+1\rangle
=\langle g\rangle
\]
is a $[5,4,2]_4$ cyclic code. Hence, the Jensen bound is
\[
\begin{aligned}
d_J(C)
&=\min\Big\{
 d(D_1)d(\langle\theta_2\rangle),
 d(D_2)d(\langle\theta_2\rangle\oplus\langle\theta_3\rangle)
 \Big\}\\
&=\min\{4\cdot4,\,4\cdot2\}=8.
\end{aligned}
\]
Equivalently, since $d(\langle g\rangle)=d(\langle x+1\rangle)=2$,
Theorem~\ref{jensen-balanced} gives
\[
d_J(C)=\ell d(\langle g\rangle)=4\cdot2=8.
\]

\bigskip
\noindent\emph{Comparison with the block-support bound.}
In Example~\ref{ex2}, we obtained
\[
\Phi_1(C)=8,
\qquad
\Phi_2(C)=10,
\qquad
\Phi_3(C)=12,
\]
and $d_{\mathrm{bsupp}}(C)=12$, whereas $d(C)=13$. Therefore,
\[
d_J(C)=\Phi_1(C)=8
<\Phi_2(C)=10
<\Phi_3(C)=d_{\mathrm{bsupp}}(C)=12
<d(C)=13.
\]
The one-block puncturing bound recovers the Jensen bound, while the
two-block and three-block puncturing bounds give  improvements.
\end{example}

\section{Hermitian LCD one-generator quasi-cyclic codes}

We now demonstrate the applicability of the bounds from
Section~3 in the search for Hermitian LCD one-generator quasi-cyclic codes. 
To this end, we first derive a polynomial description of the Hermitian hull and a characterization of the Hermitian LCD property. 
In particular, these results strengthen the corresponding results of
\cite{guneri2023}.
We then present quaternary Hermitian LCD one-generator quasi-cyclic codes obtained by computer search and compare their parameters with known results in the literature. 
For the search, we used the algebraic criteria to identify Hermitian LCD candidates, and  we used the distance bounds to screen candidates with potentially high distances before their exact minimum distances are computed.
We record the block-support bounds for some balanced and unbalanced examples from these constructions, illustrating their effectiveness as tools for the search of one-generator quasi-cyclic codes with good parameters.  

\subsection{Hermitian hulls and the LCD criterion}
Throughout this section, we let $q=q^2_0$. Since $q=q_0^2$ is a square, we can define the \textit{Hermitian inner product}
\[
\langle \mathbf{x}, \mathbf{y} \rangle_h := \sum_{i=1}^n x_i y_i^{q_0}.
\]
The \textit{Hermitian dual} of a linear code $C\subseteq F^n$ is
\[
C^{\perp_h}:=\{\mathbf{x}\in F^n : \langle \mathbf{x},\mathbf{c}\rangle_h=0
\text{ for all } \mathbf{c}\in C\},
\]
and the \textit{Hermitian hull} of $C$ is
$\mathrm{Hull}_h(C):=C\cap C^{\perp_h}$. If
$\mathrm{Hull}_h(C)=\{\mathbf{0}\}$, then $C$ is called \textit{Hermitian
LCD}.

For a polynomial $f(x)=\sum_{i} f_ix^i\in F[x]$, its \textit{conjugate} is defined as
$f^{[q_0]}(x)=\sum_{i} f_i^{q_0}x^i$ and its
\textit{conjugate-reciprocal} is defined as
\[
f^{\dagger}(x)=x^{\deg f(x)}f^{[q_0]}(x^{-1}).
\]
We say a polynomial $f$ is \textit{self-conjugate-reciprocal} if
$f^{\dagger}(x)=\alpha f(x)$ for some $\alpha\in F$. For a polynomial
$f(x)=\sum_{i=0}^{m}f_ix^i$ of degree at most $m$, we define its
\textit{conjugate transpose polynomial} as
\[
\widehat f(x)=x^{m}f^{[q_0]}(x^{-1})=\sum_{i=0}^{m}f_{m-i}^{q_0}x^i .
\]
Then $\widehat f(x)=x^{m-\deg f(x)}f^{\dagger}(x)$. 

We keep the notation of Section~3 and write
\[
\mathbf g:=(g_1,\dots,g_\ell),
\]
so that $C=\langle\mathbf g\rangle$. By Lemma~\ref{normalization} we
assume throughout this section that $g_1\mid x^m-1$. Then
$g=\gcd(g_1,\dots,g_\ell)$ and $h=(x^m-1)/g$.

Under the identification in Section~2.1, the Hermitian inner product
carries over to $R$ and $R^\ell$.  For
$u=\sum_{i=0}^{m-1}u_ix^i$, $v=\sum_{i=0}^{m-1}v_ix^i \in R$ and
$\mathbf a=(a_1,\dots,a_\ell)$, $\mathbf b=(b_1,\dots,b_\ell) \in R^\ell$,
we define
\[
\langle u,v\rangle_h:=\sum_{i=0}^{m-1}u_i\,v_i^{q_0},
\qquad
\langle\mathbf a,\mathbf b\rangle_h
:=\sum_{j=1}^{\ell}\langle a_j,b_j\rangle_h .
\]

We have the following lemma, cf. \cite[Proposition~2]{galindo2018} and \cite[Lemma~1]{lv2020}.
\begin{lemma}\label{generating}
For $u,v\in R$,
\[
u\,\widehat{v}
=\sum_{\tau=0}^{m-1}\big\langle u,\,x^{\tau}v\big\rangle_h\,x^{\tau}.
\]
\end{lemma}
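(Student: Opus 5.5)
The plan is to compute the product $u\,\widehat v$ directly in $R$ and compare coefficients. Write $u=\sum_{i=0}^{m-1}u_ix^i$ and $v=\sum_{k=0}^{m-1}v_kx^k$.

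First I will describe $\widehat v$ modulo $x^m-1$. Treating $v$ as a polynomial of degree at most $m$ with $v_m=0$, the definition gives $\widehat v=x^m v^{[q_0]}(x^{-1})=\sum_{k=0}^{m-1}v_k^{q_0}x^{m-k}$. Since $x^m\equiv 1$ in $R$, this becomes
\[
\widehat v\equiv \sum_{k=0}^{m-1} v_k^{q_0}x^{-k}\pmod{x^m-1}.
\]
Here exponents are read modulo $m$; in particular the $k=0$ term $v_0^{q_0}x^m$ becomes $v_0^{q_0}$.

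Next I will find the coefficients of $x^\tau v$. In $R$ the cyclic shift satisfies $x^\tau v=\sum_k v_k x^{k+\tau}$. Reindexing with $i=k+\tau \bmod m$, the coefficient of $x^i$ in $x^\tau v$ is $v_{i-\tau}$, with indices mod $m$. The definition of the Hermitian form on $R$ then gives
\[
\langle u,x^\tau v\rangle_h=\sum_{i=0}^{m-1}u_i v_{i-\tau}^{q_0}.
\]

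Finally I will expand the product:
\[
u\,\widehat v\equiv\sum_{i,k}u_iv_k^{q_0}x^{i-k}.
\]
I will collect the terms with $i-k\equiv\tau \pmod m$, which means $k\equiv i-\tau$. For each $\tau$, the coefficient of $x^\tau$ is $\sum_i u_i v_{i-\tau}^{q_0}$, which is exactly $\langle u,x^\tau v\rangle_h$ by the previous step.

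The argument is routine bookkeeping. The only step needing care is the exponent $x^m$ coming from $v_0$ in the definition of $\widehat v$, and this is settled by $x^m\equiv1$. I will also check that the identity holds as an equality in $R$, not in $F[x]$; this is how the lemma is stated, since $u$ and $v$ are elements of $R$.
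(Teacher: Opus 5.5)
Your proposal is correct and follows essentially the same route as the paper. You rewrite $\widehat v$ in $R$ as $\sum_k v_k^{q_0}x^{-k}$ with exponents mod $m$, read off that the coefficient of $x^i$ in $x^\tau v$ is $v_{i-\tau}$, and match the coefficient of $x^\tau$ in $u\widehat v$ with $\sum_i u_i v_{i-\tau}^{q_0}=\langle u,x^\tau v\rangle_h$; your explicit handling of the $x^m\equiv 1$ term coming from $v_0$ just spells out what the paper leaves implicit in ``reading subscripts modulo $m$''.
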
 
\begin{proof}
Reading subscripts modulo $m$, we have
$\widehat{v}=\sum_{i=0}^{m-1}v_{-i}^{\,q_0}x^{i}$ in $R$. 
For any $u,w\in R$, the coefficient of $x^{\tau}$ in $uw$ is
$\sum_{i=0}^{m-1}u_i\,w_{\tau-i}$. 
Taking $w=\widehat{v}$, so that
$w_{\tau-i}=v_{i-\tau}^{\,q_0}$, the coefficient of $x^{\tau}$ in
$u\widehat{v}$ is
\[
\sum_{i=0}^{m-1}u_i\,v_{i-\tau}^{\,q_0}
=\sum_{i=0}^{m-1}u_i\,(x^{\tau}v)_i^{\,q_0}
=\big\langle u,\,x^{\tau}v\big\rangle_h.  
\]
This completes the proof. 
\end{proof}

Let
\[
\Lambda(x):=\sum_{j=1}^{\ell} g_j(x)\widehat{g_j}(x) \in F[x],
\qquad
\lambda(x):=\frac{x^m-1}{\gcd\big(\Lambda(x),x^m-1\big)} \in F[x].
\]
For convenience, from now on  we will still use the notations $\Lambda$ and $\lambda$ to denote their representatives in $R$ where appropriate.
\begin{theorem}\label{hull}
The Hermitian hull of $C$ is the one-generator quasi-cyclic code
\[
\mathrm{Hull}_h(C)
=\{b\mathbf g : b \Lambda=0\}
=\langle \lambda \mathbf g\rangle .
\]
In particular,
\[
\dim  \mathrm{Hull}_h(C)=\deg\gcd\big(\Lambda(x),h(x)\big) =m-\deg\lambda-\deg g. 
\]
\end{theorem}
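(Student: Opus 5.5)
The plan is to characterize the hull as the set of codewords $b\mathbf g$ that are Hermitian-orthogonal to every codeword of $C$, and then turn this into a single polynomial equation using Lemma~\ref{generating}. Since $C=\langle\mathbf g\rangle$ and $C$ is closed under multiplication by $x$, the first step is to show that a vector $\mathbf a\in R^\ell$ lies in $C^{\perp_h}$ if and only if $\langle \mathbf a, x^\tau\mathbf g\rangle_h=0$ for all $0\le\tau\le m-1$. This holds because $\{x^\tau\mathbf g\}$ spans $C$ over $F$, and the Hermitian form is additive in the second argument and conjugate-semilinear in scalars, so orthogonality to a spanning set suffices. By Lemma~\ref{generating} applied blockwise and summed over $j$, these $m$ conditions are exactly the coefficients of $\sum_j a_j\widehat{g_j}$ in $R$. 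Hence $\mathbf a\in C^{\perp_h}$ if and only if $\sum_j a_j\widehat{g_j}=0$ in $R$.

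Next I specialize to $\mathbf a=b\mathbf g$. Then $\sum_j bg_j\widehat{g_j}=b\Lambda$, so $\mathrm{Hull}_h(C)=\{b\mathbf g: b\Lambda=0\}$. In $R$, $b\Lambda\equiv0$ holds iff $(x^m-1)\mid b\Lambda$. Writing $e=\gcd(\Lambda,x^m-1)$ and $\lambda=(x^m-1)/e$, this is equivalent to $\lambda\mid b$, since $\Lambda/e$ is coprime to $\lambda$. (This is the same argument as in Lemma~\ref{vanish}.) Thus the admissible $b$ form the ideal $\langle\lambda\rangle$, and the hull is $\{a\lambda\mathbf g\}=\langle\lambda\mathbf g\rangle$. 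It is a one-generator QC code because it is an $R$-submodule generated by $\lambda\mathbf g$.

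For the dimension I use the formula $\dim\langle\mathbf g'\rangle=m-\deg\gcd(g'_1,\dots,g'_\ell,x^m-1)$ with $\mathbf g'=\lambda\mathbf g$. The main obstacle is computing $\gcd(\lambda g_1,\dots,\lambda g_\ell,x^m-1)$, which equals $\gcd(\lambda g,x^m-1)$. Since $x^m-1$ is squarefree, this is $\lambda g$ provided $\gcd(\lambda,g)=1$, and the heart of the proof is checking that coprimality. For an irreducible factor $f\mid g$, every $g_j$ vanishes at the roots of $f$, so $\Lambda=\sum g_j\widehat{g_j}\equiv0\bmod f$. Therefore $f\mid e$, hence $f\nmid\lambda$, again by squarefreeness. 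This gives $g\mid e$, $\dim=m-\deg\lambda-\deg g$, and since $\deg\lambda=m-\deg e$ we get $\dim=\deg e-\deg g=\deg(e/g)$.

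Finally, I identify $e/g$ with $\gcd(\Lambda,h)$. Because $g\mid e$, $g\mid\Lambda$, and $x^m-1=gh$ with $\gcd(g,h)=1$, we have $e=g\cdot\gcd(\Lambda,h)$. This yields $\dim\mathrm{Hull}_h(C)=\deg\gcd(\Lambda,h)$.
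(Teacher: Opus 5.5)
Your proposal is correct and follows essentially the same route as the paper. Both arguments use the dual criterion $\sum_j a_j\widehat{g_j}=0$ from Lemma~\ref{generating}, the reduction $b\Lambda=0\iff\lambda\mid b$, the divisibility $g\mid\gcd(\Lambda,x^m-1)$ giving $\lambda g\mid x^m-1$ and hence dimension $m-\deg\lambda-\deg g$, and the factorization $\gcd(\Lambda,x^m-1)=g\cdot\gcd(\Lambda,h)$ from $\gcd(g,h)=1$; your explicit check that $\gcd(\lambda,g)=1$ is just a rephrasing of the paper's step $\lambda\mid h$.
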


\begin{proof}
1. We first show that, for $\mathbf a=(a_1,\dots,a_\ell)\in R^\ell$,
\begin{equation}\label{herm-dual}
\mathbf a\in C^{\perp_h}\iff
\sum_{j=1}^{\ell} a_j \widehat{g_j}=0.
\end{equation}
Applying Lemma~\ref{generating} for each $j$ and summing, we have
\[
\sum_{j=1}^{\ell}a_j\widehat{g_j}
=\sum_{\tau=0}^{m-1}\Big(\sum_{j=1}^{\ell}
\big\langle a_j,\,x^{\tau}g_j\big\rangle_h\Big)x^{\tau}
=\sum_{\tau=0}^{m-1}
\langle\mathbf a,\,x^{\tau}\mathbf g\rangle_h\,x^{\tau}.
\]
On the other hand, $C$ is spanned by the codewords $x^{\tau}\mathbf g$,
$0\le \tau\le m-1$,  and  $\mathbf a\in C^{\perp_h}$ if and only if 
$\langle\mathbf a, x^{\tau}\mathbf g\rangle_h = 0$ for all $\tau$. 
Hence $\sum_{j=1}^{\ell} a_j \widehat{g_j}=0$ if and only if
$\mathbf a\in C^{\perp_h}$, which proves \eqref{herm-dual}.

2. Every codeword of $C$ is $b \mathbf g=(bg_1,\dots,bg_\ell)$ for some
$b\in R$, and $\sum_j bg_j\widehat{g_j}=b\Lambda$. Hence, 
\eqref{herm-dual} gives
\[
\mathrm{Hull}_h(C)=C\cap C^{\perp_h}
=\{ b \mathbf g : b \Lambda=0\}.
\]
Write $\Lambda_0:=\gcd(\Lambda,x^m-1)$ and
$\Lambda=\Lambda_0\Lambda'$, so that $x^m-1=\Lambda_0\lambda$. Then $\gcd(\Lambda',\lambda)=1$ and
\[
b \Lambda = 0
\iff x^m-1 \mid b \Lambda 
\iff \lambda\mid b\Lambda'
\iff \lambda\mid b,
\]
so $\mathrm{Hull}_h(C)=\{b\mathbf g : \lambda\mid b\}
=\langle \lambda\mathbf g\rangle$.

3. Since $g\mid g_j$ for every $j$, we have $g\mid\Lambda$, and so
$g\mid\Lambda_0$. Since $\lambda=(x^m-1)/\Lambda_0$ and $h=(x^m-1)/g$,
we have $\lambda \mid h$. Consequently, $\lambda g\mid x^m-1$. As
$g=\gcd(g_1,\dots,g_\ell)$, we obtain
\[
\gcd(\lambda g_1,\dots,\lambda g_\ell,x^m-1)
=\gcd(\lambda g,\,x^m-1)=\lambda g,
\]
and $\dim\mathrm{Hull}_h(C)=m-\deg\lambda-\deg g$.

4. Finally, since $h\mid x^m-1$, any common divisor of $\Lambda$ and $h$
divides $\Lambda_0$, so $\gcd(\Lambda,h)=\gcd(\Lambda_0,h)$. As
$\gcd(q,m)=1$ gives $\gcd(g,h)=1$, and $g\mid\Lambda_0\mid gh$, we get
$\Lambda_0=g\cdot\gcd(\Lambda_0,h)$. Hence
$\deg\gcd(\Lambda,h)=\deg\Lambda_0-\deg g=m-\deg\lambda-\deg g$.
\end{proof}

\begin{corollary}\label{lcd}
$C$ is Hermitian LCD if and only if $\gcd\big(\Lambda(x),h(x)\big)=1$.
\end{corollary}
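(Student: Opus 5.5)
This corollary follows directly from the dimension formula in Theorem~\ref{hull}. By definition, $C$ is Hermitian LCD exactly when $\mathrm{Hull}_h(C)=\{\mathbf 0\}$. Since $\mathrm{Hull}_h(C)$ is an $F$-subspace, this is equivalent to $\dim \mathrm{Hull}_h(C)=0$. Theorem~\ref{hull} gives
\[
\dim \mathrm{Hull}_h(C)=\deg\gcd\big(\Lambda(x),h(x)\big),
\]
so $C$ is Hermitian LCD if and only if $\deg\gcd(\Lambda,h)=0$.

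The remaining step is to identify degree zero with the gcd being $1$. The gcd is taken to be monic, by the usual convention that the paper also uses for $g$. A monic polynomial of degree $0$ is the constant $1$, and conversely $\gcd(\Lambda,h)=1$ has degree $0$. This gives the stated equivalence.

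The one point that needs care is that $\Lambda$ is defined as a polynomial in $F[x]$, possibly of degree up to $2m-1$ or so, while the hull computation uses its image in $R$. This causes no difficulty. Since $h\mid x^m-1$, replacing $\Lambda$ by its residue modulo $x^m-1$ does not change $\gcd(\Lambda,h)$, so the criterion is the same with either representative. Part~4 of the proof of Theorem~\ref{hull} already handles exactly this, so I expect no real obstacle; the corollary is immediate.

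As an optional alternative, one could argue from $\mathrm{Hull}_h(C)=\langle\lambda\mathbf g\rangle$ directly. This module is zero iff $\lambda g$ is divisible by $x^m-1$, that is, iff $\lambda=h$, which means $\Lambda_0=g$. Since $\Lambda_0=g\cdot\gcd(\Lambda_0,h)$, this holds iff $\gcd(\Lambda,h)=1$. This route serves as a cross-check of the first argument.
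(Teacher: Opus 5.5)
Your proof is correct and follows the paper's route: the corollary is stated as an immediate consequence of the dimension formula $\dim\mathrm{Hull}_h(C)=\deg\gcd(\Lambda,h)$ in Theorem~\ref{hull}, where degree zero of the monic gcd means the gcd equals $1$. Your cross-check via $\mathrm{Hull}_h(C)=\langle\lambda\mathbf g\rangle$ is also sound: the hull is zero iff $\lambda=h$, iff $\Lambda_0=g$, and this matches Remark~\ref{hullissubcode}.
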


To aid the search for Hermitian LCD codes with good parameters, we use the following two propositions. 

\begin{proposition}\label{scr}
If $C$ is Hermitian LCD, then $g$ and $h$ are self-conjugate-reciprocal.
\end{proposition}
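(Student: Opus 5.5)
The plan is to combine Corollary~\ref{lcd} with the behaviour of the map $v\mapsto\widehat v$ on $R$. Assume $C$ is Hermitian LCD, so that $\gcd(\Lambda,h)=1$ by Corollary~\ref{lcd}. Since $\gcd(q,m)=1$, the polynomial $x^m-1$ is squarefree and $x^m-1=gh$ with $\gcd(g,h)=1$. It therefore suffices to show that the set of irreducible factors of $g$ is closed under $f\mapsto f^\dagger$, up to scalar multiples. Then $g^\dagger=\alpha g$, and $h$ inherits the same property because $h=(x^m-1)/g$ and $x^m-1$ is itself self-conjugate-reciprocal, since $(x^m-1)^\dagger=-(x^m-1)$.

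\textbf{Step 1: the involution on $R$.} Let $\sigma:R\to R$ be defined by $\sigma(f)=f^{[q_0]}(x^{-1})$, computed using $x^{-1}=x^{m-1}$. This map is a well-defined ring automorphism of $R$ that is $q_0$-semilinear, and it satisfies $\sigma^2=\mathrm{id}$. In $R$ we have $\widehat f=x^m f^{[q_0]}(x^{-1})\equiv\sigma(f)$. For $f\mid x^m-1$ we also have $\sigma(f)=x^{-\deg f}f^\dagger$, so $\sigma(f)$ is $f^\dagger$ times a unit of $R$. Moreover, if $f$ is a monic irreducible factor of $x^m-1$, then $f^\dagger$ is a scalar multiple of an irreducible factor of $x^m-1$: its roots are the $\beta^{-q_0}$, where $\beta$ runs over the roots of $f$, and these are again $m$-th roots of unity.

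\textbf{Step 2: transport divisibility.} Let $f$ be an irreducible factor of $g$. Then $f\mid g_j$ for every $j$, so $g_j=fa_j$ in $R$. Applying $\sigma$ gives $\widehat{g_j}=\sigma(f)\sigma(a_j)$, which means $f^\dagger$ divides $\widehat{g_j}$ in $R$ up to a unit. Here I will use the standard fact that, for a divisor $d$ of $x^m-1$, the condition ``$d\mid a$ in $R$'' is well defined and is equivalent to $d\mid a$ in $F[x]$ for any representative $a$. It follows that $f^\dagger\mid\sum_j g_j\widehat{g_j}=\Lambda$, and since $f^\dagger\mid x^m-1$, we get $f^\dagger\mid\gcd(\Lambda,x^m-1)$. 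Now suppose $f^\dagger\nmid g$. By squarefreeness $f^\dagger\mid h$, and hence $f^\dagger\mid\gcd(\Lambda,h)=1$, which is a contradiction. So every irreducible factor of $g$ has its conjugate-reciprocal, up to a scalar, again among the factors of $g$. Because $\dagger$ is an involution on these factors and is multiplicative up to scalars, we obtain $g^\dagger=\alpha g$.

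\textbf{Step 3: conclusion for $h$.} From $x^m-1=gh$ we get $(x^m-1)^\dagger=g^\dagger h^\dagger$. Combining this with $(x^m-1)^\dagger=-(x^m-1)$ and $g^\dagger=\alpha g$ yields $h^\dagger=-\alpha^{-1}h$, so $h$ is self-conjugate-reciprocal.

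The main obstacle I expect is the bookkeeping in Step~2. One has to pass between divisibility in $R$ and divisibility in $F[x]$ without mishandling the unit $x^{-\deg f}$, and one has to check that $\dagger$ is multiplicative up to scalars when degrees are taken correctly (the relation $(ab)^\dagger=a^\dagger b^\dagger$ holds exactly because $\deg(ab)=\deg a+\deg b$). I will handle this by working throughout with monic divisors of $x^m-1$ and by tracking only the sets of roots, using $\beta\mapsto\beta^{-q_0}$. This turns the whole argument into a statement about closure of root sets.
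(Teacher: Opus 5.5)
Your argument is correct, but it takes a different route from the paper's.

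The paper argues globally. It asserts that $\Lambda$ is self-conjugate-reciprocal and deduces that $\gcd(\Lambda,x^m-1)$ is too. It then uses step~4 of the proof of Theorem~\ref{hull} (namely $\Lambda_0=g\cdot\gcd(\Lambda_0,h)$) to get $g=\gcd(\Lambda,x^m-1)$ under the LCD hypothesis. The polynomial $h$ is handled exactly as in your Step~3.

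You never use a symmetry of $\Lambda$ or the identity $g=\gcd(\Lambda,x^m-1)$. Instead you push divisibility through $\dagger$ one irreducible factor at a time: $f\mid g$ gives $f^\dagger\mid\widehat{g_j}$ for all $j$, hence $f^\dagger\mid\gcd(\Lambda,x^m-1)$. Then $\gcd(\Lambda,h)=1$ forces $f^\dagger\mid g$.

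The paper's version is shorter, and its proof also records that $\lambda$ is self-conjugate-reciprocal. However, its key claim is left unproved and holds only in the weaker form $x^{2m}\Lambda^{[q_0]}(x^{-1})=\Lambda$. Indeed $x\mid\Lambda$, since $\deg g_j<m$ gives $\widehat{g_j}(0)=0$; this is harmless for gcds with $x^m-1$.

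Your version pays with some bookkeeping: $\dagger$ is an exact, multiplicative involution on polynomials with nonzero constant term, and therefore permutes the irreducible factors of $x^m-1$. In return it needs nothing about $\Lambda$ beyond its definition.

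Finally, the obstacle you anticipate in Step~2 disappears if you stay in $F[x]$. From $g_j=fa_j$ you get $\widehat{g_j}=x^{m-\deg g_j}g_j^\dagger=x^{m-\deg g_j}f^\dagger a_j^\dagger$, so $f^\dagger\mid\widehat{g_j}$ directly, with no passage through $R$ and no unit $x^{-\deg f}$ to track.
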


\begin{proof}   
Since $\Lambda(x)$ is self-conjugate-reciprocal,  the polynomials $\lambda$ and $\gcd(\Lambda,x^m-1)$ are
also self-conjugate-reciprocal. 
If $C$ is Hermitian LCD, then
$\gcd(\Lambda,h)=1$ by Corollary~\ref{lcd},
and so $g=\gcd(\Lambda,x^m-1)$ by Theorem~\ref{hull}. 
Then $g$ and $h=(x^m-1)/g$ are also self-conjugate-reciprocal.
\end{proof}

For a divisor $a$ of $h(x)$, let
\[
C_a:=aC=\big\langle(ag_1,\dots,ag_\ell)\big\rangle
\]
be a subcode of $C$. 
We note that $\dim C_a=\dim C-\deg a$,  and when $a\neq h$, the
code $C_a$ is a nonzero subcode of $C$ with $d(C_a)\ge d(C)$.

\begin{proposition}\label{subcode}
Let $a$ be a self-conjugate-reciprocal divisor of $h(x)$.
\begin{enumerate}
\item[\textup{(i)}] If $C$ is Hermitian LCD, then $C_a$ is Hermitian LCD.
\item[\textup{(ii)}] If $C_a$ is Hermitian LCD, then $C$ is Hermitian LCD
if and only if $\gcd(\Lambda,a)=1$.
\end{enumerate}
\end{proposition}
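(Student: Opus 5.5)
The plan is to apply Corollary~\ref{lcd} to $C_a$ itself, viewed as a one-generator quasi-cyclic code with generator $a\mathbf g=(ag_1,\dots,ag_\ell)$. This requires computing its $g$, $h$ and $\Lambda$ in terms of those of $C$.

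Since $a\mid h$ and $h\mid x^m-1$, and the generator $ag_1$ divides $x^m-1$ (because $a g\mid x^m-1$ and $g_1\mid x^m-1$), the gcd of $C_a$ is
\[
g_a=\gcd(ag_1,\dots,ag_\ell,x^m-1)=\gcd(ag,x^m-1)=ag.
\]
Consequently $h_a=(x^m-1)/(ag)=h/a$. If the first component is not literally a divisor of $x^m-1$, I will instead use the version of Theorem~\ref{hull} with $g=\gcd(g_1,\dots,g_\ell,x^m-1)$. That version holds by the same proof.

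Next I compute $\Lambda_a=\sum_j (ag_j)\widehat{(ag_j)}$. In $R$ the conjugate transpose is multiplicative up to a unit. Indeed, $\widehat{uv}=u^{[q_0]}(x^{-1})v^{[q_0]}(x^{-1})$ modulo $x^m-1$ with $x^{-1}=x^{m-1}$, so $\widehat{ag_j}=\widehat a\,\widehat{g_j}$ in $R$. Hence $\Lambda_a=a\widehat a\,\Lambda$ in $R$. Because $a$ is self-conjugate-reciprocal, $\widehat a=x^{m-\deg a}a^\dagger=\alpha x^{m-\deg a}a$ with $x$ a unit in $R$. So $\Lambda_a=u\,a^2\Lambda$ for a unit $u\in R$, and in particular $\gcd(\Lambda_a,h_a)=\gcd(a^2\Lambda,h/a)$. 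Since $x^m-1$ is squarefree and $a\mid h$, we have $\gcd(a,h/a)=1$, and therefore
\[
\gcd(\Lambda_a,h_a)=\gcd(\Lambda,h/a).
\]
The main point needing care is that gcds with divisors of $x^m-1$ are well defined on classes in $R$ and unaffected by unit factors. This holds because for a divisor $d$ of $x^m-1$, $\gcd(f,d)$ depends only on $f$ modulo $x^m-1$, and units of $R$ are coprime to $x^m-1$.

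Now both parts follow from Corollary~\ref{lcd}.

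For (i), if $\gcd(\Lambda,h)=1$, then $\gcd(\Lambda,h/a)=1$, so $C_a$ is Hermitian LCD.

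For (ii), assume $\gcd(\Lambda,h/a)=1$. Since $h$ is squarefree, $h=a\cdot(h/a)$ with coprime factors, so
\[
\gcd(\Lambda,h)=\gcd(\Lambda,a)\gcd(\Lambda,h/a)=\gcd(\Lambda,a).
\]
Thus $C$ is Hermitian LCD if and only if $\gcd(\Lambda,a)=1$.
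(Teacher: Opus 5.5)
Your proof is correct and follows the paper's argument: identify the gcd and check polynomial of $C_a$ as $ag$ and $h/a$, write $\Lambda_{C_a}=a\widehat a\,\Lambda=\alpha x^{m-\deg a}a^2\Lambda$ in $R$, use $\gcd(a,h/a)=1$ and the invertibility of $x$ to get $\gcd(\Lambda_{C_a},h/a)=\gcd(\Lambda,h/a)$, and finish with $\gcd(\Lambda,h)=\gcd(\Lambda,a)\gcd(\Lambda,h/a)$. One correction: your parenthetical claim that $ag_1\mid x^m-1$ is false in general (take $\mathbf g=(x-1,1)$ and $a=x-1$, so $ag_1=(x-1)^2$), so your fallback to Corollary~\ref{lcd} with $g=\gcd(g_1,\dots,g_\ell,x^m-1)$ is genuinely needed; that fallback is valid (alternatively, renormalize by a unit as in Lemma~\ref{normalization}, which only multiplies $\Lambda$ by a unit of $R$), and the paper's proof tacitly uses the same fact.
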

\begin{proof}
We have
\[
\gcd(a g_1,\dots,a g_\ell,x^m-1) = ag,
\]
so the   check
polynomial of $C_a$ is $h/a$. 
By Corollary~\ref{lcd}, the code $C_a$ is Hermitian LCD if and only if
$\gcd(\Lambda_{C_a},h/a)=1$, where
\[
\Lambda_{C_a}=\sum_{j=1}^{\ell}(ag_j)\widehat{ag_j}=a\widehat a\Lambda  =\alpha x^{m-\deg a}a^{2}\Lambda ,
\]
 for some $\alpha\in F^{\ast}$.
Here we note that $\gcd(a,h/a)=1$  and  
$\gcd(x,h/a)=1$. Hence,
\[
\gcd\big(\Lambda_{C_a},h/a\big) 
=\gcd\big(\Lambda,h/a\big).
\]
Moreover, since $h=a\cdot(h/a)$ with $\gcd(a,h/a)=1$, we have
\[
\gcd(\Lambda,h)=\gcd(\Lambda,a)\cdot\gcd\big(\Lambda,h/a\big).
\]

(i) If $C$ is Hermitian LCD, then $\gcd(\Lambda,h)=1$ by
Corollary~\ref{lcd}, so $\gcd(\Lambda,h/a)=1$, and $C_a$ is
Hermitian LCD.

(ii) If $C_a$ is Hermitian LCD, then $\gcd(\Lambda,h/a)=1$, so
$\gcd(\Lambda,h)=\gcd(\Lambda,a)$. Then $C$ is Hermitian LCD if and only
if $\gcd(\Lambda,a)=1$.
\end{proof}

\begin{remark}\label{hullissubcode}
In view of  Propositions \ref{scr} and \ref{subcode}, we see that  $\mathrm{Hull}_h(C)$   is the  subcode $C_\lambda$ of $C$. When $C$ is Hermitian LCD, $\lambda=h$ and so
$C_\lambda=\{0\}$.
\end{remark}

\subsection{Constructions and comparison with known parameters}
In \cite{carlet2018}, Carlet et al.\ showed that any linear code over $\mathbb{F}_q$ with $q>4$ is equivalent to a Hermitian LCD code. 
This motivates us to search for quaternary Hermitian LCD codes with good parameters.
We write $[n,k,d]^{H}_{4}$ for an $[n,k,d]$ Hermitian LCD code over
$\mathbb{F}_4$, and for $1\le k\le n$ we let
\[
d^{H}_{4}(n,k):=\max\bigl\{\,d(C)\;:\;C\ \text{is a Hermitian LCD }[n,k]
\text{ code over }\mathbb{F}_4\,\bigr\},
\]
the largest minimum distance of a Hermitian LCD code with the given length $n$ and dimension $k$. 
Let $d_{4}(n,k)$ denote the largest minimum distance of an $[n,k]$ linear code over $\mathbb{F}_4$, without the LCD restriction.
We have $d^{H}_{4}(n,k)\le d_{4}(n,k)$, and the inequality may be strict in view of the result by Carlet et al. \cite{carlet2018}.  The determination of $d^{H}_{4}$ is a problem distinct from that of $d_{4}$.

For comparison, we use Grassl's table \cite{grassl} for the values of
$d_{4}(n,k)$, and the Araya-Harada table \cite[Table~6]{araya2024} for
the values of $d^{H}_{4}(n,k)$ in the range $n\le 30$. We are not aware of any database recording $d^{H}_{4}(n,k)$ for $n>30$.

 The values $d^H_4(n,k)$ were determined in \cite{araya2020,harada2019,lu2015} 
for $k = 1, 2, 3, n-1, n-2$ and $n-3$.
Complete classifications up to equivalence of the optimal codes were
subsequently obtained by Ishizuka \cite{ishizuka2020} for dimension 2
and by Araya and Harada \cite{araya2022} for dimension 3.

\begin{table}[H]
\centering
\begin{tabular}{c c c c c c}
\hline
No. & Index \(\ell\) & \(m\) & Our code & Araya--Harada entry & \(\Delta\) \\
\hline
1 & \(2\) & \(13\) & \([26,12,10]^H_4\) & \(d^H_4(26,12)=9\text{--}11\) & \(+1\) \\
2 & \(2\) & \(13\) & \([26,13,9]^H_4\) & \(d^H_4(26,13)=8\text{--}10\) & \(+1\) \\
3 & \(2\) & \(15\) & \([30,15,10]^H_4\) & \(d^H_4(30,15)=9\text{--}12\) & \(+1\) \\
\hline
4 & \(3\) & \(7\) & \([21,7,11]^H_4\) & \(d^H_4(21,7)=10\text{--}11\) & \(+1\) \\
5 & \(3\) & \(9\) & \([27,7,15]^H_4\) & \(d^H_4(27,7)=14\text{--}16\) & \(+1\) \\
6 & \(3\) & \(9\) & \([27,9,13]^H_4\) & \(d^H_4(27,9)=12\text{--}14\) & \(+1\) \\
\hline
\end{tabular}
\caption{Comparison with the Araya--Harada table.}
\label{tab:AH}
\end{table}

Using the criterion of Section~4.1 and the bounds of Section~3 as a screen, we conducted a computer search in \textsc{Magma} for Hermitian LCD one-generator quasi-cyclic codes with good parameters, up to length $63$. The search produced six codes improving the corresponding lower bounds of the Araya-Harada table, recorded in Table~\ref{tab:AH}. 

The search also produced 13  codes attaining the exact values listed in \cite[Table~6]{araya2024}. These are
$[10,4,6]^H_4$, $[10,5,5]^H_4$, $[14,6,7]^H_4$, $[14,7,6]^H_4$,
$[15,4,9]^H_4$, $[15,5,8]^H_4$, $[18,6,9]^H_4$, $[18,8,8]^H_4$,
$[20,4,13]^H_4$, $[20,5,12]^H_4$, $[21,6,12]^H_4$, $[27,4,18]^H_4$, and $[30,4,20]^H_4$.

In particular, the code $[21,7,11]^H_4$ attains the upper bound of the corresponding interval in that table, and hence establishes $d_4^H(21,7)=11$. 
We also note that the value $d_4^H(22,10)=9$ was established in~\cite{c104} and we obtain a one-generator quasi-cyclic code $[22,10,9]^H_4$ as an alternative construction.

For lengths $n > 30$ beyond the range of \cite[Table~6]{araya2024},  we obtain 11 codes that match the corresponding entry of Grassl's table \cite{grassl}.
The code $[35,4,24]^H_4$ is optimal, so that
$d^{H}_{4}(35,4)=d_{4}(35,4)=24$. The codes $[33,6,20]^H_4$,
$[33,10,16]^H_4$, $[34,17,11]^H_4$, $[38,9,20]^H_4$, $[38,10,19]^H_4$,
$[38,18,12]^H_4$, $[38,19,12]^H_4$, $[42,18,14]^H_4$, $[42,19,14]^H_4$, and $[46,22,14]^H_4$ attain the minimum distance of the best known linear code of the same length and dimension recorded in \cite{grassl}.

Generator polynomials for all codes obtained in the
search are listed in the appendix.

\subsection{Performance of the block-support bounds}

We demonstrate the bounds of Section~3 for some Hermitian LCD
one-generator quasi-cyclic codes constructed above. 
We separate the codes into balanced and unbalanced cases, since the balanced condition eliminates the block-shortening term and gives
\[
d_{\mathrm{bsupp}}(C)=\Phi(C)=\Phi_{\ell-1}(C),
\]
whereas in the unbalanced case both $\Phi(C)$ and $\Psi(C)$ must be
considered.  Tables~\ref{tab:balancedbounddemo} and
\ref{tab:unbalancedbounddemo} summarize the resulting bounds.

\begin{table}[H]
\centering
\small
\renewcommand{\arraystretch}{1.15}
\setlength{\tabcolsep}{4.5pt}
\begin{tabular}{c c c c c c c c c}
\hline
No. & $\ell$ & $m$ & Code $C$
& $\Phi_1(C)=d_J(C)$
& $\Phi_2(C)$
& $\Phi_3(C)$
& $\Phi(C)=d_{\mathrm{bsupp}}(C)$
& $d(C)-d_{\mathrm{bsupp}}(C)$ \\
\hline
1  & $2$ & $5$  & $[10,4,6]^H_4$   & $4$ & --   & --   & $4$  & $2$ \\
2  & $2$ & $7$  & $[14,6,7]^H_4$   & $4$ & --   & --   & $4$  & $3$ \\
3  & $2$ & $9$  & $[18,6,9]^H_4$   & $4$ & --   & --   & $4$  & $5$ \\
4  & $2$ & $11$ & $[22,10,9]^H_4$  & $4$ & --   & --   & $4$  & $5$ \\
5  & $2$ & $13$ & $[26,12,10]^H_4$ & $4$ & --   & --   & $4$  & $6$ \\
\hline
6  & $3$ & $5$  & $[15,4,9]^H_4$   & $6$ & $8$  & --   & $8$  & $1$ \\
7  & $3$ & $5$  & $[15,5,8]^H_4$   & $3$ & $7$  & --   & $7$  & $1$ \\
8  & $3$ & $7$  & $[21,6,12]^H_4$  & $6$ & $9$  & --   & $9$  & $3$ \\
9  & $3$ & $7$  & $[21,7,11]^H_4$  & $3$ & $9$  & --   & $9$  & $2$ \\
10 & $3$ & $9$  & $[27,4,18]^H_4$  & $9$ & $14$ & --   & $14$ & $4$ \\
11 & $3$ & $9$  & $[27,7,15]^H_4$  & $6$ & $12$ & --   & $12$ & $3$ \\
12 & $3$ & $9$  & $[27,9,13]^H_4$  & $3$ & $9$  & --   & $9$  & $4$ \\
\hline
13 & $4$ & $5$  & $[20,4,13]^H_4$  & $8$ & $10$ & $12$ & $12$ & $1$ \\
\hline
\end{tabular}
\caption{Block-support bounds for balanced Hermitian LCD one-generator
quasi-cyclic codes over $\mathbb F_4$.  For these codes,
$\Psi(C)=+\infty$, $d_J(C)=\Phi_1(C)$, and
$d_{\mathrm{bsupp}}(C)=\Phi(C)=\Phi_{\ell-1}(C)$.}
\label{tab:balancedbounddemo}
\end{table}

\begin{table}[H]
\centering
\small
\renewcommand{\arraystretch}{1.15}
\setlength{\tabcolsep}{5pt}
\begin{tabular}{c c c c c c c c}
\hline
No. & $\ell$ & $m$ & Code $C$
& $\Phi(C)$
& $\Psi(C)$
& $d_{\mathrm{bsupp}}(C)$
& $d(C)-d_{\mathrm{bsupp}}(C)$ \\
\hline
1 & $2$ & $5$  & $[10,5,5]^H_4$   & $3$  & $5$  & $3$  & $2$ \\
2 & $2$ & $7$  & $[14,7,6]^H_4$   & $3$  & $7$  & $3$  & $3$ \\
3 & $2$ & $9$  & $[18,8,8]^H_4$   & $4$  & $9$  & $4$  & $4$ \\
4 & $2$ & $13$ & $[26,13,9]^H_4$  & $3$  & $13$ & $3$  & $6$ \\
5 & $2$ & $15$ & $[30,15,10]^H_4$ & $4$  & $10$ & $4$  & $6$ \\
\hline
6 & $4$ & $5$  & $[20,5,12]^H_4$  & $10$ & $15$ & $10$ & $2$ \\
\hline
7 & $6$ & $5$  & $[30,4,20]^H_4$  & $19$ & $20$ & $19$ & $1$ \\
\hline
\end{tabular}
\caption{Block-support bounds for unbalanced Hermitian LCD one-generator
quasi-cyclic codes over $\mathbb F_4$.}
\label{tab:unbalancedbounddemo}
\end{table}

\section{Quantum codes}

We now consider an application of Hermitian LCD codes in constructing entanglement-assisted quantum error-correcting codes (EAQECCs).
We note that EAQECCs generalize quantum stabilizer codes, as shown by Brun
et al.~\cite{brun2006}. 
We denote a $q$-ary EAQECC as $[[n,\kappa,\delta;c]]_q$, which
encodes $\kappa$ information qudits into $n$ channel qudits aided by $c$ pre-shared entangled pairs.

By \cite[Corollary 3.4]{guenda2018} and \cite{wildebrun2008}, 
every $[n,k,d]^{H}_{4}$ Hermitian LCD code yields a maximal-entanglement EAQECC with parameters $[[n,k,d;n-k]]_2$, so the codes obtained in Section~4 transfer directly to the quantum setting. 
 We compare the resulting codes with  the entanglement-assisted
table of \cite{grassl} and Table IV in \cite{LELLS26}, using   
Definition~\ref{partial-order} from \cite{LELLS26} below. 

\begin{definition}[{\cite[Sec.~I-B]{LELLS26}}]\label{partial-order}
To evaluate the performance of an $[[n,\kappa,\delta;c]]_q$ EAQECC, one
considers its rate, net rate, and error-correcting capacity, defined,
respectively, as
\[
\rho := \frac{\kappa}{n}, \qquad
\overline{\rho} := \frac{\kappa - c}{n}, \qquad
e := \left\lfloor \frac{\delta - 1}{2} \right\rfloor .
\]
EAQECCs with larger values of $\rho$, $\overline{\rho}$, and $e$ are
preferred. An EAQECC $Q_1$ is \emph{better} than $Q_2$, or $Q_1$ has
\emph{improved parameters} compared to $Q_2$, if $Q_1$ has at least an
improvement among the parameters $\rho$, $\overline{\rho}$, and $e$ when
the other parameters are fixed. Equivalently, an
$[[n_1,\kappa_1,\delta_1;c_1]]_q$ EAQECC $Q_1$ is better than an
$[[n_2,\kappa_2,\delta_2;c_2]]_q$ EAQECC $Q_2$ if
\[
n_1 \le n_2, \qquad \kappa_1 \ge \kappa_2, \qquad
\delta_1 \ge \delta_2, \qquad c_1 \le c_2,
\]
with at least one of these inequalities being strict.
\end{definition}

\subsection{Comparison with \cite{LELLS26}}

We compare the maximal-entanglement EAQECCs obtained from the
Hermitian LCD codes in Section~4 with the qubit EAQECCs recently reported in
\cite{LELLS26}. 
Under Definition~\ref{partial-order}, $13$ distinct codes obtained here strictly improve $24$ entries in \cite{LELLS26}. 
In several cases, a single code obtained here improves more than one entry in \cite{LELLS26}. 
The $24$ rows in Tables~\ref{tab:lells1} and~\ref{tab:lells2} count the improved entries of \cite{LELLS26}, rather than distinct codes constructed in this paper.

\begin{table}[H]
\centering 
\scriptsize
\renewcommand{\arraystretch}{1.12}
\resizebox{\textwidth}{!}{%
\begin{tabular}{c c c c c c c}
\hline
No. & Index \(\ell\) & \(m\) & Hermitian LCD code
& EAQECC obtained
& Code in~\cite{LELLS26}
& Improvement \\
\hline
1 & \(3\) & \(7\)
& \([21,7,11]^H_4\)
& \([[21,7,11;14]]_2\)
& \([[23,6,11;15]]_2\)
& \(n{-}2,\,\kappa{+}1,\,c{-}1\) \\
2 & \(3\) & \(7\)
& \([21,7,11]^H_4\)
& \([[21,7,11;14]]_2\)
& \([[23,7,10;14]]_2\)
& \(n{-}2,\,\delta{+}1\) \\
3 & \(3\) & \(7\)
& \([21,7,11]^H_4\)
& \([[21,7,11;14]]_2\)
& \([[26,7,11;15]]_2\)
& \(n{-}5,\,c{-}1\) \\
4 & \(2\) & \(13\)
& \([26,12,10]^H_4\)
& \([[26,12,10;14]]_2\)
& \([[26,8,10;14]]_2\)
& \(\kappa{+}4\) \\
5 & \(2\) & \(13\)
& \([26,12,10]^H_4\)
& \([[26,12,10;14]]_2\)
& \([[26,9,10;15]]_2\)
& \(\kappa{+}3,\,c{-}1\) \\
6 & \(2\) & \(13\)
& \([26,12,10]^H_4\)
& \([[26,12,10;14]]_2\)
& \([[27,9,10;14]]_2\)
& \(n{-}1,\,\kappa{+}3\) \\
7 & \(2\) & \(13\)
& \([26,12,10]^H_4\)
& \([[26,12,10;14]]_2\)
& \([[27,10,10;15]]_2\)
& \(n{-}1,\,\kappa{+}2,\,c{-}1\) \\
8 & \(2\) & \(13\)
& \([26,13,9]^H_4\)
& \([[26,13,9;13]]_2\)
& \([[27,12,9;13]]_2\)
& \(n{-}1,\,\kappa{+}1\) \\
9 & \(2\) & \(13\)
& \([26,12,10]^H_4\)
& \([[26,12,10;14]]_2\)
& \([[28,11,10;15]]_2\)
& \(n{-}2,\,\kappa{+}1,\,c{-}1\) \\
10 & \(3\) & \(9\)
& \([27,9,13]^H_4\)
& \([[27,9,13;18]]_2\)
& \([[29,9,12;18]]_2\)
& \(n{-}2,\,\delta{+}1\) \\
11 & \(2\) & \(13\)
& \([26,12,10]^H_4\)
& \([[26,12,10;14]]_2\)
& \([[30,11,10;15]]_2\)
& \(n{-}4,\,\kappa{+}1,\,c{-}1\) \\\hline
\end{tabular}%
}
\caption{Comparison with \cite{LELLS26} (lengths \(n \le 30\)).}
\label{tab:lells1}
\end{table}
\begin{table}[H]
\centering
\scriptsize
\renewcommand{\arraystretch}{1.12}
\resizebox{\textwidth}{!}{%
\begin{tabular}{c c c c c c c}
\hline
No. & Index \(\ell\) & \(m\) & Hermitian LCD code
& EAQECC obtained
& Code in~\cite{LELLS26}
& Improvement \\
\hline
12 & \(2\) & \(15\)
& \([30,15,10]^H_4\)
& \([[30,15,10;15]]_2\)
& \([[31,14,10;15]]_2\)
& \(n{-}1,\,\kappa{+}1\) \\
13 & \(3\) & \(11\)
& \([33,10,16]^H_4\)
& \([[33,10,16;23]]_2\)
& \([[33,9,15;24]]_2\)
& \(\kappa{+}1,\,\delta{+}1,\,c{-}1\) \\
14 & \(2\) & \(15\)
& \([30,15,10]^H_4\)
& \([[30,15,10;15]]_2\)
& \([[33,14,10;15]]_2\)
& \(n{-}3,\,\kappa{+}1\) \\
15 & \(2\) & \(15\)
& \([30,15,10]^H_4\)
& \([[30,15,10;15]]_2\)
& \([[33,15,10;16]]_2\)
& \(n{-}3,\,c{-}1\) \\
16 & \(3\) & \(11\)
& \([33,6,20]^H_4\)
& \([[33,6,20;27]]_2\)
& \([[34,6,19;28]]_2\)
& \(n{-}1,\,\delta{+}1,\,c{-}1\) \\
17 & \(3\) & \(11\)
& \([33,11,14]^H_4\)
& \([[33,11,14;22]]_2\)
& \([[34,10,14;22]]_2\)
& \(n{-}1,\,\kappa{+}1\) \\
18 & \(2\) & \(17\)
& \([34,8,18]^H_4\)
& \([[34,8,18;26]]_2\)
& \([[37,8,18;27]]_2\)
& \(n{-}3,\,c{-}1\) \\
19 & \(4\) & \(9\)
& \([36,9,18]^H_4\)
& \([[36,9,18;27]]_2\)
& \([[37,9,18;28]]_2\)
& \(n{-}1,\,c{-}1\) \\
20 & \(2\) & \(17\)
& \([34,17,11]^H_4\)
& \([[34,17,11;17]]_2\)
& \([[37,16,11;17]]_2\)
& \(n{-}3,\,\kappa{+}1\) \\
21 & \(4\) & \(9\)
& \([36,8,19]^H_4\)
& \([[36,8,19;28]]_2\)
& \([[38,8,19;30]]_2\)
& \(n{-}2,\,c{-}2\) \\
22 & \(4\) & \(9\)
& \([36,9,18]^H_4\)
& \([[36,9,18;27]]_2\)
& \([[38,9,18;29]]_2\)
& \(n{-}2,\,c{-}2\) \\
23 & \(2\) & \(19\)
& \([38,10,19]^H_4\)
& \([[38,10,19;28]]_2\)
& \([[38,10,17;28]]_2\)
& \(\delta{+}2\) \\
24 & \(2\) & \(17\)
& \([34,17,11]^H_4\)
& \([[34,17,11;17]]_2\)
& \([[39,17,11;18]]_2\)
& \(n{-}5,\,c{-}1\) \\\hline
\end{tabular}%
}
\caption{Comparison with \cite{LELLS26} (lengths \(n > 30\)).}
 \label{tab:lells2}
\end{table}

\subsection{Comparison  with Grassl's table \cite{grassl}}
 
The Hermitian LCD codes obtained in Section~4 yield $66$
maximal-entanglement EAQECCs with parameter sets not recorded in Grassl's table. 
We present these codes in Tables~\ref{tab:grasslindex2},
\ref{tab:grasslindex34}, and~\ref{tab:grasslindex567}, 
separating them according to their indices.
Among them, the code
\[
[[33,10,16;23]]_2
\]
improves the corresponding entry
\([[33,10,14\text{--}24;23]]_2\) in Grassl's table \cite{grassl}.
The remaining $65$ codes should be viewed primarily as new parameter sets, rather than as improvements over \cite{grassl}.

We compare each code with an entry in Grassl's table
having the same length $n$ and quantum dimension $\kappa$. 
For an entry
$ [[n,\kappa,d_{\mathrm L}\text{--}d_{\mathrm U};c_{\mathrm G}]]_2 $
in Grassl's table and an EAQECC $ [[n,\kappa,d;c]]_2 $
obtained from a Hermitian LCD code, we let
\[
\Delta d_{\mathrm L}:=d-d_{\mathrm L},
\qquad
\Delta c:=c-c_{\mathrm G}.
\]
Thus, \(\Delta d_{\mathrm L}\) records the increase over the lower
endpoint in Grassl's table, while \(\Delta c\) records the change in the number of required ebits. 
These two quantities should be read together.
In particular, a positive value of \(\Delta d_{\mathrm L}\) together with  a positive value of \(\Delta c\) represents a
distance-entanglement trade-off, rather than an improvement in the
sense of Definition~\ref{partial-order}.

\setcounter{grasslcode}{0}

\begin{table}[H]
\centering
\scriptsize
\renewcommand{\arraystretch}{1.12}
\resizebox{\textwidth}{!}{%
\begin{tabular}{c c c c c c c c}
\hline
No. & Index \(\ell\) & \(m\) & Hermitian LCD code
& EAQECC obtained
& Grassl table
& \(\Delta d_{\mathrm L}\)
& \(\Delta c\) \\
\hline
\nextgrasslno & \(2\) & \(13\)
& \([26,12,10]^H_4\)
& \([[26,12,10;14]]_2\)
& \([[26,12,7\text{--}11;7]]_2\)
& \(+3\) & \(+7\) \\

\nextgrasslno & \(2\) & \(13\)
& \([26,13,9]^H_4\)
& \([[26,13,9;13]]_2\)
& \([[26,13,7\text{--}11;7]]_2\)
& \(+2\) & \(+6\) \\

\nextgrasslno & \(2\) & \(15\)
& \([30,15,10]^H_4\)
& \([[30,15,10;15]]_2\)
& \([[30,15,8\text{--}15;13]]_2\)
& \(+2\) & \(+2\) \\

\nextgrasslno & \(2\) & \(17\)
& \([34,8,18]^H_4\)
& \([[34,8,18;26]]_2\)
& \([[34,8,15\text{--}26;23]]_2\)
& \(+3\) & \(+3\) \\

\nextgrasslno & \(2\) & \(17\)
& \([34,9,17]^H_4\)
& \([[34,9,17;25]]_2\)
& \([[34,9,15\text{--}25;23]]_2\)
& \(+2\) & \(+2\) \\

\nextgrasslno & \(2\) & \(17\)
& \([34,16,11]^H_4\)
& \([[34,16,11;18]]_2\)
& \([[34,16,9\text{--}16;13]]_2\)
& \(+2\) & \(+5\) \\

\nextgrasslno & \(2\) & \(17\)
& \([34,17,11]^H_4\)
& \([[34,17,11;17]]_2\)
& \([[34,17,8\text{--}16;13]]_2\)
& \(+3\) & \(+4\) \\

\nextgrasslno & \(2\) & \(19\)
& \([38,9,20]^H_4\)
& \([[38,9,20;29]]_2\)
& \([[38,9,17\text{--}28;23]]_2\)
& \(+3\) & \(+6\) \\

\nextgrasslno & \(2\) & \(19\)
& \([38,10,19]^H_4\)
& \([[38,10,19;28]]_2\)
& \([[38,10,16\text{--}27;23]]_2\)
& \(+3\) & \(+5\) \\

\nextgrasslno & \(2\) & \(19\)
& \([38,19,12]^H_4\)
& \([[38,19,12;19]]_2\)
& \([[38,19,9\text{--}19;17]]_2\)
& \(+3\) & \(+2\) \\

\nextgrasslno & \(2\) & \(21\)
& \([42,8,20]^H_4\)
& \([[42,8,20;34]]_2\)
& \([[42,8,17\text{--}33;23]]_2\)
& \(+3\) & \(+11\) \\

\nextgrasslno & \(2\) & \(21\)
& \([42,12,18]^H_4\)
& \([[42,12,18;30]]_2\)
& \([[42,12,17\text{--}30;26]]_2\)
& \(+1\) & \(+4\) \\

\nextgrasslno & \(2\) & \(23\)
& \([46,22,14]^H_4\)
& \([[46,22,14;24]]_2\)
& \([[46,22,13\text{--}23;21]]_2\)
& \(+1\) & \(+3\) \\

\nextgrasslno & \(2\) & \(25\)
& \([50,4,30]^H_4\)
& \([[50,4,30;46]]_2\)
& \([[50,4,23\text{--}43;4]]_2\)
& \(+7\) & \(+42\) \\

\nextgrasslno & \(2\) & \(25\)
& \([50,5,25]^H_4\)
& \([[50,5,25;45]]_2\)
& \([[50,5,20\text{--}41;5]]_2\)
& \(+5\) & \(+40\) \\

\nextgrasslno & \(2\) & \(25\)
& \([50,20,16]^H_4\)
& \([[50,20,16;30]]_2\)
& \([[50,20,14\text{--}26;21]]_2\)
& \(+2\) & \(+9\) \\

\nextgrasslno & \(2\) & \(25\)
& \([50,21,15]^H_4\)
& \([[50,21,15;29]]_2\)
& \([[50,21,14\text{--}26;21]]_2\)
& \(+1\) & \(+8\) \\

\nextgrasslno & \(2\) & \(25\)
& \([50,24,14]^H_4\)
& \([[50,24,14;26]]_2\)
& \([[50,24,12\text{--}24;20]]_2\)
& \(+2\) & \(+6\) \\

\nextgrasslno & \(2\) & \(25\)
& \([50,25,13]^H_4\)
& \([[50,25,13;25]]_2\)
& \([[50,25,11\text{--}23;19]]_2\)
& \(+2\) & \(+6\) \\
\hline
\end{tabular}%
}
\caption{Comparison with the Grassl EAQECC table for
index \(\ell=2\).}
\label{tab:grasslindex2}
\end{table}

\begin{table}[H]
\centering
\scriptsize
\renewcommand{\arraystretch}{1.12}
\resizebox{\textwidth}{!}{%
\begin{tabular}{c c c c c c c c}
\hline
No. & Index \(\ell\) & \(m\) & Hermitian LCD code
& EAQECC obtained
& Grassl table
& \(\Delta d_{\mathrm L}\)
& \(\Delta c\) \\
\hline
\nextgrasslno & \(3\) & \(7\)
& \([21,7,11]^H_4\)
& \([[21,7,11;14]]_2\)
& \([[21,7,9\text{--}14;11]]_2\)
& \(+2\) & \(+3\) \\

\nextgrasslno & \(3\) & \(9\)
& \([27,7,15]^H_4\)
& \([[27,7,15;20]]_2\)
& \([[27,7,9\text{--}18;11]]_2\)
& \(+6\) & \(+9\) \\

\nextgrasslno & \(3\) & \(9\)
& \([27,9,13]^H_4\)
& \([[27,9,13;18]]_2\)
& \([[27,9,9\text{--}18;16]]_2\)
& \(+4\) & \(+2\) \\

\nextgrasslno & \(3\) & \(11\)
& \([33,5,21]^H_4\)
& \([[33,5,21;28]]_2\)
& \([[33,5,14\text{--}28;23]]_2\)
& \(+7\) & \(+5\) \\

\nextgrasslno & \(3\) & \(11\)
& \([33,6,20]^H_4\)
& \([[33,6,20;27]]_2\)
& \([[33,6,14\text{--}27;23]]_2\)
& \(+6\) & \(+4\) \\

\nextgrasslno & \(3\) & \(11\)
& \([33,10,16]^H_4\)
& \([[33,10,16;23]]_2\)
& \([[33,10,14\text{--}24;23]]_2\)
& \(+2\) & \(0\) \\

\nextgrasslno & \(3\) & \(11\)
& \([33,11,14]^H_4\)
& \([[33,11,14;22]]_2\)
& \([[33,11,11\text{--}21;16]]_2\)
& \(+3\) & \(+6\) \\

\nextgrasslno & \(4\) & \(9\)
& \([36,4,23]^H_4\)
& \([[36,4,23;32]]_2\)
& \([[36,4,17\text{--}32;23]]_2\)
& \(+6\) & \(+9\) \\

\nextgrasslno & \(4\) & \(9\)
& \([36,5,22]^H_4\)
& \([[36,5,22;31]]_2\)
& \([[36,5,17\text{--}31;23]]_2\)
& \(+5\) & \(+8\) \\

\nextgrasslno & \(4\) & \(9\)
& \([36,6,21]^H_4\)
& \([[36,6,21;30]]_2\)
& \([[36,6,17\text{--}30;23]]_2\)
& \(+4\) & \(+7\) \\

\nextgrasslno & \(4\) & \(9\)
& \([36,7,19]^H_4\)
& \([[36,7,19;29]]_2\)
& \([[36,7,17\text{--}29;23]]_2\)
& \(+2\) & \(+6\) \\

\nextgrasslno & \(4\) & \(9\)
& \([36,8,19]^H_4\)
& \([[36,8,19;28]]_2\)
& \([[36,8,17\text{--}28;23]]_2\)
& \(+2\) & \(+5\) \\

\nextgrasslno & \(4\) & \(9\)
& \([36,9,18]^H_4\)
& \([[36,9,18;27]]_2\)
& \([[36,9,17\text{--}27;23]]_2\)
& \(+1\) & \(+4\) \\

\nextgrasslno & \(4\) & \(11\)
& \([44,5,29]^H_4\)
& \([[44,5,29;39]]_2\)
& \([[44,5,17\text{--}35;5]]_2\)
& \(+12\) & \(+34\) \\

\nextgrasslno & \(4\) & \(11\)
& \([44,6,28]^H_4\)
& \([[44,6,28;38]]_2\)
& \([[44,6,17\text{--}37;23]]_2\)
& \(+11\) & \(+15\) \\

\nextgrasslno & \(4\) & \(11\)
& \([44,10,21]^H_4\)
& \([[44,10,21;34]]_2\)
& \([[44,10,17\text{--}33;26]]_2\)
& \(+4\) & \(+8\) \\

\nextgrasslno & \(4\) & \(11\)
& \([44,11,20]^H_4\)
& \([[44,11,20;33]]_2\)
& \([[44,11,17\text{--}32;26]]_2\)
& \(+3\) & \(+7\) \\
\hline
\end{tabular}%
}
\caption{Comparison with the Grassl EAQECC table for
indices \(\ell=3,4\).}
\label{tab:grasslindex34}
\end{table}

\begin{table}[H]
\centering
\scriptsize
\renewcommand{\arraystretch}{1.12}
\resizebox{\textwidth}{!}{%
\begin{tabular}{c c c c c c c c}
\hline
No. & Index \(\ell\) & \(m\) & Hermitian LCD code
& EAQECC obtained
& Grassl table
& \(\Delta d_{\mathrm L}\)
& \(\Delta c\) \\
\hline
\nextgrasslno & \(5\) & \(7\)
& \([35,6,21]^H_4\)
& \([[35,6,21;29]]_2\)
& \([[35,6,16\text{--}29;23]]_2\)
& \(+5\) & \(+6\) \\

\nextgrasslno & \(5\) & \(7\)
& \([35,7,19]^H_4\)
& \([[35,7,19;28]]_2\)
& \([[35,7,16\text{--}28;23]]_2\)
& \(+3\) & \(+5\) \\

\nextgrasslno & \(5\) & \(9\)
& \([45,4,30]^H_4\)
& \([[45,4,30;41]]_2\)
& \([[45,4,20\text{--}38;4]]_2\)
& \(+10\) & \(+37\) \\

\nextgrasslno & \(5\) & \(9\)
& \([45,5,29]^H_4\)
& \([[45,5,29;40]]_2\)
& \([[45,5,18\text{--}36;5]]_2\)
& \(+11\) & \(+35\) \\

\nextgrasslno & \(5\) & \(9\)
& \([45,6,28]^H_4\)
& \([[45,6,28;39]]_2\)
& \([[45,6,17\text{--}38;23]]_2\)
& \(+11\) & \(+16\) \\

\nextgrasslno & \(5\) & \(9\)
& \([45,7,26]^H_4\)
& \([[45,7,26;38]]_2\)
& \([[45,7,17\text{--}37;23]]_2\)
& \(+9\) & \(+15\) \\

\nextgrasslno & \(5\) & \(9\)
& \([45,8,25]^H_4\)
& \([[45,8,25;37]]_2\)
& \([[45,8,17\text{--}36;23]]_2\)
& \(+8\) & \(+14\) \\

\nextgrasslno & \(5\) & \(9\)
& \([45,9,23]^H_4\)
& \([[45,9,23;36]]_2\)
& \([[45,9,17\text{--}34;23]]_2\)
& \(+6\) & \(+13\) \\

\nextgrasslno & \(5\) & \(11\)
& \([55,5,37]^H_4\)
& \([[55,5,37;50]]_2\)
& \([[55,5,23\text{--}46;5]]_2\)
& \(+14\) & \(+45\) \\

\nextgrasslno & \(5\) & \(11\)
& \([55,6,34]^H_4\)
& \([[55,6,34;49]]_2\)
& \([[55,6,21\text{--}44;6]]_2\)
& \(+13\) & \(+43\) \\

\nextgrasslno & \(5\) & \(11\)
& \([55,10,29]^H_4\)
& \([[55,10,29;45]]_2\)
& \([[55,10,17\text{--}43;26]]_2\)
& \(+12\) & \(+19\) \\

\nextgrasslno & \(5\) & \(11\)
& \([55,11,27]^H_4\)
& \([[55,11,27;44]]_2\)
& \([[55,11,17\text{--}41;26]]_2\)
& \(+10\) & \(+18\) \\

\nextgrasslno & \(6\) & \(7\)
& \([42,6,25]^H_4\)
& \([[42,6,25;36]]_2\)
& \([[42,6,17\text{--}35;23]]_2\)
& \(+8\) & \(+13\) \\

\nextgrasslno & \(6\) & \(7\)
& \([42,7,23]^H_4\)
& \([[42,7,23;35]]_2\)
& \([[42,7,17\text{--}34;23]]_2\)
& \(+6\) & \(+12\) \\

\nextgrasslno & \(6\) & \(9\)
& \([54,4,36]^H_4\)
& \([[54,4,36;50]]_2\)
& \([[54,4,24\text{--}47;4]]_2\)
& \(+12\) & \(+46\) \\

\nextgrasslno & \(6\) & \(9\)
& \([54,5,34]^H_4\)
& \([[54,5,34;49]]_2\)
& \([[54,5,22\text{--}45;5]]_2\)
& \(+12\) & \(+44\) \\

\nextgrasslno & \(6\) & \(9\)
& \([54,6,34]^H_4\)
& \([[54,6,34;48]]_2\)
& \([[54,6,20\text{--}43;6]]_2\)
& \(+14\) & \(+42\) \\

\nextgrasslno & \(6\) & \(9\)
& \([54,7,32]^H_4\)
& \([[54,7,32;47]]_2\)
& \([[54,7,20\text{--}41;7]]_2\)
& \(+12\) & \(+40\) \\

\nextgrasslno & \(6\) & \(9\)
& \([54,8,31]^H_4\)
& \([[54,8,31;46]]_2\)
& \([[54,8,18\text{--}39;8]]_2\)
& \(+13\) & \(+38\) \\

\nextgrasslno & \(6\) & \(9\)
& \([54,9,27]^H_4\)
& \([[54,9,27;45]]_2\)
& \([[54,9,17\text{--}42;23]]_2\)
& \(+10\) & \(+22\) \\

\nextgrasslno & \(7\) & \(5\)
& \([35,4,24]^H_4\)
& \([[35,4,24;31]]_2\)
& \([[35,4,16\text{--}31;23]]_2\)
& \(+8\) & \(+8\) \\

\nextgrasslno & \(7\) & \(5\)
& \([35,5,21]^H_4\)
& \([[35,5,21;30]]_2\)
& \([[35,5,16\text{--}30;23]]_2\)
& \(+5\) & \(+7\) \\

\nextgrasslno & \(7\) & \(7\)
& \([49,6,30]^H_4\)
& \([[49,6,30;43]]_2\)
& \([[49,6,18\text{--}38;6]]_2\)
& \(+12\) & \(+37\) \\

\nextgrasslno & \(7\) & \(7\)
& \([49,7,28]^H_4\)
& \([[49,7,28;42]]_2\)
& \([[49,7,17\text{--}40;20]]_2\)
& \(+11\) & \(+22\) \\

\nextgrasslno & \(7\) & \(9\)
& \([63,4,42]^H_4\)
& \([[63,4,42;59]]_2\)
& \([[63,4,28\text{--}56;4]]_2\)
& \(+14\) & \(+55\) \\

\nextgrasslno & \(7\) & \(9\)
& \([63,5,42]^H_4\)
& \([[63,5,42;58]]_2\)
& \([[63,5,27\text{--}54;5]]_2\)
& \(+15\) & \(+53\) \\

\nextgrasslno & \(7\) & \(9\)
& \([63,6,40]^H_4\)
& \([[63,6,40;57]]_2\)
& \([[63,6,24\text{--}52;6]]_2\)
& \(+16\) & \(+51\) \\

\nextgrasslno & \(7\) & \(9\)
& \([63,7,38]^H_4\)
& \([[63,7,38;56]]_2\)
& \([[63,7,24\text{--}50;7]]_2\)
& \(+14\) & \(+49\) \\

\nextgrasslno & \(7\) & \(9\)
& \([63,8,36]^H_4\)
& \([[63,8,36;55]]_2\)
& \([[63,8,23\text{--}48;8]]_2\)
& \(+13\) & \(+47\) \\

\nextgrasslno & \(7\) & \(9\)
& \([63,9,35]^H_4\)
& \([[63,9,35;54]]_2\)
& \([[63,9,21\text{--}46;9]]_2\)
& \(+14\) & \(+45\) \\
\hline
\end{tabular}%
}
\caption{Comparison with the Grassl EAQECC table for
indices \(\ell=5,6,7\).}
\label{tab:grasslindex567}
\end{table}

\section{Conclusion}

In this work, we have introduced block-support lower bounds for the
minimum distance of one-generator quasi-cyclic codes of arbitrary
index. These bounds are obtained from block-punctured and
block-shortened codes and apply to both balanced and unbalanced
one-generator quasi-cyclic codes.  In the balanced case, we have obtained a
nondecreasing chain of bounds whose first term recovers the Jensen bound.

We have also extended the hull-dimension results of
Aliabadi, G\"uneri and Kalayc{\i}~\cite{guneri2023} by giving a
polynomial description of the Hermitian hull itself and a necessary
and sufficient condition for a one-generator quasi-cyclic code to be
Hermitian LCD.
Using this criterion together with the block-support bounds as screening tools, we conducted a computer search for quaternary Hermitian LCD one-generator quasi-cyclic codes of lengths up to \(63\). 
We highlight the following outcomes:
\begin{itemize}
    \item six new Hermitian LCD codes
     \[
    [21,7,11]^H_4,\quad
    [26,12,10]^H_4,\quad
    [26,13,9]^H_4,\quad
    [27,7,15]^H_4,\quad
    [27,9,13]^H_4,\quad
    [30,15,10]^H_4,
    \]
    which improve the corresponding lower bounds in the Araya-Harada table. In addition, the code \([21,7,11]^H_4\) establishes
    \[
    d^H_4(21,7)=11;
    \]

    \item thirteen one-generator quasi-cyclic Hermitian LCD
    codes attaining exact values recorded in the Araya-Harada table;

    \item the optimal Hermitian LCD code \([35,4,24]^H_4\), which
    establishes
    \[
    d^H_4(35,4)=d_4(35,4)=24,
    \]
    together with ten further codes of length greater than \(30\)
    whose minimum distances equal those of the best-known
    unrestricted quaternary linear codes of the same length and
    dimension recorded in Grassl's table \cite{grassl}.
\end{itemize}

The Hermitian LCD codes obtained here also yield
maximal-entanglement EAQECCs. 
We have improved $24$ entries in \cite{LELLS26} with $13$ distinct EAQECCs constructed in this paper. 
We have also obtained $66$ maximal-entanglement EAQECCs with parameter sets not recorded in   Grassl's table \cite{grassl}. 
 In particular,
\[
[[33,10,16;23]]_2
\]
improves the corresponding entry in \cite{grassl}. 

A natural direction for future work is to extend the block-support
approach to multi-generator quasi-cyclic codes and to one-generator
quasi-twisted codes. 
It would also be worthwhile to compare the block-support bounds
systematically with other known minimum-distance bounds for
quasi-cyclic codes, particularly the Lally bound, the G\"uneri and \"Ozbudak bound, spectral bounds and
their recent refinements.

\section*{Acknowledgment} 

This work was supported by UAEU  grant G00004233.

\newpage

\appendix
\renewcommand{\thetable}{A\arabic{table}}
\setcounter{table}{0}
\section*{Appendix}

\setcounter{appendixcode}{0}

The tables in this appendix record the generator polynomials of the
quaternary Hermitian LCD one-generator quasi-cyclic codes obtained in our search, separating according to their indices. 
For a code of length $n=m\ell$ and index $\ell$, each row represents
\[
C=\left\langle
\bigl(g_1(x),g_2(x),\ldots,g_\ell(x)\bigr)
\right\rangle_R
\subseteq R^\ell,
\qquad
R=\mathbb F_4[x]/\langle x^m-1\rangle,
\]
and hence $C$ has length $n=m\ell$. We write
$\mathbb F_4=\{0,1,\omega,\omega^2\}$, where
$\omega^2+\omega+1=0$, and give the generator polynomials in
factorized form. 

The Remark column compares the minimum distance $d$ of $C$ with
that of an unrestricted quaternary linear code having the same length
and dimension. Here, $\mathrm{OpLC}$ denotes an optimal linear code,
so that
\[
d_{\mathrm{OpLC}}=d_4(n,k),
\]
whereas $\mathrm{BKLC}$ denotes a best-known linear code recorded in
Grassl's table~\cite{grassl}, whose optimality may not be known.

\begin{table}[H]
\centering
\fontsize{9.0}{9.8}\selectfont
\renewcommand{\arraystretch}{1.04}
\setlength{\tabcolsep}{1.2pt}

\begin{tabular}{
|>{\centering\arraybackslash}m{0.55cm}|
c|
>{\centering\arraybackslash}m{2.05cm}|
>{\raggedright\arraybackslash}m{4.65cm}|
>{\raggedright\arraybackslash}m{5.35cm}|
>{\centering\arraybackslash}m{1.75cm}|
}
\hline
\noalign{\hrule height 1.2pt}
No. & $m$ & $C$ & $g_{1}(x)$ & $g_{2}(x)$ & Remark \\
\noalign{\hrule height 1.2pt}

\nextcodeno
& 5
& $[10,4,6]^H_4$
& $x+1$
& $(x+1)^2(x+\omega)$
& $d=d_{\mathrm{OpLC}}$ \\
\cline{1-1}\cline{3-6}

\nextcodeno
&
& $[10,5,5]^H_4$
& $1$
& $(x+1)(x+\omega)^2$
& $d=d_{\mathrm{OpLC}}$ \\
\noalign{\hrule height 1.2pt}
\noalign{\hrule height 0.5pt}

\nextcodeno
& 7
& $[14,6,7]^H_4$
& $x+1$
& $(x+1)
  (x^4+\omega x^3+\omega^2x^2+\omega^2x+1)$
& $d=d_{\mathrm{OpLC}}$ \\
\cline{1-1}\cline{3-6}

\nextcodeno
&
& $[14,7,6]^H_4$
& $1$
& $(x+1)
  (x^3+\omega^2x^2+x+\omega)$
& $d=d_{\mathrm{OpLC}}$ \\
\noalign{\hrule height 1.2pt}
\noalign{\hrule height 0.5pt}

\nextcodeno
& 9
& $[18,6,9]^H_4$
& $(x+1)(x+\omega)(x+\omega^2)$
& $(x+1)(x+\omega)(x+\omega^2)
  (x^3+\omega x^2+\omega x+\omega)$
& $d=d_{\mathrm{OpLC}}-1$ \\
\cline{1-1}\cline{3-6}

\nextcodeno
&
& $[18,8,8]^H_4$
& $(x+1)(x+\omega)$
& $(x+1)(x+\omega^2)
  (x^3+x^2+x+\omega^2)$
& $d=d_{\mathrm{OpLC}}$ \\
\noalign{\hrule height 1.2pt}
\noalign{\hrule height 0.5pt}

\nextcodeno
& 11
& $[22,5,13]^H_4$
& $(x+1)
  (x^5+\omega x^4+x^3+x^2+\omega^2x+1)$
& $(x+1)(x+\omega)(x+\omega^2)
  (x^5+\omega x^4+x^3+x^2+\omega^2x+1)$
& $d=d_{\mathrm{OpLC}}-1$ \\
\cline{1-1}\cline{3-6}

\nextcodeno
&
& $[22,10,9]^H_4$
& $x+1$
& $(x+1)
  (x^6+\omega x^5+x^4+\omega x^3+
   \omega^2x^2+\omega^2x+1)$
& $d=d_{\mathrm{OpLC}}$ \\
\cline{1-1}\cline{3-6}

\nextcodeno
&
& $[22,11,8]^H_4$
& $1$
& $(x+1)^3(x^3+\omega^2x^2+1)$
& $d=d_{\mathrm{BKLC}}$ \\
\noalign{\hrule height 1.2pt}
\noalign{\hrule height 0.5pt}

\nextcodeno
& 13
& $[26,12,10]^H_4$
& $x+1$
& $(x+1)(x+\omega)(x+\omega^2)
  (x^2+\omega^2x+\omega^2)
  (x^3+\omega x+1)$
& $d=d_{\mathrm{OpLC}}-1$ \\
\cline{1-1}\cline{3-6}

\nextcodeno
&
& $[26,13,9]^H_4$
& $1$
& $(x+1)
  (x^7+\omega x^6+\omega^2x^5+\omega x^4+
   x^3+\omega x^2+\omega^2x+\omega^2)$
& $d=d_{\mathrm{OpLC}}-1$ \\
\noalign{\hrule height 1.2pt}
\noalign{\hrule height 0.5pt}

\nextcodeno
& 15
& $[30,8,15]^H_4$
& $(x+1)(x+\omega)(x+\omega^2)
  (x^2+x+\omega)(x^2+\omega x+\omega)$
& $(x+1)(x+\omega)(x+\omega^2)
  (x^2+x+\omega)(x^2+\omega x+\omega)
  (x^3+\omega^2x^2+\omega x+\omega)$
& $d=d_{\mathrm{BKLC}}-1$ \\
\cline{1-1}\cline{3-6}

\nextcodeno
&
& $[30,9,14]^H_4$
& $(x+1)(x+\omega)
  (x^2+x+\omega)(x^2+\omega x+\omega)$
& $(x+1)(x+\omega)^2(x+\omega^2)
  (x^2+x+\omega)(x^2+\omega x+\omega)
  (x^3+\omega)$
& $d=d_{\mathrm{BKLC}}-2$ \\
\cline{1-1}\cline{3-6}

\nextcodeno
&
& $[30,12,12]^H_4$
& $(x+1)(x+\omega)(x+\omega^2)$
& $(x+1)(x+\omega)(x+\omega^2)
  (x^7+\omega^2x^6+\omega^2x^5+
   \omega x^3+x^2+x+1)$
& $d=d_{\mathrm{BKLC}}$ \\
\cline{1-1}\cline{3-6}

\nextcodeno
&
& $[30,13,11]^H_4$
& $(x+1)(x+\omega)$
& $(x+1)(x+\omega)(x+\omega^2)
  (x^6+x^4+\omega^2x^3+\omega^2x^2+
   \omega x+\omega^2)$
& $d=d_{\mathrm{BKLC}}-1$ \\
\cline{1-1}\cline{3-6}

\nextcodeno
&
& $[30,14,10]^H_4$
& $x+1$
& $(x+1)^2(x+\omega)(x+\omega^2)
  (x^4+\omega x^3+x^2+\omega x+1)$
& $d=d_{\mathrm{OpLC}}-2$ \\
\cline{1-1}\cline{3-6}

\nextcodeno
&
& $[30,15,10]^H_4$
& $(x+1)(x+\omega)$
& $(x+\omega^2)
  (x^4+\omega^2x^3+\omega^2x^2+\omega x+\omega)
  (x^9+\omega^2x^7+x^4+x^3+
   \omega x^2+\omega^2x+1)$
& $d=d_{\mathrm{OpLC}}-2$ \\
\noalign{\hrule height 1.2pt}
\noalign{\hrule height 0.5pt}

\nextcodeno
& 17
& $[34,8,18]^H_4$
& $(x+1)
  (x^4+x^3+\omega x^2+x+1)
  (x^4+x^3+\omega^2x^2+x+1)$
& $(x+1)^3(x+\omega)(x+\omega^2)
  (x^2+\omega^2x+1)
  (x^4+x^3+\omega x^2+x+1)
  (x^4+x^3+\omega^2x^2+x+1)$
& $d=d_{\mathrm{BKLC}}-1$ \\
\cline{1-1}\cline{3-6}

\nextcodeno
&
& $[34,9,17]^H_4$
& $(x^4+x^3+\omega x^2+x+1)
  (x^4+x^3+\omega^2x^2+x+1)$
& $(x+1)^2(x+\omega)(x+\omega^2)
  (x^2+\omega^2x+1)
  (x^4+x^3+\omega x^2+x+1)
  (x^4+x^3+\omega^2x^2+x+1)$
& $d=d_{\mathrm{BKLC}}-1$ \\
\cline{1-1}\cline{3-6}

\nextcodeno
&
& $[34,16,11]^H_4$
& $x+1$
& $(x+1)^3
  (x^9+\omega^2x^8+\omega^2x^5+x^4+
   \omega^2x^3+\omega x^2+\omega^2)$
& $d=d_{\mathrm{BKLC}}-1$ \\
\cline{1-1}\cline{3-6}

\nextcodeno
&
& $[34,17,11]^H_4$
& $1$
& $(x+1)^2
  (x^9+\omega^2x^8+\omega^2x^5+x^4+
   \omega^2x^3+\omega x^2+\omega^2)$
& $d=d_{\mathrm{BKLC}}$ \\
\noalign{\hrule height 1.2pt}
\noalign{\hrule height 0.5pt}

\nextcodeno
& 19
& $[38,9,20]^H_4$
& $(x+1)
  (x^9+\omega x^8+\omega x^6+\omega x^5+
   \omega^2x^4+\omega^2x^3+\omega^2x+1)$
& $(x+1)^2(x+\omega^2)^2
  (x^9+\omega x^8+\omega x^6+\omega x^5+
   \omega^2x^4+\omega^2x^3+\omega^2x+1)$
& $d=d_{\mathrm{BKLC}}$ \\
\cline{1-1}\cline{3-6}

\nextcodeno
&
& $[38,10,19]^H_4$
& $x^9+\omega x^8+\omega x^6+\omega x^5+
   \omega^2x^4+\omega^2x^3+\omega^2x+1$
& $(x+1)
  (x^5+x^4+\omega^2x^3+x^2+x+1)
  (x^9+\omega x^8+\omega x^6+\omega x^5+
   \omega^2x^4+\omega^2x^3+\omega^2x+1)$
& $d=d_{\mathrm{BKLC}}$ \\
\cline{1-1}\cline{3-6}

\nextcodeno
&
& $[38,18,12]^H_4$
& $x+1$
& $(x+1)^2(x+\omega)(x+\omega^2)
  (x^4+\omega x^2+\omega x+\omega^2)
  (x^6+\omega^2x^5+x^4+\omega x^2+
   \omega x+\omega)$
& $d=d_{\mathrm{BKLC}}$ \\
\cline{1-1}\cline{3-6}

\nextcodeno
&
& $[38,19,12]^H_4$
& $1$
& $(x+1)(x+\omega)(x+\omega^2)
  (x^4+\omega x^2+\omega x+\omega^2)
  (x^6+\omega^2x^5+x^4+\omega x^2+
   \omega x+\omega)$
& $d=d_{\mathrm{BKLC}}$ \\

\noalign{\hrule height 1.2pt}
\noalign{\hrule height 0.5pt}
\end{tabular}

\caption{Hermitian LCD one-generator quasi-cyclic codes of index $2$.}
\label{tab:appendix-index2a}
\end{table}

\newpage

\begin{table}[H]
\centering
\fontsize{9.0}{9.8}\selectfont
\renewcommand{\arraystretch}{1.04}
\setlength{\tabcolsep}{1.2pt}

\begin{tabular}{
|>{\centering\arraybackslash}m{0.55cm}|
c|
>{\centering\arraybackslash}m{2.05cm}|
>{\raggedright\arraybackslash}m{4.65cm}|
>{\raggedright\arraybackslash}m{5.35cm}|
>{\centering\arraybackslash}m{1.75cm}|
}
\hline
\noalign{\hrule height 1.2pt}
No. & $m$ & $C$ & $g_{1}(x)$ & $g_{2}(x)$ & Remark \\
\noalign{\hrule height 1.2pt}

\nextcodeno
& 21
& $[42,6,21]^H_4$
& $(x+1)\,(x+\omega)\,(x+\omega^2)\,
  (x^3+x+1)\,(x^3+\omega^2x+1)\,
  (x^3+x^2+1)\,(x^3+\omega x^2+1)$
& $(x+1)\,(x+\omega)^3\,(x+\omega^2)\,
  (x^2+\omega x+\omega)\,
  (x^3+x+1)\,(x^3+\omega^2x+1)\,
  (x^3+x^2+1)\,(x^3+\omega x^2+1)$
& $d=d_{\mathrm{BKLC}}-7$ \\
\cline{1-1}\cline{3-6}

\nextcodeno
&
& $[42,7,20]^H_4$
& $(x+1)\,(x+\omega^2)\,
  (x^3+x+1)\,(x^3+\omega^2x+1)\,
  (x^3+x^2+1)\,(x^3+\omega x^2+1)$
& $(x+1)\,(x+\omega)^2\,(x+\omega^2)\,
  (x^2+\omega x+\omega)\,
  (x^3+x+1)\,(x^3+\omega^2x+1)\,
  (x^3+x^2+1)\,(x^3+\omega x^2+1)$
& $d=d_{\mathrm{BKLC}}-6$ \\
\cline{1-1}\cline{3-6}

\nextcodeno
&
& $[42,8,20]^H_4$
& $(x+1)\,(x+\omega^2)\,
  (x^3+x+1)\,(x^3+\omega^2x+1)\,
  (x^3+x^2+1)\,(x^3+\omega x^2+1)$
& $(x+\omega)\,(x+\omega^2)^2\,
  (x^2+x+\omega^2)\,
  (x^3+x+1)\,(x^3+\omega^2x+1)\,
  (x^3+x^2+1)\,(x^3+\omega x^2+1)$
& $d=d_{\mathrm{BKLC}}-5$ \\
\cline{1-1}\cline{3-6}

\nextcodeno
&
& $[42,12,18]^H_4$
& $(x+1)\,(x+\omega)\,(x+\omega^2)\,
  (x^3+x+1)\,(x^3+x^2+1)$
& $(x+1)^2\,(x+\omega)\,(x+\omega^2)^3\,
  (x^3+x+1)\,(x^3+x^2+1)\,
  (x^4+\omega^2x^2+\omega x+\omega)$
& $d=d_{\mathrm{BKLC}}-2$ \\
\cline{1-1}\cline{3-6}

\nextcodeno
&
& $[42,13,16]^H_4$
& $(x+\omega)\,(x+\omega^2)\,
  (x^3+x+1)\,(x^3+x^2+1)$
& $(x+1)\,(x+\omega)\,(x+\omega^2)^3\,
  (x^3+x+1)\,(x^3+x^2+1)\,
  (x^4+\omega^2x^2+\omega x+\omega)$
& $d=d_{\mathrm{BKLC}}-2$ \\
\cline{1-1}\cline{3-6}

\nextcodeno
&
& $[42,14,15]^H_4$
& $(x+\omega)\,(x+\omega^2)\,
  (x^3+x+1)\,(x^3+x^2+1)$
& $(x+1)\,(x+\omega^2)^2\,
  (x^3+x+1)\,(x^3+x^2+1)\,
  (x^4+\omega^2x^2+\omega x+\omega)$
& $d=d_{\mathrm{BKLC}}-3$ \\
\cline{1-1}\cline{3-6}

\nextcodeno
&
& $[42,15,14]^H_4$
& $(x+\omega)\,(x^3+x+1)\,(x^3+x^2+1)$
& $(x+1)\,(x+\omega^2)\,
  (x^3+x+1)\,(x^3+x^2+1)\,
  (x^4+\omega^2x^2+\omega x+\omega)$
& $d=d_{\mathrm{BKLC}}-3$ \\
\cline{1-1}\cline{3-6}

\nextcodeno
&
& $[42,18,14]^H_4$
& $(x+1)\,(x+\omega)\,(x+\omega^2)$
& $(x+1)\,(x+\omega)\,(x+\omega^2)\,
  (x^2+\omega^2x+\omega^2)\,
  (x^7+x^6+x^5+\omega x^4+\omega x^3+
   \omega x^2+x+1)$
& $d=d_{\mathrm{BKLC}}$ \\
\cline{1-1}\cline{3-6}

\nextcodeno
&
& $[42,19,14]^H_4$
& $(x+1)\,(x+\omega)$
& $(x+1)\,(x+\omega)\,(x+\omega^2)\,
  (x^{12}+\omega x^{11}+\omega^2x^{10}+
   \omega^2x^8+\omega x^7+\omega x^6+
   \omega^2x^5+\omega^2x^4+\omega x^2+x+\omega^2)$
& $d=d_{\mathrm{BKLC}}$ \\
\cline{1-1}\cline{3-6}

\nextcodeno
&
& $[42,20,12]^H_4$
& $x+1$
& $(x+1)^4\,(x+\omega)^3\,(x+\omega^2)\,
  (x^7+\omega x^5+x^4+\omega^2x^3+
   x^2+\omega^2x+1)$
& $d=d_{\mathrm{BKLC}}-2$ \\
\noalign{\hrule height 1.2pt}
\noalign{\hrule height 0.5pt}

\nextcodeno
& 23
& $[46,22,14]^H_4$
& $x+1$
& $(x+1)^3\,
  (x^{10}+\omega x^9+\omega x^8+\omega x^7+
   \omega^2x^6+x^5+\omega^2x^2+x+1)$
& $d=d_{\mathrm{BKLC}}$ \\
\cline{1-1}\cline{3-6}

\nextcodeno
&
& $[46,23,12]^H_4$
& $1$
& $(x+1)\,(x+\omega)^2\,
  (x^8+x^6+x^5+x^3+\omega^2x^2+\omega x+\omega)$
& $d=d_{\mathrm{BKLC}}-2$ \\

\noalign{\hrule height 1.2pt}
\noalign{\hrule height 0.5pt}

\nextcodeno
& 25
& $[50,4,30]^H_4$
& $(x+1)\,
  (x^{10}+\omega x^5+1)\,
  (x^{10}+\omega^2x^5+1)$
& $(x+1)\,
  (x^3+x^2+\omega^2x+\omega)\,
  (x^{10}+\omega x^5+1)\,
  (x^{10}+\omega^2x^5+1)$
& $d=d_{\mathrm{BKLC}}-6$ \\
\cline{1-1}\cline{3-6}

\nextcodeno
&
& $[50,5,25]^H_4$
& $(x^{10}+\omega x^5+1)\,
  (x^{10}+\omega^2x^5+1)$
& $(x+1)^2\,(x+\omega^2)\,
  (x^{10}+\omega x^5+1)\,
  (x^{10}+\omega^2x^5+1)$
& $d=d_{\mathrm{BKLC}}-10$ \\
\cline{1-1}\cline{3-6}

\nextcodeno
&
& $[50,20,16]^H_4$
& $(x+1)\,
  (x^2+\omega x+1)\,
  (x^2+\omega^2x+1)$
& $(x+1)\,
  (x^2+\omega x+1)\,
  (x^2+\omega^2x+1)\,
  (x^5+\omega x^4+\omega^2x^2+1)\,
  (x^{14}+x^{12}+x^{10}+\omega x^9+x^8+
   \omega^2x^6+\omega x^5+\omega^2x^4+
   x^3+\omega^2x^2+\omega^2)$
& $d=d_{\mathrm{BKLC}}-1$ \\
\cline{1-1}\cline{3-6}

\nextcodeno
&
& $[50,21,15]^H_4$
& $(x^2+\omega x+1)\,
  (x^2+\omega^2x+1)$
& $(x+1)^{10}\,(x+\omega^2)^2\,
  (x^2+\omega x+1)\,
  (x^2+\omega^2x+1)\,
  (x^3+\omega^2x^2+\omega^2x+\omega^2)$
& $d=d_{\mathrm{BKLC}}-1$ \\
\cline{1-1}\cline{3-6}

\nextcodeno
&
& $[50,24,14]^H_4$
& $(x+1)\,(x^2+\omega^2x+1)$
& $(x+1)^{20}\,(x+\omega)^8\,
  (x^3+\omega^2x^2+\omega^2x+\omega^2)$
& $d=d_{\mathrm{BKLC}}-1$ \\
\cline{1-1}\cline{3-6}

\nextcodeno
&
& $[50,25,13]^H_4$
& $x^2+\omega^2x+1$
& $(x+1)^{20}\,(x+\omega)^8\,
  (x^3+\omega^2x^2+\omega^2x+\omega^2)$
& $d=d_{\mathrm{BKLC}}-1$ \\

\noalign{\hrule height 1.2pt}
\noalign{\hrule height 0.5pt}
\end{tabular}

\caption{Hermitian LCD one-generator quasi-cyclic codes of index $2$
(continued).}
\label{tab:appendix-index2b}
\end{table}

\newpage


\begin{table}[H]
\centering
\fontsize{9.0}{9.8}\selectfont
\renewcommand{\arraystretch}{1.04}
\setlength{\tabcolsep}{1.0pt}

\begin{tabular}{
|>{\centering\arraybackslash}m{0.55cm}|
>{\centering\arraybackslash}m{0.45cm}|
>{\centering\arraybackslash}m{1.75cm}|
>{\raggedright\arraybackslash}m{2.35cm}|
>{\raggedright\arraybackslash}m{3.05cm}|
>{\raggedright\arraybackslash}m{4.50cm}|
>{\centering\arraybackslash}m{1.70cm}|
}
\hline
\noalign{\hrule height 1.2pt}
No. & $m$ & $C$ & $g_{1}(x)$ & $g_{2}(x)$ & $g_{3}(x)$ & Remark \\
\noalign{\hrule height 1.2pt}

\nextcodeno
& 5
& $[15,4,9]^H_4$
& $x+1$
& $(x+1)^2$
& $(x+1)\,(x+\omega^2)\,(x^2+x+\omega^2)$
& $d=d_{\mathrm{OpLC}}-1$ \\
\cline{1-1}\cline{3-7}

\nextcodeno
&
& $[15,5,8]^H_4$
& $1$
& $(x+\omega)\,(x+\omega^2)$
& $(x+\omega^2)\,(x^2+x+\omega)$
& $d=d_{\mathrm{OpLC}}$ \\
\noalign{\hrule height 1.2pt}
\noalign{\hrule height 0.5pt}

\nextcodeno
& 7
& $[21,6,12]^H_4$
& $x+1$
& $(x+1)\,(x+\omega)^3$
& $(x+1)\,(x+\omega)\,(x^3+x^2+x+\omega)$
& $d=d_{\mathrm{OpLC}}$ \\
\cline{1-1}\cline{3-7}

\nextcodeno
&
& $[21,7,11]^H_4$
& $1$
& $(x^2+x+\omega^2)\,(x^2+\omega^2x+1)$
& $(x^2+\omega x+\omega)\,(x^2+\omega^2x+1)$
& $d=d_{\mathrm{OpLC}}$ \\
\noalign{\hrule height 1.2pt}
\noalign{\hrule height 0.5pt}

\nextcodeno
& 9
& $[27,4,18]^H_4$
& $(x+1)\,(x+\omega)\,(x^3+\omega)$
& $(x+1)^2\,(x+\omega)\,(x^3+\omega)$
& $(x+1)\,(x+\omega)\,(x^2+\omega x+1)\,(x^3+\omega)$
& $d=d_{\mathrm{BKLC}}-1$ \\
\cline{1-1}\cline{3-7}

\nextcodeno
&
& $[27,6,15]^H_4$
& $(x+1)\,(x+\omega)\,(x+\omega^2)$
& $(x+1)\,(x+\omega)\,(x+\omega^2)\,
  (x^2+\omega x+\omega)$
& $(x+1)^3\,(x+\omega)^2\,(x+\omega^2)\,
  (x^2+x+\omega)$
& $d=d_{\mathrm{BKLC}}-1$ \\
\cline{1-1}\cline{3-7}

\nextcodeno
&
& $[27,7,15]^H_4$
& $(x+1)\,(x+\omega)$
& $(x+1)\,(x+\omega)\,(x^2+\omega x+\omega)$
& $(x+1)^3\,(x+\omega)^2\,(x^2+x+\omega)$
& $d=d_{\mathrm{BKLC}}$ \\
\cline{1-1}\cline{3-7}

\nextcodeno
&
& $[27,8,13]^H_4$
& $x+\omega^2$
& $(x+\omega^2)\,
  (x^4+\omega^2x^3+\omega^2x+\omega^2)$
& $(x+\omega^2)\,
  (x^7+\omega x^6+\omega^2x^5+x^4+x^3+x^2+
   \omega x+\omega)$
& $d=d_{\mathrm{BKLC}}-1$ \\
\cline{1-1}\cline{3-7}

\nextcodeno
&
& $[27,9,13]^H_4$
& $1$
& $x^4+\omega^2x^3+\omega^2x+\omega^2$
& $x^7+\omega x^6+\omega^2x^5+x^4+x^3+x^2+
   \omega x+\omega$
& $d=d_{\mathrm{BKLC}}$ \\
\noalign{\hrule height 1.2pt}
\noalign{\hrule height 0.5pt}

\nextcodeno
& 11
& $[33,5,21]^H_4$
& $(x+1)\,
  (x^5+\omega^2x^4+x^3+x^2+\omega x+1)$
& $(x+1)^2\,(x+\omega^2)^2\,
  (x^5+\omega^2x^4+x^3+x^2+\omega x+1)$
& $(x+1)\,(x+\omega)\,
  (x^5+\omega^2x^4+x^3+x^2+\omega x+1)$
& $d=d_{\mathrm{BKLC}}-1$ \\
\cline{1-1}\cline{3-7}

\nextcodeno
&
& $[33,6,20]^H_4$
& $x^5+\omega x^4+x^3+x^2+\omega^2x+1$
& $(x+\omega)\,(x+\omega^2)\,
  (x^5+\omega x^4+x^3+x^2+\omega^2x+1)$
& $(x+\omega^2)\,
  (x^3+x^2+\omega^2x+\omega)\,
  (x^5+\omega x^4+x^3+x^2+\omega^2x+1)$
& $d=d_{\mathrm{BKLC}}$ \\
\cline{1-1}\cline{3-7}

\nextcodeno
&
& $[33,10,16]^H_4$
& $x+1$
& $(x+1)^2\,(x+\omega)\,(x^2+x+\omega^2)$
& $(x+1)^2\,
  (x^2+\omega x+\omega)\,
  (x^5+x^3+\omega^2x^2+\omega x+\omega)$
& $d=d_{\mathrm{BKLC}}$ \\
\cline{1-1}\cline{3-7}

\nextcodeno
&
& $[33,11,14]^H_4$
& $1$
& $(x+\omega)\,(x^2+x+\omega)$
& $(x+\omega^2)^2\,
  (x^7+x^6+x^3+\omega^2x^2+\omega^2x+\omega^2)$
& $d=d_{\mathrm{BKLC}}-1$ \\

\noalign{\hrule height 1.2pt}
\noalign{\hrule height 0.5pt}
\end{tabular}

\caption{Hermitian LCD one-generator quasi-cyclic codes of index $3$.}
\label{tab:appendix-index3}
\end{table}

\newpage


\begin{table}[H]
\centering
\fontsize{9.0}{9.6}\selectfont
\renewcommand{\arraystretch}{1.02}
\setlength{\tabcolsep}{0.5pt}

\begin{tabular}{
|>{\centering\arraybackslash}m{0.48cm}|
>{\centering\arraybackslash}m{0.42cm}|
>{\centering\arraybackslash}m{1.85cm}|
>{\raggedright\arraybackslash}m{2.40cm}|
>{\raggedright\arraybackslash}m{2.40cm}|
>{\raggedright\arraybackslash}m{2.70cm}|
>{\raggedright\arraybackslash}m{3.15cm}|
>{\centering\arraybackslash}m{1.45cm}|
}
\hline
\noalign{\hrule height 1.2pt}
No. & $m$ & $C$ & $g_{1}(x)$ & $g_{2}(x)$
& $g_{3}(x)$ & $g_{4}(x)$ & Remark \\
\noalign{\hrule height 1.2pt}

\nextcodeno
& 5
& $[20,4,13]^H_4$
& $x+1$
& $(x+1)^2$
& $(x+1)\,\allowbreak(x+\omega)^2$
& $(x+1)^2\,\allowbreak
  (x^2+\omega^2x+\omega^2)$
& $d=d_{\mathrm{OpLC}}$ \\
\cline{1-1}\cline{3-8}

\nextcodeno
&
& $[20,5,12]^H_4$
& $1$
& $(x^2+\omega^2x+\omega^2)$
& $(x+1)^3$
& $(x^3+\omega^2x^2+x+\omega)$
& $d=d_{\mathrm{OpLC}}$ \\
\noalign{\hrule height 1.2pt}
\noalign{\hrule height 0.5pt}

\nextcodeno
& 7
& $[28,7,15]^H_4$
& $1$
& $(x+1)\,\allowbreak
  (x^2+\omega^2x+1)$
& $(x^3+x+1)$
& $(x^6+\omega^2x^5+\omega x^4+\omega^2x^3+
   x^2+\omega x+\omega)$
& $d=d_{\mathrm{OpLC}}-1$ \\
\noalign{\hrule height 1.2pt}
\noalign{\hrule height 0.5pt}

\nextcodeno
& 9
& $[36,4,23]^H_4$
& $(x+1)\,\allowbreak
  (x+\omega)\,\allowbreak
  (x+\omega^2)\,\allowbreak
  (x^3+\omega^2)$
& $(x+1)\,\allowbreak
  (x+\omega)\,\allowbreak
  (x^3+\omega^2)$
& $(x+1)^2\,\allowbreak
  (x+\omega)\,\allowbreak
  (x^3+\omega^2)$
& $(x+1)\,\allowbreak
  (x+\omega)\,\allowbreak
  (x^2+\omega x+1)\,\allowbreak
  (x^3+\omega^2)$
& $d=d_{\mathrm{BKLC}}-2$ \\
\cline{1-1}\cline{3-8}

\nextcodeno
&
& $[36,5,22]^H_4$
& $(x+\omega)\,\allowbreak
  (x+1)\,\allowbreak
  (x^3+\omega^2)$
& $(x+\omega)\,\allowbreak
  (x+\omega^2)\,\allowbreak
  (x^3+\omega^2)$
& $(x+\omega)\,\allowbreak
  (x^2+\omega^2x+\omega^2)\,\allowbreak
  (x^3+\omega^2)$
& $(x+\omega)^4\,\allowbreak
  (x^3+\omega^2)$
& $d=d_{\mathrm{BKLC}}-2$ \\
\cline{1-1}\cline{3-8}

\nextcodeno
&
& $[36,6,21]^H_4$
& $(x+1)\,\allowbreak
  (x^3+\omega^2)$
& $(x+\omega^2)\,\allowbreak
  (x^3+\omega^2)$
& $(x^2+\omega^2x+\omega^2)\,\allowbreak
  (x^3+\omega^2)$
& $(x+\omega)^3\,\allowbreak
  (x^3+\omega^2)$
& $d=d_{\mathrm{BKLC}}-2$ \\
\cline{1-1}\cline{3-8}

\nextcodeno
&
& $[36,7,19]^H_4$
& $(x+1)\,\allowbreak
  (x+\omega)\,\allowbreak
  (x+\omega^2)$
& $(x+1)\,\allowbreak
  (x+\omega^2)$
& $(x+1)^2\,\allowbreak
  (x+\omega^2)^2\,\allowbreak
  (x^3+\omega^2x^2+\omega x+\omega^2)$
& $(x+1)^2\,\allowbreak
  (x+\omega^2)\,\allowbreak
  (x^2+\omega x+1)\,\allowbreak
  (x^3+x^2+1)$
& $d=d_{\mathrm{BKLC}}-3$ \\
\cline{1-1}\cline{3-8}

\nextcodeno
&
& $[36,8,19]^H_4$
& $(x+1)\,\allowbreak
  (x+\omega)$
& $x+1$
& $(x+1)^2\,\allowbreak
  (x+\omega^2)\,\allowbreak
  (x^3+\omega^2x^2+\omega x+\omega^2)$
& $(x+1)^2\,\allowbreak
  (x^2+\omega x+1)\,\allowbreak
  (x^3+x^2+1)$
& $d=d_{\mathrm{BKLC}}-1$ \\
\cline{1-1}\cline{3-8}

\nextcodeno
&
& $[36,9,18]^H_4$
& $1$
& $(x+\omega^2)\,\allowbreak
  (x^3+\omega)$
& $(x^4+\omega^2x^3+x^2+1)$
& $(x+1)\,\allowbreak
  (x+\omega)^2\,\allowbreak
  (x^5+\omega x^4+\omega^2x^2+1)$
& $d=d_{\mathrm{BKLC}}-1$ \\
\noalign{\hrule height 1.2pt}
\noalign{\hrule height 0.5pt}

\nextcodeno
& 11
& $[44,5,29]^H_4$
& $(x+1)\,\allowbreak
  (x^5+\omega^2x^4+x^3+x^2+\omega x+1)$
& $(x+1)\,\allowbreak
  (x^5+\omega^2x^4+x^3+x^2+\omega x+1)$
& $(x+1)\,\allowbreak
  (x+\omega)\,\allowbreak
  (x+\omega^2)\,\allowbreak
  (x^5+\omega^2x^4+x^3+x^2+\omega x+1)$
& $(x+1)^2\,\allowbreak
  (x+\omega^2)^2\,\allowbreak
  (x^5+\omega^2x^4+x^3+x^2+\omega x+1)$
& $d=d_{\mathrm{BKLC}}-1$ \\
\cline{1-1}\cline{3-8}

\nextcodeno
&
& $[44,6,28]^H_4$
& $(x^5+\omega^2x^4+x^3+x^2+\omega x+1)$
& $(x+1)\,\allowbreak
  (x^5+\omega^2x^4+x^3+x^2+\omega x+1)$
& $(x+\omega^2)\,\allowbreak
  (x^2+x+\omega)\,\allowbreak
  (x^5+\omega^2x^4+x^3+x^2+\omega x+1)$
& $(x^4+\omega^2x^3+\omega x^2+\omega)\,
  \allowbreak
  (x^5+\omega^2x^4+x^3+x^2+\omega x+1)$
& $d=d_{\mathrm{BKLC}}-1$ \\
\cline{1-1}\cline{3-8}

\nextcodeno
&
& $[44,10,21]^H_4$
& $x+1$
& $(x+1)\,\allowbreak
  (x+\omega)^2$
& $(x+1)\,\allowbreak
  (x^5+\omega x^4+x^3+x^2+\omega^2x+1)$
& $(x+1)\,\allowbreak
  (x+\omega)\,\allowbreak
  (x+\omega^2)\,\allowbreak
  (x^2+\omega^2x+1)\,\allowbreak
  (x^4+x^3+x^2+\omega)$
& $d=d_{\mathrm{BKLC}}-2$ \\
\cline{1-1}\cline{3-8}

\nextcodeno
&
& $[44,11,20]^H_4$
& $1$
& $(x+\omega)^7\,\allowbreak
  (x+\omega^2)^3$
& $(x^5+\omega x^4+x^3+x^2+\omega^2x+1)$
& $(x+1)^2\,\allowbreak
  (x+\omega^2)\,\allowbreak
  (x^2+\omega^2x+1)\,\allowbreak
  (x^4+x^3+x^2+\omega)$
& $d=d_{\mathrm{BKLC}}-2$ \\

\noalign{\hrule height 1.2pt}
\noalign{\hrule height 0.5pt}
\end{tabular}

\caption{Hermitian LCD one-generator quasi-cyclic codes of index $4$.}
\label{tab:appendix-index4}
\end{table}

\newpage


\clearpage
\begin{landscape}

\begin{table}[H]
\centering
\fontsize{9.0}{9.3}\selectfont
\renewcommand{\arraystretch}{1.0}
\setlength{\tabcolsep}{0.7pt}

\begin{tabular}{
|>{\centering\arraybackslash}m{0.55cm}|
>{\centering\arraybackslash}m{0.42cm}|
>{\centering\arraybackslash}m{1.75cm}|
>{\raggedright\arraybackslash}m{3.40cm}|
>{\raggedright\arraybackslash}m{3.40cm}|
>{\raggedright\arraybackslash}m{3.40cm}|
>{\raggedright\arraybackslash}m{3.40cm}|
>{\raggedright\arraybackslash}m{3.40cm}|
>{\centering\arraybackslash}m{1.55cm}|
}
\hline
\noalign{\hrule height 1.2pt}
No. & $m$ & $C$ & $g_{1}(x)$ & $g_{2}(x)$ & $g_{3}(x)$
& $g_{4}(x)$ & $g_{5}(x)$ & Remark \\
\noalign{\hrule height 1.2pt}

\nextcodeno
& 5
& $[25,5,15]^H_4$
& $1$
& $x+1$
& $x^2+\omega^2x+\omega^2$
& $x^3+\omega x^2+\omega^2x+\omega^2$
& $(x+1)^2\,\allowbreak
  (x^2+\omega^2x+\omega^2)$
& $d=d_{\mathrm{OpLC}}-1$ \\
\noalign{\hrule height 1.2pt}
\noalign{\hrule height 0.5pt}

\nextcodeno
& 7
& $[35,6,21]^H_4$
& $x+1$
& $(x+1)\,\allowbreak(x+\omega)^2$
& $(x+1)\,\allowbreak(x+\omega)\,\allowbreak
  (x^2+x+\omega^2)$
& $(x+1)\,\allowbreak(x+\omega)\,\allowbreak
  (x+\omega^2)\,\allowbreak
  (x^2+x+\omega)$
& $(x+1)^2\,\allowbreak
  (x+\omega^2)\,\allowbreak
  (x^2+\omega x+\omega)$
& $d=d_{\mathrm{BKLC}}-1$ \\
\cline{1-1}\cline{3-9}

\nextcodeno
&
& $[35,7,19]^H_4$
& $1$
& $(x+\omega)^2\,\allowbreak(x+\omega^2)$
& $(x+\omega)^4\,\allowbreak(x+\omega^2)^2$
& $(x+\omega^2)\,\allowbreak
  (x^5+\omega x^3+\omega^2x+1)$
& $(x^2+x+\omega^2)\,\allowbreak
  (x^2+\omega^2x+1)$
& $d=d_{\mathrm{BKLC}}-2$ \\
\noalign{\hrule height 1.2pt}
\noalign{\hrule height 0.5pt}

\nextcodeno
& 9
& $[45,4,30]^H_4$
& $(x+1)\,\allowbreak
  (x+\omega^2)\,\allowbreak
  (x^3+\omega)$
& $(x+1)\,\allowbreak
  (x+\omega^2)^3\,\allowbreak
  (x^3+\omega)$
& $(x+1)\,\allowbreak
  (x+\omega^2)\,\allowbreak
  (x^2+\omega x+\omega)\,\allowbreak
  (x^3+\omega)$
& $(x+1)\,\allowbreak
  (x+\omega^2)^2\,\allowbreak
  (x^3+\omega)$
& $(x+1)\,\allowbreak
  (x+\omega^2)\,\allowbreak
  (x^3+\omega)\,\allowbreak
  (x^3+\omega x+1)$
& $d=d_{\mathrm{OpLC}}-2$ \\
\cline{1-1}\cline{3-9}

\nextcodeno
&
& $[45,5,29]^H_4$
& $(x+\omega^2)\,\allowbreak
  (x^3+\omega)$
& $(x+\omega^2)\,\allowbreak
  (x^2+x+\omega^2)\,\allowbreak
  (x^2+\omega^2x+1)\,\allowbreak
  (x^3+\omega)$
& $(x+\omega^2)\,\allowbreak
  (x^2+\omega x+\omega)\,\allowbreak
  (x^3+\omega)$
& $(x+\omega^2)^2\,\allowbreak
  (x^3+\omega)$
& $(x+\omega^2)^5\,\allowbreak
  (x^3+\omega)$
& $d=d_{\mathrm{OpLC}}-2$ \\
\cline{1-1}\cline{3-9}

\nextcodeno
&
& $[45,6,28]^H_4$
& $(x+1)\,\allowbreak
  (x+\omega)\,\allowbreak
  (x+\omega^2)$
& $(x+1)^5\,\allowbreak
  (x+\omega)\,\allowbreak
  (x+\omega^2)^2$
& $(x+1)\,\allowbreak
  (x+\omega)^2\,\allowbreak
  (x+\omega^2)\,\allowbreak
  (x^4+x^2+\omega x+1)$
& $(x+1)\,\allowbreak
  (x+\omega)\,\allowbreak
  (x+\omega^2)^2\,\allowbreak
  (x^4+x^2+\omega^2x+1)$
& $(x+1)^2\,\allowbreak
  (x+\omega)\,\allowbreak
  (x+\omega^2)^3$
& $d=d_{\mathrm{OpLC}}-2$ \\
\cline{1-1}\cline{3-9}

\nextcodeno
&
& $[45,7,26]^H_4$
& $(x+\omega)\,\allowbreak
  (x+\omega^2)$
& $(x+\omega)\,\allowbreak
  (x+\omega^2)\,\allowbreak
  (x^6+\omega x^5+\omega^2x^3+x^2+\omega x+1)$
& $(x+\omega)\,\allowbreak
  (x+\omega^2)\,\allowbreak
  (x^2+x+\omega^2)\,\allowbreak
  (x^4+x^3+x^2+\omega^2x+\omega^2)$
& $(x+\omega)\,\allowbreak
  (x+\omega^2)\,\allowbreak
  (x^5+\omega^2x^4+\omega x^3+\omega^2x^2+x+1)$
& $(x+\omega)^6\,\allowbreak
  (x+\omega^2)^2$
& $d=d_{\mathrm{BKLC}}-2$ \\
\cline{1-1}\cline{3-9}

\nextcodeno
&
& $[45,8,25]^H_4$
& $(x+\omega)\,\allowbreak
  (x+\omega^2)$
& $(x+1)^6\,\allowbreak
  (x+\omega^2)$
& $(x+1)\,\allowbreak
  (x+\omega^2)\,\allowbreak
  (x^3+x^2+x+\omega^2)$
& $(x+\omega)\,\allowbreak
  (x+\omega^2)\,\allowbreak
  (x^6+\omega x^5+\omega x^4+\omega x^3+
   \omega^2x^2+\omega)$
& $(x+\omega^2)\,\allowbreak
  (x^6+\omega x^5+x^4+\omega^2x^3+\omega x+1)$
& $d=d_{\mathrm{BKLC}}-2$ \\
\cline{1-1}\cline{3-9}

\nextcodeno
&
& $[45,9,23]^H_4$
& $x+\omega$
& $(x+1)^6$
& $(x+1)\,\allowbreak
  (x^7+\omega^2x^6+x^5+x^3+x+\omega)$
& $(x+\omega)\,\allowbreak
  (x^7+\omega^2x^6+\omega^2x^5+\omega x^4+
   \omega x^3+x^2+x+\omega)$
& $x^6+\omega x^5+x^4+\omega^2x^3+\omega x+1$
& $d=d_{\mathrm{BKLC}}-2$ \\
\noalign{\hrule height 1.2pt}
\noalign{\hrule height 0.5pt}

\nextcodeno
& 11
& $[55,5,37]^H_4$
& $(x+1)\,\allowbreak
  (x^5+\omega^2x^4+x^3+x^2+\omega x+1)$
& $(x+1)\,\allowbreak
  (x^2+\omega x+\omega)\,\allowbreak
  (x^2+\omega^2x+\omega^2)\,\allowbreak
  (x^5+\omega^2x^4+x^3+x^2+\omega x+1)$
& $(x+1)\,\allowbreak
  (x+\omega)\,\allowbreak
  (x^3+x^2+x+\omega^2)\,\allowbreak
  (x^5+\omega^2x^4+x^3+x^2+\omega x+1)$
& $(x+1)\,\allowbreak
  (x^2+x+\omega)\,\allowbreak
  (x^5+\omega^2x^4+x^3+x^2+\omega x+1)$
& $(x+1)^2\,\allowbreak
  (x^3+\omega x+1)\,\allowbreak
  (x^5+\omega^2x^4+x^3+x^2+\omega x+1)$
& $d=d_{\mathrm{OpLC}}-2$ \\
\cline{1-1}\cline{3-9}

\nextcodeno
&
& $[55,6,34]^H_4$
& $x^5+\omega^2x^4+x^3+x^2+\omega x+1$
& $(x+\omega)^2\,\allowbreak
  (x^5+\omega^2x^4+x^3+x^2+\omega x+1)$
& $(x^3+\omega^2x^2+1)\,\allowbreak
  (x^5+\omega^2x^4+x^3+x^2+\omega x+1)$
& $(x+\omega)\,\allowbreak
  (x+\omega^2)\,\allowbreak
  (x^2+x+\omega)\,\allowbreak
  (x^5+\omega^2x^4+x^3+x^2+\omega x+1)$
& $(x^2+\omega^2x+1)\,\allowbreak
  (x^5+\omega^2x^4+x^3+x^2+\omega x+1)$
& $d=d_{\mathrm{BKLC}}-2$ \\
\cline{1-1}\cline{3-9}

\nextcodeno
&
& $[55,10,29]^H_4$
& $(x+1)\,\allowbreak
  (x^5+\omega^2x^4+x^3+x^2+\omega x+1)$
& $(x+1)\,\allowbreak
  (x+\omega)^2$
& $(x+1)\,\allowbreak
  (x+\omega)\,\allowbreak
  (x^7+\omega^2x^6+x^5+x^3+x+\omega)$
& $(x+1)^2\,\allowbreak
  (x+\omega)\,\allowbreak
  (x^2+x+\omega)$
& $(x+1)\,\allowbreak
  (x+\omega)\,\allowbreak
  (x^2+\omega^2x+1)\,\allowbreak
  (x^5+x^4+\omega^2)$
& $d=d_{\mathrm{BKLC}}-2$ \\
\cline{1-1}\cline{3-9}

\nextcodeno
&
& $[55,11,27]^H_4$
& $x^5+\omega^2x^4+x^3+x^2+\omega x+1$
& $(x+\omega)^2$
& $(x+\omega)\,\allowbreak
  (x^7+\omega^2x^6+x^5+x^3+x+\omega)$
& $(x+1)\,\allowbreak
  (x+\omega)\,\allowbreak
  (x^2+x+\omega)$
& $(x+1)\,\allowbreak
  (x+\omega^2)\,\allowbreak
  (x^2+\omega^2x+\omega^2)\,\allowbreak
  (x^3+\omega^2)$
& $d=d_{\mathrm{BKLC}}-2$ \\

\noalign{\hrule height 1.2pt}
\noalign{\hrule height 0.5pt}
\end{tabular}

\caption{Hermitian LCD one-generator quasi-cyclic codes of index $5$.}
\label{tab:appendix-index5}
\end{table}

\end{landscape}
\clearpage


\clearpage
\begin{landscape}

\begin{table}[H]
\centering
\fontsize{9.0}{9.4}\selectfont
\renewcommand{\arraystretch}{1.02}
\setlength{\tabcolsep}{0.45pt}

\begin{tabular}{
|>{\centering\arraybackslash}m{0.52cm}|
>{\centering\arraybackslash}m{0.36cm}|
>{\centering\arraybackslash}m{2.00cm}|
>{\raggedright\arraybackslash}m{2.70cm}|
>{\raggedright\arraybackslash}m{2.70cm}|
>{\raggedright\arraybackslash}m{2.70cm}|
>{\raggedright\arraybackslash}m{2.70cm}|
>{\raggedright\arraybackslash}m{2.70cm}|
>{\raggedright\arraybackslash}m{2.70cm}|
>{\centering\arraybackslash}m{1.42cm}|
}
\hline
\noalign{\hrule height 1.2pt}
No. & $m$ & $C$ & $g_{1}(x)$ & $g_{2}(x)$ & $g_{3}(x)$
& $g_{4}(x)$ & $g_{5}(x)$ & $g_{6}(x)$ & Remark \\
\noalign{\hrule height 1.2pt}

\nextcodeno
& 5
& $[30,4,20]^H_4$
& $x+1$
& $(x+1)\,\allowbreak(x+\omega)^2\,\allowbreak(x+\omega^2)$
& $(x+1)^2\,\allowbreak(x+\omega)^2$
& $(x+1)^2$
& $(x+1)\,\allowbreak(x+\omega)\,\allowbreak
  (x^2+\omega^2x+1)$
& $(x+1)\,\allowbreak(x+\omega)\,\allowbreak
  (x^2+x+\omega)$
& $d=d_{\mathrm{OpLC}}-1$ \\
\cline{1-1}\cline{3-10}

\nextcodeno
&
& $[30,5,19]^H_4$
& $1$
& $(x+1)\,\allowbreak
  (x^3+\omega^2x^2+\omega x+\omega^2)$
& $x^3+x^2+1$
& $(x+\omega)\,\allowbreak(x^3+\omega^2)$
& $x^3+\omega^2x^2+\omega x+\omega$
& $x+\omega$
& $d=d_{\mathrm{OpLC}}-1$ \\
\noalign{\hrule height 1.2pt}
\noalign{\hrule height 0.5pt}

\nextcodeno
& 7
& $[42,6,25]^H_4$
& $x+1$
& $(x+1)\,\allowbreak(x+\omega)^2$
& $(x+1)\,\allowbreak(x+\omega)\,\allowbreak
  (x^2+x+\omega^2)$
& $(x+1)\,\allowbreak(x+\omega)\,\allowbreak
  (x+\omega^2)\,\allowbreak(x^2+x+\omega)$
& $(x+1)^2\,\allowbreak(x+\omega^2)\,\allowbreak
  (x^2+\omega x+\omega)$
& $(x+1)\,\allowbreak(x^4+x^3+x+\omega^2)$
& $d=d_{\mathrm{OpLC}}-3$ \\
\cline{1-1}\cline{3-10}

\nextcodeno
&
& $[42,7,23]^H_4$
& $1$
& $x^3+\omega x^2+\omega^2x+\omega^2$
& $x^5+\omega x^4+\omega x^3+\omega x^2+\omega x+\omega$
& $(x+\omega)\,\allowbreak
  (x^4+\omega^2x^3+x^2+\omega^2x+1)$
& $(x+\omega)\,\allowbreak(x+\omega^2)\,\allowbreak
  (x^2+x+\omega)$
& $(x+1)^2\,\allowbreak
  (x^4+\omega x^3+\omega^2x^2+\omega^2x+1)$
& $d=d_{\mathrm{BKLC}}-3$ \\
\noalign{\hrule height 1.2pt}
\noalign{\hrule height 0.5pt}

\nextcodeno
& 9
& $[54,4,36]^H_4$
& $(x+\omega)\,\allowbreak
  (x+\omega^2)\,\allowbreak
  (x^3+\omega)$
& $(x+\omega)\,\allowbreak
  (x+\omega^2)\,\allowbreak
  (x^3+\omega)\,\allowbreak
  (x^3+\omega x^2+\omega^2x+\omega^2)$
& $(x+\omega)^2\,\allowbreak
  (x+\omega^2)^3\,\allowbreak
  (x^3+\omega)$
& $(x+\omega)\,\allowbreak
  (x+\omega^2)\,\allowbreak
  (x^2+\omega x+1)\,\allowbreak
  (x^3+\omega)$
& $(x+\omega)\,\allowbreak
  (x+\omega^2)\,\allowbreak
  (x^3+\omega)\,\allowbreak
  (x^3+\omega x^2+\omega^2x+\omega)$
& $(x+1)\,\allowbreak
  (x+\omega)\,\allowbreak
  (x+\omega^2)\,\allowbreak
  (x^3+\omega)$
& $d=d_{\mathrm{OpLC}}-3$ \\
\cline{1-1}\cline{3-10}

\nextcodeno
&
& $[54,5,34]^H_4$
& $(x+\omega)\,\allowbreak(x^3+\omega)$
& $(x+\omega)\,\allowbreak
  (x^3+\omega)\,\allowbreak
  (x^3+\omega x^2+\omega^2x+\omega^2)$
& $(x+\omega)^3\,\allowbreak
  (x+\omega^2)^2\,\allowbreak
  (x^3+\omega)$
& $(x+\omega)\,\allowbreak
  (x^3+\omega)\,\allowbreak
  (x^3+x^2+\omega^2x+\omega)$
& $(x+\omega)\,\allowbreak
  (x^3+\omega)\,\allowbreak
  (x^3+\omega x^2+\omega^2x+\omega)$
& $(x+1)\,\allowbreak
  (x+\omega)\,\allowbreak
  (x^3+\omega)$
& $d=d_{\mathrm{OpLC}}-4$ \\
\cline{1-1}\cline{3-10}

\nextcodeno
&
& $[54,6,34]^H_4$
& $(x+1)\,\allowbreak
  (x+\omega)\,\allowbreak
  (x+\omega^2)$
& $(x+1)\,\allowbreak
  (x+\omega)\,\allowbreak
  (x+\omega^2)\,\allowbreak
  (x^3+\omega x^2+\omega^2x+\omega^2)$
& $(x+1)\,\allowbreak
  (x+\omega)\,\allowbreak
  (x+\omega^2)\,\allowbreak
  (x^4+\omega x^3+\omega^2x^2+\omega^2)$
& $(x+1)\,\allowbreak
  (x+\omega)^2\,\allowbreak
  (x+\omega^2)\,\allowbreak
  (x^4+\omega^2x^3+x^2+\omega^2x+1)$
& $(x+1)\,\allowbreak
  (x+\omega)^3\,\allowbreak
  (x+\omega^2)^2$
& $(x+1)\,\allowbreak
  (x+\omega)\,\allowbreak
  (x+\omega^2)^2\,\allowbreak
  (x^3+x^2+\omega^2x+\omega)$
& $d=d_{\mathrm{BKLC}}-2$ \\
\cline{1-1}\cline{3-10}

\nextcodeno
&
& $[54,7,32]^H_4$
& $(x+1)\,\allowbreak(x+\omega)$
& $(x+1)\,\allowbreak
  (x+\omega)\,\allowbreak
  (x^3+\omega x^2+\omega^2x+\omega^2)$
& $(x+1)\,\allowbreak
  (x+\omega)^3\,\allowbreak
  (x+\omega^2)\,\allowbreak
  (x^3+x^2+x+\omega^2)$
& $(x+1)\,\allowbreak
  (x+\omega)^2\,\allowbreak
  (x^4+\omega^2x^3+x^2+\omega^2x+1)$
& $(x+1)^2\,\allowbreak
  (x+\omega)\,\allowbreak
  (x^5+x^4+\omega^2x^3+\omega^2x^2+\omega x+1)$
& $(x+1)\,\allowbreak
  (x+\omega)^3\,\allowbreak
  (x^4+\omega x^3+\omega x^2+\omega^2x+1)$
& $d=d_{\mathrm{BKLC}}-3$ \\
\cline{1-1}\cline{3-10}

\nextcodeno
&
& $[54,8,31]^H_4$
& $x+\omega$
& $(x+\omega)\,\allowbreak
  (x^3+\omega x^2+\omega^2x+\omega^2)$
& $(x+\omega)^3\,\allowbreak
  (x^4+\omega x^3+x+1)$
& $(x+\omega)^2\,\allowbreak
  (x+\omega^2)\,\allowbreak
  (x^4+\omega^2x^3+x^2+\omega^2x+1)$
& $(x+\omega)\,\allowbreak
  (x^3+\omega^2x^2+1)\,\allowbreak
  (x^4+\omega x^3+x^2+1)$
& $(x+1)^2\,\allowbreak
  (x+\omega)^3\,\allowbreak
  (x^3+\omega^2x^2+x+\omega)$
& $d=d_{\mathrm{BKLC}}-2$ \\
\cline{1-1}\cline{3-10}

\nextcodeno
&
& $[54,9,27]^H_4$
& $1$
& $x^3+\omega x^2+\omega^2x+\omega^2$
& $(x+\omega)^2\,\allowbreak
  (x+\omega^2)\,\allowbreak
  (x^4+\omega x^3+x+1)$
& $(x+\omega)\,\allowbreak
  (x^4+\omega^2x^3+x^2+\omega^2x+1)$
& $(x^3+\omega^2x^2+1)\,\allowbreak
  (x^4+\omega x^3+x^2+1)$
& $(x+1)^3\,\allowbreak
  (x+\omega)\,\allowbreak
  (x^3+x+1)$
& $d=d_{\mathrm{BKLC}}-5$ \\

\noalign{\hrule height 1.2pt}
\noalign{\hrule height 0.5pt}
\end{tabular}

\caption{Hermitian LCD one-generator quasi-cyclic codes of index $6$.}
\label{tab:appendix-index6}
\end{table}

\end{landscape}
\clearpage


\begin{landscape}

\begin{table}[H]
\centering
\fontsize{8.0}{8.3}\selectfont
\renewcommand{\arraystretch}{1.00}
\setlength{\tabcolsep}{0.35pt}

\begin{tabular}{
|>{\centering\arraybackslash}m{0.52cm}|
>{\centering\arraybackslash}m{0.35cm}|
>{\centering\arraybackslash}m{1.95cm}|
>{\raggedright\arraybackslash}m{2.45cm}|
>{\raggedright\arraybackslash}m{2.45cm}|
>{\raggedright\arraybackslash}m{2.45cm}|
>{\raggedright\arraybackslash}m{2.45cm}|
>{\raggedright\arraybackslash}m{2.45cm}|
>{\raggedright\arraybackslash}m{2.45cm}|
>{\raggedright\arraybackslash}m{2.45cm}|
>{\centering\arraybackslash}m{1.35cm}|
}
\hline
\noalign{\hrule height 1.2pt}
No. & $m$ & $C$ & $g_{1}(x)$ & $g_{2}(x)$ & $g_{3}(x)$
& $g_{4}(x)$ & $g_{5}(x)$ & $g_{6}(x)$ & $g_{7}(x)$
& Remark \\
\noalign{\hrule height 1.2pt}

\nextcodeno
& 5
& $[35,4,24]^H_4$
& $x+1$
& $(x+1)\,\allowbreak(x+\omega)^2$
& $(x+1)\,\allowbreak(x^3+x^2+1)$
& $(x+1)\,\allowbreak
  (x^3+\omega^2x^2+\omega x+\omega^2)$
& $(x+1)\,\allowbreak(x+\omega^2)$
& $(x+1)\,\allowbreak
  (x^3+x^2+\omega^2x+\omega)$
& $(x+1)\,\allowbreak(x+\omega)$
& $d=d_{\mathrm{OpLC}}$ \\
\cline{1-1}\cline{3-11}

\nextcodeno
&
& $[35,5,21]^H_4$
& $1$
& $(x+\omega)^2$
& $x^4+\omega x^3+\omega x^2+x+\omega$
& $(x+\omega)\,\allowbreak
  (x+\omega^2)\,\allowbreak
  (x^2+x+\omega)$
& $x^4+x^3+x^2+\omega$
& $(x+1)\,\allowbreak
  (x+\omega)\,\allowbreak
  (x^2+x+\omega)$
& $(x+1)^2\,\allowbreak(x+\omega)$
& $d=d_{\mathrm{OpLC}}-3$ \\
\noalign{\hrule height 1.2pt}
\noalign{\hrule height 0.5pt}

\nextcodeno
& 7
& $[49,6,30]^H_4$
& $(x+1)\,\allowbreak(x^3+x+1)$
& $(x+1)\,\allowbreak
  (x^5+\omega x^4+x^3+\omega x^2+x+\omega^2)$
& $(x+1)\,\allowbreak
  (x+\omega)\,\allowbreak
  (x^2+x+\omega^2)$
& $(x+1)\,\allowbreak
  (x+\omega)\,\allowbreak
  (x+\omega^2)\,\allowbreak
  (x^2+x+\omega)$
& $(x+1)^2\,\allowbreak
  (x+\omega^2)\,\allowbreak
  (x^2+\omega x+\omega)$
& $(x+1)\,\allowbreak
  (x^2+\omega x+\omega)\,\allowbreak
  (x^3+x^2+x+\omega^2)$
& $x+1$
& $d=d_{\mathrm{BKLC}}-2$ \\
\cline{1-1}\cline{3-11}

\nextcodeno
&
& $[49,7,28]^H_4$
& $1$
& $(x+\omega)^3$
& $(x^3+x^2+x+\omega)\,\allowbreak
  (x^3+x^2+\omega x+\omega^2)$
& $(x+\omega^2)\,\allowbreak
  (x^5+\omega^2x^3+\omega x+1)$
& $x^4+\omega^2x^3+x^2+\omega^2x+\omega$
& $x^4+\omega x^3+x^2+1$
& $(x^2+\omega^2x+1)\,\allowbreak
  (x^2+\omega^2x+\omega^2)$
& $d=d_{\mathrm{OpLC}}-4$ \\
\noalign{\hrule height 1.2pt}
\noalign{\hrule height 0.5pt}

\nextcodeno
& 9
& $[63,4,42]^H_4$
& $(x+1)\,\allowbreak
  (x+\omega^2)\,\allowbreak
  (x^3+\omega)$
& $(x+1)\,\allowbreak
  (x+\omega)^3\,\allowbreak
  (x+\omega^2)\,\allowbreak
  (x^3+\omega)$
& $(x+1)\,\allowbreak
  (x+\omega^2)\,\allowbreak
  (x^2+\omega^2x+1)\,\allowbreak
  (x^3+\omega)$
& $(x+1)\,\allowbreak
  (x+\omega)\,\allowbreak
  (x+\omega^2)\,\allowbreak
  (x^3+\omega)$
& $(x+1)\,\allowbreak
  (x+\omega^2)^2\,\allowbreak
  (x^3+\omega)$
& $(x+1)\,\allowbreak
  (x+\omega^2)\,\allowbreak
  (x^3+\omega)\,\allowbreak
  (x^3+x^2+1)$
& $(x+1)\,\allowbreak
  (x+\omega^2)\,\allowbreak
  (x^2+\omega x+\omega)\,\allowbreak
  (x^3+\omega)$
& $d=d_{\mathrm{OpLC}}-5$ \\
\cline{1-1}\cline{3-11}

\nextcodeno
&
& $[63,5,42]^H_4$
& $(x+1)\,\allowbreak(x^3+\omega^2)$
& $(x+1)\,\allowbreak
  (x+\omega)\,\allowbreak
  (x^3+\omega^2)$
& $(x+1)\,\allowbreak
  (x^3+\omega^2)\,\allowbreak
  (x^4+\omega^2x^3+\omega x^2+\omega)$
& $(x+1)\,\allowbreak
  (x+\omega)^2\,\allowbreak
  (x^3+\omega^2)$
& $(x+1)\,\allowbreak
  (x^3+\omega^2)\,\allowbreak
  (x^4+\omega^2x^3+x^2+\omega^2x+\omega)$
& $(x+1)\,\allowbreak
  (x^2+x+\omega)\,\allowbreak
  (x^2+\omega x+\omega)\,\allowbreak
  (x^3+\omega^2)$
& $(x+1)\,\allowbreak
  (x^2+x+\omega^2)\,\allowbreak
  (x^3+\omega^2)$
& $d=d_{\mathrm{BKLC}}-2$ \\
\cline{1-1}\cline{3-11}

\nextcodeno
&
& $[63,6,40]^H_4$
& $(x+1)\,\allowbreak
  (x+\omega)\,\allowbreak
  (x+\omega^2)$
& $(x+1)\,\allowbreak
  (x+\omega)^4\,\allowbreak
  (x+\omega^2)$
& $(x+1)\,\allowbreak
  (x+\omega)\,\allowbreak
  (x+\omega^2)\,\allowbreak
  (x^4+x^3+x^2+\omega)$
& $(x+1)^3\,\allowbreak
  (x+\omega)\,\allowbreak
  (x+\omega^2)\,\allowbreak
  (x^2+\omega^2x+\omega^2)$
& $(x+1)\,\allowbreak
  (x+\omega)\,\allowbreak
  (x+\omega^2)^2\,\allowbreak
  (x^4+\omega^2x^3+x+\omega)$
& $(x+1)\,\allowbreak
  (x+\omega)\,\allowbreak
  (x+\omega^2)^3$
& $(x+1)\,\allowbreak
  (x+\omega)\,\allowbreak
  (x+\omega^2)^3\,\allowbreak
  (x^2+x+\omega^2)$
& $d=d_{\mathrm{OpLC}}-4$ \\
\cline{1-1}\cline{3-11}

\nextcodeno
&
& $[63,7,38]^H_4$
& $(x+1)\,\allowbreak(x+\omega^2)$
& $(x+1)\,\allowbreak
  (x+\omega)^3\,\allowbreak
  (x+\omega^2)$
& $(x+1)\,\allowbreak
  (x+\omega^2)\,\allowbreak
  (x^3+x^2+x+\omega)^2$
& $(x+1)\,\allowbreak
  (x+\omega)\,\allowbreak
  (x+\omega^2)\,\allowbreak
  (x^5+\omega x^4+x^2+\omega x+1)$
& $(x+1)^3\,\allowbreak
  (x+\omega^2)\,\allowbreak
  (x^4+\omega^2x^3+x^2+\omega^2x+\omega)$
& $(x+1)^3\,\allowbreak
  (x+\omega^2)\,\allowbreak
  (x^4+\omega x^3+x^2+1)$
& $(x+1)\,\allowbreak
  (x+\omega^2)^3\,\allowbreak
  (x^2+x+\omega^2)$
& $d=d_{\mathrm{OpLC}}-5$ \\
\cline{1-1}\cline{3-11}

\nextcodeno
&
& $[63,8,36]^H_4$
& $x+1$
& $(x+1)\,\allowbreak(x+\omega)^6$
& $(x+1)^2\,\allowbreak
  (x^2+\omega x+\omega)\,\allowbreak
  (x^3+x^2+1)$
& $(x+1)\,\allowbreak
  (x+\omega)^5\,\allowbreak
  (x^2+x+\omega)$
& $(x+1)\,\allowbreak
  (x^4+\omega^2x^3+x^2+\omega^2x+\omega)$
& $(x+1)^2\,\allowbreak
  (x^4+\omega x^3+x^2+1)$
& $(x+1)\,\allowbreak(x^2+x+\omega^2)$
& $d=d_{\mathrm{OpLC}}-6$ \\
\cline{1-1}\cline{3-11}

\nextcodeno
&
& $[63,9,35]^H_4$
& $1$
& $(x+\omega)^6$
& $(x+1)\,\allowbreak
  (x^2+\omega x+\omega)\,\allowbreak
  (x^3+x^2+1)$
& $(x+\omega)^5\,\allowbreak
  (x^2+x+\omega)$
& $(x+\omega^2)\,\allowbreak
  (x^2+\omega x+1)\,\allowbreak
  (x^2+\omega x+\omega)\,\allowbreak
  (x^3+\omega x+1)$
& $(x+1)\,\allowbreak
  (x^6+\omega x^5+\omega^2x^4+x^2+x+\omega)$
& $(x+\omega^2)^2\,\allowbreak
  (x^2+x+\omega^2)$
& $d=d_{\mathrm{BKLC}}-3$ \\

\noalign{\hrule height 1.2pt}
\noalign{\hrule height 0.5pt}
\end{tabular}

\caption{Hermitian LCD one-generator quasi-cyclic codes of index $7$.}
\label{tab:appendix-index7}
\end{table}

\end{landscape}
\clearpage

\end{document}